\newtheorem{theorem}{Theorem}
\newtheorem{lemma}{Lemma}
\newtheorem{example}{Example}
\DeclareMathAlphabet{\mathbit}{OML}{cmr}{bx}{it}
\DeclareMathAlphabet{\mathsf}{OT1}{cmss}{m}{n}
\DeclareMathAlphabet{\mathTXf}{OT1}{cmss}{bx}{it}
\DeclareMathOperator{\diag}{diag}
\DeclareMathOperator{\trace}{tr}
\DeclareMathOperator*{\argmin}{argmin}
\DeclareMathOperator*{\argmax}{argmax}
\DeclareMathOperator{\rZF}{rZF}
\DeclareMathOperator{\TX}{TX}
\DeclareMathOperator{\DCSI}{DCSI}
\DeclareMathOperator{\Rate}{R}
\DeclareMathOperator{\CN}{\mathcal{N}_{\mathbb{C}}}
\DeclareMathOperator{\SINR}{SINR}
\newcommand{\bA}{\mathbf{A}}
\newcommand{\bB}{\mathbf{B}}
\newcommand{\bC}{\mathbf{C}}
\newcommand{\bE}{\mathbf{E}}
\newcommand{\bH}{\mathbf{H}}
\newcommand{\bI}{\mathbf{I}}
\newcommand{\bL}{\mathbf{L}}
\newcommand{\bQ}{\mathbf{Q}}
\newcommand{\bR}{\mathbf{R}}
\newcommand{\bT}{\mathbf{T}}
\newcommand{\bU}{\mathbf{U}}
\newcommand{\bV}{\mathbf{V}}
\newcommand{\bX}{\mathbf{X}}
\newcommand{\bb}{\bm{b}}
\newcommand{\be}{\bm{e}}
\newcommand{\bh}{\bm{h}}
\newcommand{\bn}{\bm{n}}
\newcommand{\bq}{\bm{q}}
\newcommand{\bs}{\bm{s}}
\newcommand{\bt}{\bm{t}}
\newcommand{\bz}{\bm{z}}
\newcommand{\LB}{\left(}
\newcommand{\RB}{\right)}
\newcommand{\LSB}{\left[}
\newcommand{\RSB}{\right]}
\newcommand{\I}{\mathbf{I}} 
\newcommand{\norm}[1]{\lVert{#1}\rVert}
\newcommand{\Fro}{{\mathrm{F}}}
\newcommand{\tr}{{\text{tr}}}
\newcommand{\E}{{\mathbb{E}}}
\newcommand{\He}{{{\mathrm{H}}}}
\newcommand{\Te}{{{\mathrm{T}}}}
\newcommand{\xv}{\mathbf{x}}
\newcommand{\yv}{\mathbf{y}}
\newcommand{\zv}{\mathbf{z}}
\newcommand{\qv}{\mathbf{q}}
\newcommand{\wv}{\mathbf{w}}
\theoremstyle{remark}
\newtheorem{remark}{Remark} 
\theoremstyle{assumption}
\newtheorem{assumption}{Assumption} 
\begin{document}
\title{Robust Regularized ZF in Cooperative Broadcast Channel under Distributed CSIT}
\author{\IEEEauthorblockN{Qianrui Li\IEEEauthorrefmark{3}, Paul de Kerret\IEEEauthorrefmark{1}, David Gesbert\IEEEauthorrefmark{1}, and Nicolas Gresset\IEEEauthorrefmark{3}}

\IEEEauthorblockA{\IEEEauthorrefmark{3}Mitsubishi Electric R\&D Centre Europe\\}
\IEEEauthorblockA{\IEEEauthorrefmark{1}
Communication Systems Department, EURECOM\\}

\thanks{Part of this work has been presented at the 2015 IEEE International Symposium on Information Theory (ISIT 2015) and $53$rd Annual Allerton Conference on Communication, Control and Computing (Allerton 2015). David Gesbert and Paul de Kerret are supported by the ERC under the European Union’s Horizon 2020 research and innovation program (Agreement no. 670896). }
}

\maketitle
\begin{abstract}
In this work, we consider the sum rate performance of joint processing coordinated multi-point transmission network (JP-CoMP, a.k.a Network MIMO) in a so-called distributed channel state information (D-CSI) setting. In the D-CSI setting, the various transmitters (TXs) acquire a local, TX-dependent, estimate of the global multi-user channel state matrix obtained via terminal feedback and limited backhauling. The CSI noise across TXs can be independent or correlated, so as to reflect the degree to which TXs can exchange information over the backhaul, hence allowing to model a range of situations bridging fully distributed and fully centralized CSI settings. In this context we aim to study the price of CSI distributiveness in terms of sum rate at finite SNR when compared with conventional centralized scenarios. We consider the family of JP-CoMP precoders known as regularized zero-forcing (RZF). We conduct our study in the large scale antenna regime, as it is currently envisioned to be used in real 5G deployments. It is then possible to obtain accurate approximations on so-called deterministic equivalents of the signal to interference and noise ratios. Guided by the obtained deterministic equivalents, we propose an approach to derive a RZF scheme that is robust to the distributed aspect of the CSI, whereby the key idea lies in the optimization of a TX-dependent power level and regularization factor.  Our analysis confirms the improved robustness of the proposed scheme with respect to CSI inconsistency at different TXs, even with moderate number of antennas and receivers (RXs).
\end{abstract}

\begin{IEEEkeywords}
Multiuser channels, cooperative communication, coordinated multi-point transmission network, random matrix theory, limited feedback, limited backhaul, linear precoding
\end{IEEEkeywords}

\section{Introduction}

Joint processing CoMP, whereby multiple cooperating TXs share the data streams and perform joint precoding \cite{Gesbert2010}, are considered for use in current and next generation wireless networks. Theoretically, with perfect data and CSI sharing, TXs at different locations can be seen as a unique virtual multiple-antenna array serving all RXs in a multiple-antenna broadcast channel (BC) fashion and well known precoding algorithms from the literature can be used \cite{Karakayali2006}. However, in real systems both the feedback through the wireless medium and the information exchange through the backhaul place a burden on overall resources and must be limited.

Joint processing CoMP under limited feedback and imperfect backhaul (or fronthaul for cloud radio access network, a.k.a C-RAN systems) has been investigated in many works. In \cite{Sanderovich2009,Simeone2009}, the capacity limited backhaul is considered and an information theoretic analysis of the system performance for joint processing CoMP is provided. In \cite{zhou2016optimal,park2016multihop,quek2017cloud}, the compress-and-forward schemes, cooperative beamforming and resource allocation for a C-RAN with capacity-limited fronthaul links are considered. In \cite{Wagner2012,Jindal2006,MaddahAli2012}, the effect of imperfect CSIT due to limited feedback and/or delay is investigated in a single TX multiple antennas broadcast channel setting. In \cite{Marsch2008,zhao2013}, precoder designs for the joint processing CoMP with limited backhaul are provided. However, most of these contributions typically assume a \emph{centralized} CSIT setting, i.e., the precoding is done on the basis of a \emph{single} imperfect channel estimate which is commonly known at every TX. 

This assumption of a centralized computing unit is relevant in the so-called C-RAN architecture, yet it is more and more challenged in other forms of networks where a pre-existing optical fiber backhaul is lacking or is considered too expensive in terms of CAPEX. Other emerging deployment scenarios are those with a fully heterogeneous infrastructure where the network's edge is composed of not just fixed macro base stations but also small cell base stations, mobile (possibly flying \cite{mozaffari2015drone}) access points or relays.
In such settings, exchanging CSI over limited and unreliable backhaul is likely to lead to additional quantization noise and latencies. As a result,  the global downlink CSI estimate collected by any TX is unique to that TX, although the CSI noise can exhibit some degree of correlation from TX to TX. In the rest of this paper, we refer to this setting as a Distributed CSI setting, which considers implicitly the possible correlation between the estimates.
In this context we are interested in the design of a distributed precoder whereby each
TX computes the elements of the precoder used for its transmission based solely on its own channel estimate.

From an information theoretic perspective, the study of joint processing CoMP in D-CSI setting raises several intriguing and challenging questions.

First, while the JP-CoMP with perfect user message sharing is akin to the information theoretic MISO broadcast channel, the capacity region of the broadcast channel under a general D-CSI setting is unknown. In \cite{dekerret2012_TIT}, a rate characterization at high SNR is carried out using DoF analysis for the two TXs scenario. This study highlights the severe penalty caused by the lack of a consistent CSI shared by the cooperating TXs from a DoF point of view, when using a conventional precoder. It was also shown that classical RZF \cite{Peel2005} do not restore the DoF. Although a new DoF-restoring decentralized precoding strategy was presented in \cite{dekerret2012_TIT} for the two TXs case, only partial results are known for the case of an arbitrary number of users\cite{dekerret2016_ISIT}. Furthermore, at finite SNR, the problem of designing precoders that optimally tackle the D-CSI setting is fully open. The use of conventional linear precoders that are unaware of the D-CSI structure is expected to yield a significant loss with respect to a centralized (and imperfect) CSI setting. Hence, an important question is how to reduce the losses due to the D-CSI configuration, i.e., how to derive a D-CSI-robust precoding scheme.

In this work, we study the average rate achieved when the number of transmit antennas and the number of receive antennas jointly grow large with a fixed ratio, thus allowing to use efficient tools from the field of random matrix theory (RMT). Although RMT has been applied in many works to the analysis of wireless communications [See~\cite{Hochwald2002,Tulino2007,Wagner2012,Muller2013,Couillet2011} among others], its role in helping to analyze cooperative systems with distributed information has received little attention before.

In this work, our main contribution are threefold:
\begin{itemize}
\item A novel general D-CSI channel model that allows to study distributed CoMP networks ranging from fully distributed to fully centralized is introduced. 
\item A deterministic equivalent of the SINR in D-CSI setting in the limit of a large number of antennas is derived.
\item Building upon this deterministic equivalent, the sum rate maximization regularization coefficient for the RZF precoder and the local optimal power allocation for each TX under a total power constraint can be found. This leads to a robust distributed RZF precoder design for the D-CSI setting. The regularization coefficient can either be optimized individually by each TX or be found by a low complexity heuristic algorithm assuming that a single common regularization coefficient is used at all TX. Simulations show that the low complexity approach approximates well the performance of the per-TX individually optimization.    
\end{itemize}

\textit{Notations:} In the following, boldface lower-case and upper-case characters denote vectors and matrices, respectively. The operator $(.)^{\Te},(.)^{\He},\trace(.),\E(.)$ denote transpose, conjugate transpose, trace and expectation, respectively. The $N\times N$ identity matrix is denoted $\bI_N$. The notation $\LSB\bA\RSB_{i,j}, \LSB\bb\RSB_{i}$ denotes the $(i,j)$th entry of matrix $\bA$  and the $i$th entry of vector $\bb$, respectively. $\diag(.)$ creates a diagonal matrix with given entries in the diagonal.

The notation $x \asymp y$ denotes that $x-y\xrightarrow[K,M_{TX}\rightarrow \infty]{a.s.}0$. The notation $1_{a=b}$ returns $1$ when $a=b$ and $0$ otherwise. The notation $\mathfrak{i}$ denotes the imaginary unit. A random vector $\xv\sim\CN(\bm\mu,\bm\Theta)$ is complex Gaussian distributed with mean vector $\bm\mu$ and covariance matrix $\bm\Theta$. The notation $\triangleq$ is used in a definition of a scalar, vector or matrix.
\section{System Model}\label{se:SM}

\subsection{Transmission Model}
We consider a communication system where $n$~TXs jointly serve $K$~RXs over a joint processing CoMP transmission network. Each TX is equipped with $M_{\TX}$~antennas, while the total number of transmit antennas is denoted by $M= n M_{\TX}$. Every RX is equipped with a single antenna. We assume that the ratio of transmit antennas with respect to the number of users is fixed and given by
\begin{equation}
\beta\triangleq \frac{M}{K}\geq 1.
\end{equation}


The signal $y_k$ received at RX~$k$ reads as
\begin{align}
y_k=\bh_k^{\He}\xv+n_k
\label{eq:SM_2.1}
\end{align}
and the overall receiving signal at all RXs is described as
\begin{align}
\yv&
=
\bH\xv
+
\bn
\label{eq:SM_2}
\end{align}
where $\yv\triangleq\begin{bmatrix}
y_1&\ldots&
y_K
\end{bmatrix}^{\Te}\in \mathbb{C}^{K\times 1}$, $\bH\triangleq \begin{bmatrix}
\bh_1&\ldots&
\bh_K
\end{bmatrix}^{\He}\in\mathbb{C}^{K\times M}$ is the CoMP channel. $\bh_k^{\He}\in\mathbb{C}^{1\times M}$ is the channel from all transmit antennas to RX~$k$. $\xv\in\mathbb{C}^{M\times 1}$ is the transmitted signal and $\bn\triangleq\begin{bmatrix}
n_1&\ldots&n_K\end{bmatrix}^{\Te}\in \mathbb{C}^{K\times 1}$ is the noise at the $K$~RXs. The transmission noise has i.i.d entry $n_k\sim\CN(0,1),\forall k=1,\ldots,K$.

The multi-user transmit signal~$\xv\in \mathbb{C}^{M\times 1}$ is obtained from the symbol vector $\bs\triangleq [s_1,\ldots,s_K]^{\Te} \in  \mathbb{C}^{K\times 1}$:
\begin{equation}
\xv=\bT \bs=\sum_{k=1}^{K}\bt_ks_k
\label{eq:SM_3}
\end{equation}
with $\bT \triangleq \begin{bmatrix}
\bt_1,
\hdots,
\bt_K
\end{bmatrix}\in \mathbb{C}^{M\times K}$ being the \emph{multi-user} precoder, $\bt_k \in \mathbb{C}^{M\times 1}$ being the beamforming vector for RX~$k$. We consider an average sum power constraint
\begin{equation}
\trace\left(\bT\bT^{\He}\right)=P,
\end{equation}
where $P$ is the average total transmit power for all TXs.

In addition, the channel to RX~$k$ is modeled as:
\begin{align}
\bh_k=\sqrt{M}\bm{\Theta}_k^{\frac{1}{2}}\bz_k
\end{align}
where $\bm{\Theta}_k\in\mathbb{C}^{M\times M}$ is the channel correlation matrix of RX$~k$ and $\bz_k$ has i.i.d complex entries of zero mean, variance $\frac{1}{M}$ and eighth order moment of order $O(\frac{1}{M^4})$. The channel correlation matrices $\bm{\Theta}_k,\forall k=1,\ldots,K$ are assumed to be slowly varying compared to the channel coherence time and therefore to be perfectly known by \emph{all} TXs.

With the assumption of Gaussian signaling~$s_k\sim\CN(0,1), \forall k$ and each user decoding with perfect CSIR, the signal-to-interference-plus-noise ratio (SINR) at RX~$k$ is given by \cite{Cover2006}
\begin{equation}
\SINR_k=\frac{\left|\bh_k^{\He}\bt_k\right|^2}{1+\overset{K}{\underset{\ell=1,\ell \neq k}{\sum}}\left|\bh_k^{\He}\bt_{\ell}\right|^2}.
\label{eq:SM_5}
\end{equation}
The ergodic sum rate for the CoMP network is then equal to
\begin{equation}
\Rate_{sum}\triangleq \sum_{k=1}^K\E\LSB \log_2\LB 1+\SINR_k\RB\RSB
\label{eq:SM_4}
\end{equation}
where the expectation is taken over the random channel realizations.
\subsection{D-CSIT Model}\label{ss:PCCSI}
Note that while we assume all TXs are endowed with a perfect copy of the user message packet to be sent on the downlink to the user terminal (e.g. user contents have been pre-routed or pre-cached at the TXs), we instead focus on the limitation of instantaneous CSI acquisition. In the D-CSIT model, each TX receives its own CSI estimate for the CoMP channel. This multi-user estimate received at the TXs is the result of feedback and CSI sharing protocols and is imperfect due to the limited resources available. The actual feedback and exchange mechanism based on which the TXs receive the multi-user channel estimate is left unspecified and arbitrary\cite{Kobayashi2011,Cheng2010}.

After this CSI sharing step, TX~$j$ acquires~$\hat{\bH}^{(j)}\triangleq \LSB\begin{array}{lll}\hat{\bh}_1^{(j)}&\ldots&\hat{\bh}_K^{(j)}\end{array}\RSB^{\He}\in \mathbb{C}^{K\times M}$ which is the multi-user channel estimate and designs its transmit coefficients \emph{without any exchange of information or iterations with the other TXs}. 

Following conventional models in the literature\cite{Jindal2006,Wagner2012,Couillet2011}, the imperfect channel estimate $\hat{\bh}_k^{(j)}$ for RX~$k$ at TX~$j$ is then modeled as
\begin{align}
\hat{\bh}^{(j)}_{k}=\sqrt{M}\bm{\Theta}_k^{\frac{1}{2}}\left(\sqrt{1-(\sigma_k^{(j)})^2}\bz_k+\sigma_k^{(j)}\bq_k^{(j)}\right)=\sqrt{1-(\sigma_k^{(j)})^2}\bh_k+\sigma_k^{(j)}\bm{\delta}_k^{(j)}.
\label{eq:SM_6}
\end{align}
The estimation error $\bm{\delta}_k^{(j)}=\sqrt{M}\bm{\Theta}_k^{\frac{1}{2}}\bq_k^{(j)}\in \mathbb{C}^{M\times 1}$, where $\bq_k^{(j)}$ has i.i.d complex entries of zero mean, variance $\frac{1}{M}$, eighth order moment of order $O(\frac{1}{M^4})$ and are independent of $\bz_k$ and $n_k$. The parameter $\sigma_k^{(j)}\in\LSB 0,1\RSB$ indicates the accuracy of the CSIT relative to the channel to RX~$k$, as seen at TX~$j$. For example, $\sigma_k^{(j)}=0$ correspond to perfect CSIT, whereas $\sigma_k^{(j)}=1$ corresponds to the channel estimate being completely uncorrelated with the true channel.

Further, we assume that the estimation errors at TX~$j$ and TX~$j'$ satisfy
\begin{align}
\bq_k^{(j)}=\rho^{(j,j')}_k\bq_k^{(j')}+\sqrt{1-(\rho^{(j,j')}_k)^2}\be^{(j,j')}_k, \forall j,j',k,
\end{align}
where $\rho^{(j,j')}_k\in\LSB 0,1\RSB$ is the correlation between $\bq_k^{(j)}$ and $\bq_k^{(j')}$. The vector $\be^{(j,j')}_k$ has i.i.d complex entries of zero mean, variance $\frac{1}{M}$, eighth order moment of order $O(\frac{1}{M^4})$ and are independent of $\bq_k^{(j')}$. Hence, the CSI estimation errors satisfy
\begin{align}
\mathbb{E}\LSB \bm{\delta}_k^{(j)}(\bm{\delta}_k^{(j')})^{\He} \RSB = \bm{\Theta}_k^{\frac{1}{2}}\mathbb{E}\LSB \bq_k^{(j)}(\bq_k^{(j')})^{\He}\RSB\bm{\Theta}_k^{\frac{\He}{2}} =\rho^{(j,j')}_k\bm{\Theta}_k.
\label{eq:correlateerror}
\end{align}
Note that $\rho^{(j,j)}_k=1, \forall j,k$.

This D-CSI model which allows for correlation between the estimate errors at different TXs is very general. It is particularly adapted to model \emph{imperfect} CSI backhaul between TXs where delay and/or imperfections are introduced.
\begin{example}
\begin{figure}[tp!]
\centering
\includegraphics[width=1\columnwidth]{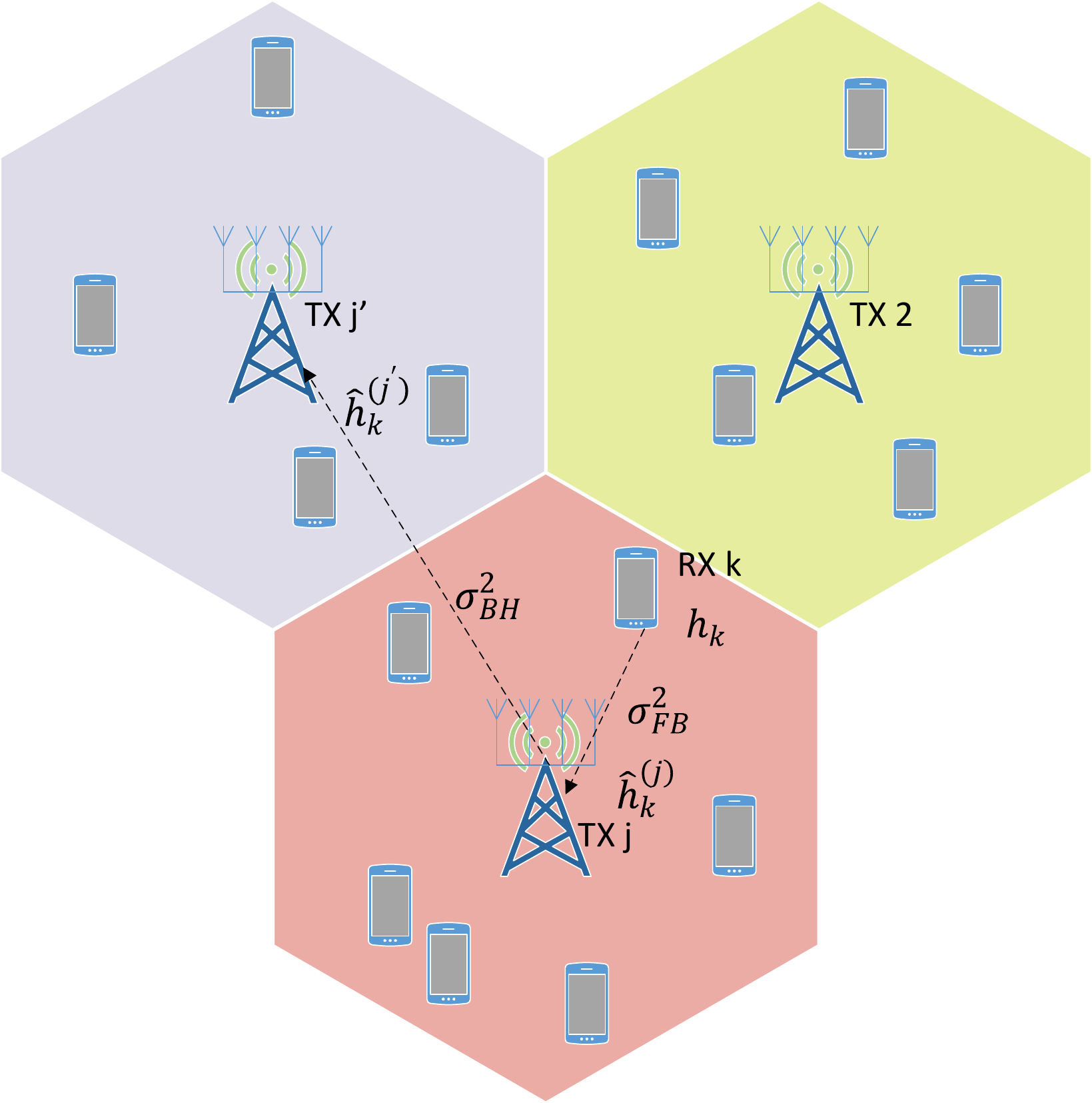}
\caption{CoMP transmission network with limited CSI feedback and limited CSI sharing}
\label{compfig}
\end{figure}
Consider a particular CoMP network setting illustrated in Fig. \ref{compfig}. In a LTE FDD downlink channel estimation scenario, each base station (TX) sends pilots to all the served users (RXs). The RX~$k$ only feedback its downlink CSI to its associated base station, the TX~$j$. The CSIT seen at TX~$j$ for RX~$k$ can then be modeled as
\begin{align*}
\hat{\bh}_k^{(j)}=\sqrt{1-\sigma_{\mathrm{FB}}^2}\bh_k+\sigma_{\mathrm{FB}}\bm\delta_k^{(j)},
\end{align*}
where $\sigma_{\mathrm{FB}}^2\in (0,1)$ parameterizes the feedback quality and $\bm\delta_k^{(j)}\sim \CN(0,1)$ is the channel independent feedback noise.

Following the LTE-architecture, this channel estimate is then shared to the other TXs through backhaul links. During this sharing step, this estimate is further degraded such that the estimate received at TX~$j'$ is written as 
\begin{align*}
\hat{\bh}_k^{(j')}&=\sqrt{1-\sigma_{\mathrm{BH}}^2}\hat{\bh}_k^{(j)}+\sigma_{\mathrm{BH}}\bm\epsilon_k^{(j,j')},\\
\end{align*}
where $\sigma_{\mathrm{BH}}\in (0,1)$ parameterizes the backhaul quality and $\bm\epsilon_k^{(j,j')}$ is the sharing noise independent from $\bh_k,\bm\delta_k^{(j)}$.

After basic algebraic operation, it can be seen that this CSIT configuration is a D-CSIT configuration with the parameters:
\begin{align}\label{calsigmauseFBBH}
\sigma_k^{(j)}&=\sigma_{\mathrm{FB}}\notag\\
\sigma_k^{(j')}&=\sqrt{1-\LB1-\sigma_{\mathrm{BH}}^2\RB\LB1-\sigma_{\mathrm{FB}}^2\RB}\notag\\
\rho_k^{(j,j')}&=\frac{\sigma_{\mathrm{FB}}\sqrt{1-\sigma_{\mathrm{BH}}^2}}{\sqrt{1-\LB1-\sigma_{\mathrm{BH}}^2\RB\LB1-\sigma_{\mathrm{FB}}^2\RB}}
\end{align}\qed
\end{example}

\begin{remark}
The D-CSIT model bridges the gap between the two extreme configuration: centralized CSIT and fully distributed CSIT. Indeed, choosing
\begin{equation}
\sigma^{(j)}_k=\sigma^{(j')}_k,~\rho^{(j,j')}_k=1, \qquad \forall j,j'\in \{1,\dots,n\},~\forall k\in \{1,\dots,K\}
\end{equation}
corresponds to the centralized CSIT configuration \cite{Jindal2006,Wagner2012}, while choosing
\begin{equation}
\rho^{(j,j')}_k=0, \qquad \forall j,j'\in \{1,\dots,n\},j\neq j',~\forall k\in \{1,\dots,K\}
\end{equation}
simplifies to the fully distributed CSIT configuration with uncorrelated estimation errors as previously studied in the literature\cite{dekerret2012_TIT}.\qed
\end{remark}

\subsection{Regularized Zero Forcing with Distributed CSI}\label{se:SM:ZF}
We consider in this work the analysis of \emph{RZF} precoder \cite{Spencer2004,Peel2005}, when faced with CSIT inconsistencies in the large system regime. Hence, the precoder designed at TX~$j$ is assumed to take the form
\begin{align}
\bT_{\rZF}^{(j)}\triangleq \LB (\hat{\bH}^{(j)})^{\He}\hat{\bH}^{(j)}+M\alpha^{(j)} \I_{M} \RB ^{-1} (\hat{\bH}^{(j)})^{\He} \frac{\sqrt{P}}{\sqrt{\Psi^{(j)}}}.
\label{eq:SM_7}
\end{align}
The scalar~$\Psi^{(j)}$ corresponds to the power normalization at TX~$j$. Hence, it holds that
\begin{align}
\Psi^{(j)}&\triangleq\norm{\LB (\hat{\bH}^{(j)})^{\He}\hat{\bH}^{(j)}+M\alpha^{(j)} \I_{M} \RB ^{-1}(\hat{\bH}^{(j)})^{\He}}^2_{\Fro}.
\label{eq:SM_10}
\end{align}
The regularization factor $\alpha^{(j)}>0,\forall j$. We also define
\begin{align}
\bC^{(j)}\triangleq \frac{ (\hat{\bH}^{(j)})^{\He}\hat{\bH}^{(j)}}{M}+\alpha^{(j)}\I_{M}.
\label{eq:SM_8}
\end{align}
Therefore, the precoder at TX~$j$ can be rewritten as
\begin{align}
\bT_{\rZF}^{(j)}=  \frac{1}{M}(\bC^{(j)})^{-1} (\hat{\bH}^{(j)})^{\He} \frac{\sqrt{P}}{\sqrt{\Psi^{(j)}}}.
\label{eq:SM_9}
\end{align}

Let $\bE_j^{\He}\in \mathbb{C}^{M_{\TX}\times M}$ denote the block selection matrix defined as
\begin{equation}
\bE_j^{\He}\triangleq \begin{bmatrix}
\bm{0}_{M_{\TX}\times (j-1)M_{\TX}}&\I_{M_{\TX}}&\bm{0}_{M_{\TX}\times (n-j)M_{\TX}}
\end{bmatrix}.
\label{eq:SM_11}
\end{equation}
Upon concatenation of all TX's precoding matrices, the effective global precoder denoted by $\bT_{\rZF}^{\DCSI}$, is written as
\begin{equation}
\bT_{\rZF}^{\DCSI}
\triangleq\begin{bmatrix}
\mu_1\bE_1^{\He}\bT_{\rZF}^{(1)}\\
\mu_2\bE_2^{\He}\bT_{\rZF}^{(2)}\\
\vdots\\
\mu_n\bE_n^{\He}\bT_{\rZF}^{(n)}
\end{bmatrix},
\label{eq:SM_10}
\end{equation}
where the scalar $\mu_j>0$ is the transmit power scaling at TX~$j$. Assume the transmit power allocated at TX~$j$ reads
\begin{align}
P_{TX_{j}}=\mu_j^2\trace\left(\bE_j\bE_j^{\He}\bT_{\rZF}^{(j)}(\bT_{\rZF}^{(j)})^{\He}\right).
\end{align}
Based on the sum power constraint,
\begin{align}\label{powerconstraint}
\sum_{j=1}^nP_{TX_{j}}=\sum_{j=1}^n\mu_j^2\trace\left(\bE_j\bE_j^{\He}\bT_{\rZF}^{(j)}(\bT_{\rZF}^{(j)})^{\He}\right)=P.
\end{align}

The finite SNR rate analysis under the precoding structure \eqref{eq:SM_10} and the D-CSIT model in \eqref{eq:SM_6} is challenging due to the dependency of each user performance on all channel estimates. Yet, some useful results can be obtained in the large antenna regime as shown below.

\section{Deterministic Equivalent of the SINR}\label{se:main}
In this section, the analysis of the so-called deterministic equivalent of the SINR under the RZF precoding is presented.

In order to derive a deterministic equivalent, we make the following standard technical assumption on the correlation matrices $\bm\Theta_k$ and the Gram matrix $\frac{1}{M}(\hat{\bH}^{(j)})^{\He}\hat{\bH}^{(j)}$ \cite{Wagner2012}.
\begin{assumption}\label{assum:corr}
All correlation matrices $\bm\Theta_k,\forall k=1,\ldots,K$ have uniformly bounded spectral norm on $M$, i.e.,
\begin{align}
\underset{M,K\rightarrow\infty}{\limsup}\sup_{1\leq k\leq K}\parallel\bm\Theta_k\parallel<\infty.
\end{align}
\end{assumption}


\begin{assumption}\label{assum:gramsnorm}
The random matrices $\frac{1}{M}(\hat{\bH}^{(j)})^{\He}\hat{\bH}^{(j)},\forall j=1,\ldots,n$ have uniformly bounded spectral norm on $M$ with probability one, i.e.,
\begin{align}
\limsup_{M,K\rightarrow\infty}\parallel\frac{1}{M}(\hat{\bH}^{(j)})^{\He}\hat{\bH}^{(j)}\parallel<\infty
\end{align}
with probability one.
\end{assumption}

Our approach will be based on the following fundamental result based on the Stieltjes transform in the analysis of wireless networks \cite{Wagner2012,Muller2013}.
\begin{theorem}{\cite{Hachem2010,Muller2013}}
\label{fundamental_theorem}
Let the matrix~$\bU$ be any matrix with bounded spectral norm and the $i$th row $\bh_i^{\He}$ of $\bH$ be $\bh_i^{\He}=\sqrt{M}\bm\Theta_i^{\frac{1}{2}}\bz_i^{\He}$, where the entries of $\bz_i$ are i.i.d of zero mean, variance $\frac{1}{M}$ and have eighth moment of order $O(\frac{1}{M^4})$. Let Assumption \ref{assum:corr} holds true. Consider the resolvent matrix $\bQ\triangleq \LB\frac{\bH^{\He}\bH}{M} +\alpha\I_M\RB^{-1}$ with regularization coefficient $\alpha>0$. Let
\begin{equation}
\bQ_o \triangleq \LB \frac{1}{M}\sum_{k=1}^{K}\frac{\bm{\Theta}_k}{1+m_k}+\alpha\bI_M\RB^{-1}
\label{eq:main_2}
\end{equation}
where $m_k$ satisfies:
\begin{equation}
m_k= \frac{1}{M}\trace\LB \bm{\Theta}_k\LB\frac{1}{M}\sum_{\ell=1}^{K}\frac{\bm{\Theta}_{\ell}}{1+m_{\ell}}+\alpha\bI_M\RB^{-1}\RB.
\label{eq:mk}
\end{equation}
Then,
\begin{equation}
\frac{1}{M}\trace \LB \bU\bQ \RB -\frac{1}{M}\trace \LB \bU\bQ_o \RB    \xrightarrow[K,M\rightarrow \infty]{a.s.}0.
\label{eq:main_3}
\end{equation}
\end{theorem}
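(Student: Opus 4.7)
The statement is a classical deterministic-equivalent result for separable sample-covariance resolvents, and the strategy I would follow is the standard Stieltjes-transform / resolvent-perturbation machinery developed in \cite{Hachem2010,Wagner2012,Muller2013}. The starting point is the resolvent identity
\begin{equation*}
\bQ - \bQ_o \;=\; \bQ_o \LB \bQ_o^{-1}-\bQ^{-1}\RB \bQ \;=\; \bQ_o\LB \frac{1}{M}\sum_{k=1}^{K}\frac{\bm{\Theta}_k}{1+m_k} \;-\; \frac{1}{M}\sum_{k=1}^{K}\bh_k\bh_k^{\He} \RB \bQ ,
\end{equation*}
which, after multiplication by $\bU/M$ and taking the trace, recasts the target $\frac{1}{M}\trace(\bU(\bQ-\bQ_o))$ as a sum over $k$ of differences between a deterministic term involving $m_k$ and a quadratic form in the random vector $\bh_k$.

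To control each summand I would apply the Sherman--Morrison identity
\begin{equation*}
\bQ \bh_k \;=\; \frac{\bQ_{(k)} \bh_k}{1+\tfrac{1}{M}\bh_k^{\He}\bQ_{(k)}\bh_k},
\end{equation*}
where $\bQ_{(k)}$ denotes the resolvent with the $k$th rank-one contribution removed, so that $\bh_k$ is independent of $\bQ_{(k)}$. The trace lemma (quadratic-form concentration under the stated moment conditions on $\bz_k$) then gives $\bh_k^{\He}\bA\bh_k - \trace(\bm{\Theta}_k\bA) \xrightarrow{a.s.} 0$ for deterministic-or-independent $\bA$ of bounded spectral norm, and a rank-one perturbation lemma allows one to replace $\bQ_{(k)}$ by $\bQ$ inside traces up to an $O(1/M)$ error. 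Combining these shows that the empirical quantity $\frac{1}{M}\trace(\bm{\Theta}_k\bQ)$ asymptotically satisfies the same fixed-point system as the deterministic sequence $m_k$ in \eqref{eq:mk}. Throughout, the bounds $\|\bQ\|,\|\bQ_o\|\leq 1/\alpha$ together with Assumptions~\ref{assum:corr}--\ref{assum:gramsnorm} keep every matrix product uniformly controlled, and $\|\bU\|$ bounded keeps the final trace bounded.

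The crux of the argument is then to show uniqueness and stability of the fixed-point \eqref{eq:mk}; I would invoke the standard standard interference-function / contraction argument (see e.g. \cite{Couillet2011}) to conclude $\frac{1}{M}\trace(\bm{\Theta}_k\bQ) - m_k \xrightarrow{a.s.} 0$ uniformly in $k$, from which \eqref{eq:main_3} follows by substitution into the resolvent identity. The step I expect to be hardest is the upgrade from convergence in probability to almost-sure convergence combined with uniformity in $k$: this is precisely what motivates the eighth-moment hypothesis on the entries of $\bz_i$, which via a Markov/Burkholder-type inequality produces summable tail bounds and hence the almost-sure conclusion through Borel--Cantelli and a union bound over the $K = O(M)$ indices. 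A secondary subtlety is that the trace lemma requires independence of $\bA$ from $\bz_k$, which is why passing through $\bQ_{(k)}$ before reinserting $\bQ$ via the rank-one perturbation lemma is essential rather than cosmetic.
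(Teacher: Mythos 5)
The paper does not prove this theorem: it is imported verbatim from \cite{Hachem2010,Muller2013} (see also \cite{Wagner2012}), and the paper merely recalls in its Appendix~A the building blocks (resolvent identity, trace lemma, rank-one perturbation lemma, fixed-point iteration) that those references use. Your sketch reproduces exactly that canonical argument — the expansion $\bQ-\bQ_o=\bQ_o(\bQ_o^{-1}-\bQ^{-1})\bQ$, Sherman--Morrison to decouple $\bh_k$ from $\bQ_{(k)}$, quadratic-form concentration, rank-one replacement, and uniqueness/stability of the fixed point \eqref{eq:mk} — so it is the right proof in the right order. Two small precision points. First, the trace lemma controls the \emph{normalized} difference $\frac{1}{M}\bh_k^{\He}\bA\bh_k-\frac{1}{M}\trace(\bm\Theta_k\bA)$, not the unnormalized one you wrote; with $\bh_k=\sqrt{M}\bm\Theta_k^{1/2}\bz_k$ the unnormalized difference does not vanish. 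Second, your proposed route to almost-sure convergence (fourth-moment tail bounds plus a union bound over the $K=O(M)$ indices) gives only an $O(1/M)$ probability bound, which is not summable; the standard fix in the cited works is to split off $\frac{1}{M}\trace(\bU\bQ)-\E[\frac{1}{M}\trace(\bU\bQ)]$ and control it by a martingale-difference (Burkholder) argument with an $O(M^{-2})$ fourth-moment bound, rather than requiring uniform-in-$k$ concentration. Neither point changes the architecture of the proof.
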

The fixed point~$m_k$ can easily be obtained by an iterative fixed-point algorithm described in \cite{Wagner2012,Couillet2011} and recalled in Appendix~\ref{app:literature} for the sake of completeness.

Adopting the shorthand notation used in \cite{Wagner2012}, we introduce
\begin{equation}\label{definec}
c_{0,k}^{(j)}\triangleq 1-(\sigma^{(j)}_k)^2\!, \quad
c_{1,k}^{(j)}\triangleq (\sigma^{(j)}_k)^2\!, \quad
c_{2,k}^{(j)}\triangleq \sigma^{(j)}_k\sqrt{1\!-\!(\sigma^{(j)}_k)^2}.
\end{equation}
We can further define the term $\bQ_o^{(j)}$ and $m_{k}^{(j)}$ respectively as $\bQ_o$ and $m_{k}$ in Theorem \ref{fundamental_theorem} using instead the local CSI estimate $\hat{\bH}^{(j)}$ and regularization coefficient $\alpha^{(j)}$ at TX~$j$. A deterministic equivalent of the SINR under RZF precoding is therefore provided in the following theorem.
\begin{theorem}
\label{theorem}
Let the Assumptions \ref{assum:corr} and \ref{assum:gramsnorm} hold true, then the SINR of RX~$k$ under RZF precoding satisfies
\begin{align}
\SINR_k-\SINR_k^o\xrightarrow[K,M_{TX}\rightarrow \infty]{a.s.} 0
\end{align}
with $\SINR_k^o$ defined as
\begin{align}
\SINR_k^o\triangleq \frac{P\LB \sum_{j=1}^n  \mu_j\sqrt{\frac{c_{0,k}^{(j)}}{\Gamma^o_{j,j}(\bI_M)}} \frac{ \Phi^{o}_{j,k}}{1+m_{k}^{(j)}}\RB^2}{1+I_k^o}
\label{eq:main_4}
\end{align}
with~$I_k^o\in\mathbb{R}$ given by
\begin{align}
&I_k^o\!\triangleq P\sum_{j=1}^n\sum_{j'=1}^n \frac{\mu_j\mu_{j'}}{\sqrt{\Gamma^o_{j,j}(\bI_M)\Gamma^o_{j',j'}(\bI_M)}}\left(\Gamma^o_{j,j'}(\bE_{j'}\bE_{j'}^{\He}\bm{\Theta}_k\bE_j\bE_j^{\He})-2\Gamma^o_{j,j'}(\bm{\Theta}_k\bE_j\bE_j^{\He})\frac{  c_{0,k}^{(j')}\Phi^{o}_{j',k}}{1+m^{(j')}_{k}}\right.\notag\\
&\left.+\Phi^{o}_{j',k}\Phi^{o}_{j,k}\Gamma^o_{j,j'}(\bm\Theta_k)\frac{c_{0,k}^{(j)}c_{0,k}^{(j')}+\rho_k^{(j,j')}c_{2,k}^{(j)}c_{2,k}^{(j')}}{(1+m_k^{(j)})(1+m_k^{(j')})}\right).
\label{eq:main_5}
\end{align}
where $\Phi^{o}_{j,k}\in\mathbb{R}$ is defined as
\begin{align}
\Phi^{o}_{j,k}=\frac{\trace\LB\bm{\Theta}_k\bE_{j}\bE_{j}^{\He}\bQ_o^{(j)}\RB}{M},
\end{align}
and the function $\Gamma_{j,j'}^o(\bX):\mathbb{C}^{M\times M}\mapsto \mathbb{C}$ is defined in Lemma \ref{lemma1}.
The transmit power scaling $\mu_j$ for TX~$j$ satisfies
\begin{align}\label{powerscalingmu}
\sum_{j=1}^n\mu_j^2\frac{\Gamma_{j,j}^{o}(\bE_j\bE_j^{\He})}{\Gamma_{j,j}^{o}(\bI_M)}=1
 \end{align}
\end{theorem}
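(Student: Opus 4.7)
The plan is to reduce the SINR in \eqref{eq:SM_5} to a ratio of two quadratic forms in the resolvent $(\bC^{(j)})^{-1}$, and then to pass to deterministic equivalents tube by tube using Theorem~\ref{fundamental_theorem} together with the standard rank-one perturbation / Sherman--Morrison machinery. Substituting \eqref{eq:SM_10} and \eqref{eq:SM_9} into $\bh_k^{\He}\bt_k$, and using the D-CSI decomposition $\hat{\bh}_k^{(j)}=\sqrt{c_{0,k}^{(j)}}\bh_k+\sqrt{c_{1,k}^{(j)}}\bm\delta_k^{(j)}$ to express $\bh_k$ in terms of $\hat{\bh}_k^{(j)}$ and $\bm\delta_k^{(j)}$, I first write
\[
\bh_k^{\He}\bt_k=\sum_{j=1}^n\frac{\mu_j\sqrt{P}}{\sqrt{\Psi^{(j)}}}\frac{1}{M}\bh_k^{\He}\bE_j\bE_j^{\He}(\bC^{(j)})^{-1}\hat{\bh}_k^{(j)},
\]
and then apply Sherman--Morrison to isolate $\hat{\bh}_k^{(j)}$ from $\bC^{(j)}$, letting $\bC_{[k]}^{(j)}\triangleq \bC^{(j)}-\frac{1}{M}\hat{\bh}_k^{(j)}(\hat{\bh}_k^{(j)})^{\He}$. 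This produces denominators $1+\tfrac{1}{M}(\hat{\bh}_k^{(j)})^{\He}(\bC_{[k]}^{(j)})^{-1}\hat{\bh}_k^{(j)}$, which by the trace lemma and Theorem~\ref{fundamental_theorem} converge a.s. to $1+m_k^{(j)}$. The remaining numerators are normalized bilinear forms in $\bE_j\bE_j^{\He}$ that converge to $\Phi^o_{j,k}$; together with $\Psi^{(j)}/M\asymp\Gamma^o_{j,j}(\bI_M)$ (again a direct consequence of Theorem~\ref{fundamental_theorem} applied with $\bU=\bI_M$), this delivers the useful signal term in \eqref{eq:main_4}.

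The interference $\sum_{\ell\neq k}|\bh_k^{\He}\bt_\ell|^2$ is the real work. Expanding the square splits it into cross terms indexed by $(j,j')$; for each pair I again invoke Sherman--Morrison, but now twice (once for each resolvent $\bQ^{(j)}, \bQ^{(j')}$) to strip the rank-one contributions of $\hat{\bh}_\ell^{(j)}$ and $\hat{\bh}_\ell^{(j')}$. After summing over $\ell\neq k$, what remains are bilinear trace quantities of the type $\frac{1}{M}\trace(\bA\bQ^{(j)}\bB\bQ^{(j')})$; these are precisely what Lemma~\ref{lemma1} handles via the deterministic equivalent function $\Gamma^o_{j,j'}(\cdot)$. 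The three summands inside the parenthesis in \eqref{eq:main_5} arise as follows: the first from the bare cross resolvent trace; the second from applying $\bh_k=(\hat{\bh}_k^{(j)}-c_{2,k}^{(j)}\bm\delta_k^{(j)}/\sqrt{c_{0,k}^{(j)}})/\sqrt{c_{0,k}^{(j)}}$ on one side and expanding the mixed product; the third is the one that truly exposes distributed CSI, where the only surviving moment comes from $\E[\bm\delta_k^{(j)}(\bm\delta_k^{(j')})^{\He}]=\rho_k^{(j,j')}\bm\Theta_k$ as in \eqref{eq:correlateerror}, producing the factor $c_{0,k}^{(j)}c_{0,k}^{(j')}+\rho_k^{(j,j')}c_{2,k}^{(j)}c_{2,k}^{(j')}$.

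For the power-scaling identity \eqref{powerscalingmu} I start from the per-TX power constraint \eqref{powerconstraint}. Substituting $\bT_{\rZF}^{(j)}(\bT_{\rZF}^{(j)})^{\He}=\tfrac{P}{\Psi^{(j)}M^2}(\bC^{(j)})^{-1}(\hat\bH^{(j)})^{\He}\hat\bH^{(j)}(\bC^{(j)})^{-1}$ gives
\[
P_{TX_j}=\mu_j^2\frac{P}{\Psi^{(j)}}\cdot\frac{1}{M^2}\trace\!\left(\bE_j\bE_j^{\He}(\bC^{(j)})^{-1}(\hat\bH^{(j)})^{\He}\hat\bH^{(j)}(\bC^{(j)})^{-1}\right).
\]
By Lemma~\ref{lemma1} the normalized trace on the right converges a.s.\ to $\Gamma^o_{j,j}(\bE_j\bE_j^{\He})$, and as before $\Psi^{(j)}/M\asymp\Gamma^o_{j,j}(\bI_M)$. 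Dividing both sides of \eqref{powerconstraint} by $P$ and letting $M\to\infty$ yields exactly \eqref{powerscalingmu}.

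The main obstacle is the double-resolvent step: $\bQ^{(j)}$ and $\bQ^{(j')}$ share the common component $\bh_k$ as well as the correlated errors $\bm\delta_k^{(j)},\bm\delta_k^{(j')}$, so they are not independent and the usual one-resolvent version of Theorem~\ref{fundamental_theorem} does not directly apply. One has to carry through a controlled two-resolvent perturbation analysis, using the identity $\bQ^{(j)}-\bQ^{(j')}=\bQ^{(j)}(\hat\bH^{(j')\He}\hat\bH^{(j')}/M-\hat\bH^{(j)\He}\hat\bH^{(j)}/M)\bQ^{(j')}$ together with the correlation structure \eqref{eq:correlateerror}, to derive the joint fixed-point system that defines $\Gamma^o_{j,j'}$ in Lemma~\ref{lemma1}. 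Everything above this lemma is then essentially bookkeeping and concentration of standard quadratic forms under Assumptions~\ref{assum:corr}--\ref{assum:gramsnorm}.
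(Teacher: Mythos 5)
Your proposal follows essentially the same route as the paper's proof in Appendix~\ref{se:proof}: rank-one (Sherman--Morrison) removal plus the trace lemmas for the signal term, the cross-resolvent deterministic equivalent $\Gamma^o_{j,j'}(\cdot)$ of Lemma~\ref{lemma1} for the $(j,j')$ interference and power terms, and the correlated quadratic-form identities (the paper's Lemma~\ref{lemma2}) to extract the $\rho_k^{(j,j')}$ factor. Only two small slips: $\Psi^{(j)}$ itself (not $\Psi^{(j)}/M$) converges to $\Gamma^o_{j,j}(\bI_M)$, and this is a two-resolvent quantity that requires Lemma~\ref{lemma1} with $\hat{\bH}^{(j')}=\hat{\bH}^{(j)}$ rather than Theorem~\ref{fundamental_theorem} alone.
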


\begin{proof}
The proof of Theorem \ref{theorem} is given in Appendix \ref{se:proof}.
\end{proof}

The theorem demonstrates that in the large system setting, the SINR expression for each RX can be derived as a given function of (i) $n,M_{TX},K$ that indicate the system dimensions, (ii) $\sigma_k^{(j)},\rho_k^{(j,j')},\bm\Theta_k$ which reflect the statistics of the channel and of CSI estimates at each TX, and (iii) the precoder regularization coefficients $\alpha^{(j)}$ and power scalings $\mu^{(j)}$.

This result is very general and encompasses several important results from the literature.

\subsection{Regularized ZF Precoding for Centralized CSI Isotropic Channel}\label{ss:CCSI}

Choosing $\sigma^{(j)}_k=\sigma^{(j')}_k=\sigma_k$, $\alpha^{(j)}=\alpha^{(j')}=\alpha$, $\rho^{(j,j')}_k=1,\forall j,j'\in \{1,\dots,n\},k\in \{1,\dots,K\}$, we obtain the centralized CSIT configuration. Further assuming that $\bm\Theta_k=\bI_M$, $m_k^{(j)}$ is obtained in closed form as
\begin{align}\label{closedformmkj}
m_k^{(j)}=m^{o}=\frac{\beta-1-\alpha\beta+\sqrt{(\alpha\beta-\beta+1)^2+4\alpha\beta^2}}{2\alpha\beta}.
\end{align}
In this setting, the total power constraint \eqref{powerscalingmu} simplifies to
\begin{align}
\frac{1}{n}\sum_{j=1}^n\mu_j^2=1
 \end{align}
since
\begin{align}
\Gamma^o_{j,j}(\I_M)&=\frac{(m^{o})^2}{\beta(1+m^{o})^2-(m^{o})^2},\\
\Gamma^o_{j,j}(\bE_j\bE_j^{\He})&=\frac{1}{n}\Gamma^o_{j,j}(\I_M).
\end{align}

Assume $\mu_j=1,\forall j=1,\ldots,n$, the transmit power $p_{TX_j}$ at TX~$j$ denotes
\begin{align}
p_{TX_j}=\mu_j^2P\frac{\Gamma^o_{j,j}(\bE_j\bE_j^{\He})}{\Gamma^o_{j,j}(\I_M)}=\frac{P}{n}
\end{align}
This indicates an equal power allocation per TX.
Since $\bm\Theta_k=\bI_M$ the channel is isotropic, the above setting also indicates that the signal power for RX~$k$ satisfies
\begin{align}
p_k=\frac{P}{K}
\end{align}
which is an equal power per RX.

After simple algebraic manipulations, we can obtain the deterministic equivalent of SINR in \eqref{eq:main_4}
\begin{align}
\SINR^o_k&=\frac{(1-\sigma_k^2)(\beta \LB 1+ m^{o}\RB^2-(m^{o})^2)}{\left(1-\sigma_k^2+(1+m^{o})^2\sigma_k^2+\frac{(1+m^{o})^2}{P}\right)}
\end{align}
This coincides with the results in \cite[Corollary $2$]{Wagner2012}.

\subsection{Regularized ZF Precoding for Fully Distributed CSI Isotropic Channel}

Choosing $\rho^{(j,j')}_k=0,\forall~j,j'\in \{1,\dots,n\},j\neq j', k\in \{1,\dots,K\}$, the fully distributed CSIT configuration with uncorrelated estimation errors is obtained. Let us further assume that the same regularization coefficient is used at each TX, i.e., $\alpha^{(j)}=\alpha^{(j')}=\alpha,\forall~j,j'\in \{1,\dots,n\}$, $\bm\Theta_k=\bI_M$ and $\mu_j=1$ indicating equal per TX power allocation.

The deterministic SINR in \eqref{eq:main_4} then becomes 
\begin{align}
\SINR_k^{o}=\frac{P\left(\frac{1}{n}\sum_{j=1}^n  \sqrt{c_{0,k}^{(j)}}\right)^2 \frac{\beta \LB 1+ m^{o}\RB^2-(m^{o})^2}{(1+m^{o})^2}}{I_k^{o}+1}
\end{align}
with
\begin{align}
I_k^{o}&= P-P\sum_{j=1}^n\sum_{j'=1}^n\frac{\left(\beta \LB 1+ m^{o}\RB^2-(m^{o})^2\right)\Gamma^o_{j,j'}}{n^2(1+m^{o})^2m^{o}}\cdot\left[2c_{0,k}^{(j)}
+m^{o}\LB 2c_{0,k}^{(j)}-c_{0,k}^{(j)}c_{0,k}^{(j')}\RB\right]\\
\Gamma_{j,j'}^o&= \! \frac {\frac{1}{M}\sum_{\ell=1}^K \sqrt{c_{0,\ell}^{(j)}c_{0,\ell}^{(j')}}}{ \frac{(1+m^{o})^2}{(m^{o})^2}- \frac{1}{M}\sum_{\ell=1}^K\!c_{0,\ell}^{(j)}c_{0,\ell}^{(j')}\!}
\end{align}
This result coincides with \cite{dekerret2015_ISIT}.

\subsection{Regularized ZF Precoding for D-CSI Isotropic Channel}

Assume that $\bm\Theta_k=\bI_M,\forall k\in \{1,\ldots,K\}$ and $\mu_j=1,\forall j\in \{1,\ldots,n\}$, indicating equal per TX power allocation. In this specific setting, the terms $m_k^{(j)}$ can be obtained in closed form as
\begin{align}
m_k^{(j)}=m^{(j)}=\frac{\beta-1-\alpha^{(j)}\beta+\sqrt{(\alpha^{(j)}\beta-\beta+1)^2+4\alpha^{(j)}\beta^2}}{2\alpha^{(j)}\beta}.
\end{align}
After simplification, the deterministic SINR in \eqref{eq:main_4} becomes
\begin{align}\label{allertonSINR}
\SINR_k^o =\frac{P\LB\frac{1}{n}\sum_{j=1}^n  \sqrt{\frac{1-(\sigma^{(j)}_k)^2}{\Gamma_{j,j}^o}} \frac{m^{(j)}}{1+m^{(j)}}\RB^2}{1+I_k^o}
\end{align}
with~$I_k^o\in\mathbb{R}$ defined as
\begin{align}
&I_k^o\!= P
-P\sum_{j=1}^n\sum_{j'=1}^n\frac{\Gamma_{j,j'}^o}{\sqrt{\Gamma_{j,j}^o\Gamma_{j',j'}^o}}\left[\frac{2c_{0,k}^{(j)}}{n^2}\frac{m^{(j)}}{1+m^{(j)}}
-\frac{\LB\rho^{(j,j')}_kc_{2,k}^{(j)}c_{2,k}^{(j')}+c_{0,k}^{(j)}c_{0,k}^{(j')}\RB m^{(j)}m^{(j')}}{n^2\LB1\!+\!m^{(j)}\RB\LB1\!+\!m^{(j')}\RB}\right]
\label{eq:main_5}
\end{align}
where $\Gamma^o_{j,j'}\in\mathbb{R}$ is given by
\begin{align}
\Gamma_{j,j'}^o&\!= \! \frac {\frac{1}{M}\sum_{\ell=1}^K \sqrt{c_{0,\ell}^{(j)}c_{0,\ell}^{(j')}}\!+\!\sqrt{c_{1,\ell}^{(j)}c_{1,\ell}^{(j')}}\rho^{(j,j')}_{\ell}}{ \frac{1+m^{(j)}}{m^{(j)}}\frac{1+m^{(j')}}{m^{(j')}}- \frac{\sum_{\ell=1}^K\!\LB\!\sqrt{c_{0,\ell}^{(j)}c_{0,\ell}^{(j')}}\!+\!\sqrt{c_{1,\ell}^{(j)}c_{1,\ell}^{(j')}}\rho^{(j,j')}_{\ell}\!\RB^2\!}{M}\!}
\end{align}
This result coincides with \cite{Li2015_Allerton}.

\section{Applications of the Theorem}
The deterministic equivalent of the SINR expression allows to evaluate the performance of RZF precoding. However, there is an added benefit here, which is the possibility to optimize the transmission parameters (i.e., regularization coefficient) so as to obtain some robustness with respect to the D-CSIT configuration, as it will be discussed in the following.

\subsection{Robust Sum Rate Maximizing Regularization}\label{ss:RSRMR}
If there exists a predefined per TX power constraint such that the average transmit power for each TX~$j$ is given as $P_{TX_{j}}=p_j$, according to Theorem \ref{theorem}, we can find the power scaling parameter for each TX~$j$ as
\begin{align}\label{expformu}
\mu_j=\sqrt{\frac{p_j\Gamma_{j,j}^{o}(\bI_M)}{P\Gamma_{j,j}^{o}(\bE_j\bE_j^{\He})}}.
\end{align}
Substituting \eqref{expformu} into Theorem \ref{theorem}, the ergodic sum rate becomes a function only depending on $\alpha^{(j)},j=1,\ldots,n$.
\subsubsection{Robust Regularized ZF}\label{sss:RRZF}
The regularization coefficients tuple $\bm\alpha=\LSB\alpha^{(1)},\ldots,\alpha^{(n)}\RSB$ which maximizes the system sum rate while being robust to the D-CSIT configuration is given by
\begin{align}\label{optimizeprobalpha}
\bm\alpha^{\star}\triangleq\argmax_{\bm\alpha} \sum_{k=1}^{K}\log\LB1+\SINR_k^o\RB,s.t.~\mu_j=\sqrt{\frac{p_j\Gamma_{j,j}^{o}(\bI_M)}{P\Gamma_{j,j}^{o}(\bE_j\bE_j^{\He})}},\forall j.
\end{align} 

\subsubsection{Robust regularized ZF with equal regularization}\label{sss:RRZFeqreg}
The problem \eqref{optimizeprobalpha} is a non-convex optimization. In order to reduce the complexity, we introduce the following optimization assuming that the regularization coefficients are the same at different TXs.
\begin{align}\label{optimizeprobsamealpha}
\alpha_{same}^{\star}\triangleq\argmax_{
\alpha_{same}} \sum_{k=1}^{K}\log\LB1+\SINR_k^o\RB,s.t.~\mu_j=\sqrt{\frac{p_j\Gamma_{j,j}^{o}(\bI_M)}{P\Gamma_{j,j}^{o}(\bE_j\bE_j^{\He})}},\forall j.
\end{align}
The optimization variable is now a scalar parameter and the global optimal regularization can be easily found using a line search algorithm \cite{wachter2005line}.

\subsubsection{Naive Regularized ZF}\label{sss:NRZF}
We introduced in the following the naive regularization optimization which doesn't take into account the D-CSIT configuration. This is therefore the reference baseline for our improved robust precoding scheme.

When TXs are not aware of the D-CSIT structure, each TX will choose its regularization parameter on the basis of its own CSI quality. This yields a naive (suboptimal) precoding scheme. Specifically, assuming equal power allocation at each TX, each TX~$j$ optimizes its regularization coefficient $\alpha^{(j)}$ based on $\hat{\bH}^{(j)}$ considering as if $\hat{\bH}^{(j)}$ is the centralized CSIT shared among all TXs, i.e.,
\begin{align}\label{NALF}
\alpha_{naive}^{(j)}\triangleq\argmax_{\alpha^{(j)}} \Rate_{sum}\LB\hat{\bH}^{(j)},\ldots,\hat{\bH}^{(j)}\RB.
\end{align}

In the particular case where the CSIT quality is homogeneous across users, i.e., $\sigma_k^{(j)}=\sigma^{(j)}, \forall k\in\{1,\ldots, K\}$, and the channel is isotropic, i.e., $\bm\Theta_k=\bI_M$, the optimal naive regularization coefficient is obtained in closed form \cite{Wagner2012}
\begin{equation}\label{alphaCCSI}
\alpha_{naive}^{(j)}=\frac{1+(\sigma^{(j)})^2P}{1-(\sigma^{(j)})^2}\frac{1}{\beta P}.
\end{equation}

\subsection{Robust Power Optimization}\label{ss:optpower}
If the regularization coefficient at each TX is predefined, according to Theorem \ref{theorem}, we can optimize the power scaling tuple $\bm\mu=\LSB\mu_1,\ldots,\mu_n\RSB$ that maximizes the system sum rate: 
\begin{align}\label{optimizemu}
\begin{array}{ll}
\bm\mu^{\star}=\argmax_{
\bm\mu}\sum_{k=1}^{K}\log\LB1+\SINR_k^o\RB, s.t.~\sum_{j=1}^n\mu_j^2\frac{\Gamma_{j,j}^{o}(\bE_j\bE_j^{\He})}{\Gamma_{j,j}^{o}(\bI_M)}=1
\end{array}.
\end{align}
Problem (\ref{optimizemu}) can then be reformulated as:
\begin{align}\label{optimizeprobalphareform}\tag{P1}
\begin{array}{ll}
\bm\mu^{\star}=\argmin_{\bm\mu}\overset{K}{\underset{k=1}{\prod}}\frac{\frac{1}{P}+\bm\mu^{\Te}\bB_k\bm\mu}{\frac{1}{P}+\bm\mu^{\Te}(\bA_k+\bB_k)\bm\mu}, s.t.~\parallel\bC\bm\mu\parallel_F^2=1,\bm\mu\in\mathbb{R}^{n}
\end{array},
\end{align}
where $\bA_k,\bB_k,\bC,\forall k$ are constant matrices defined as
\begin{align}
\LSB\bA_k\RSB_{j,j'}&\triangleq\sqrt{\frac{c_{0,k}^{(j)}c_{0,k}^{(j')}}{\Gamma^o_{j,j}(\bI_M)\Gamma^o_{j',j'}(\bI_M)}} \frac{ \Phi^{o}_{j,k}\Phi^{o}_{j',k}}{\LB1+m_{k}^{(j)}\RB\LB1+m_{k}^{(j')}\RB}\\
\LSB\bB_k\RSB_{j,j'}&\triangleq\frac{1}{\sqrt{\Gamma^o_{j,j}(\bI_M)\Gamma^o_{j',j'}(\bI_M)}}\left(\Gamma^o_{j,j'}(\bE_{j'}\bE_{j'}^{\He}\bm{\Theta}_k\bE_j\bE_j^{\He})\vphantom{-2\Gamma^o_{j,j'}(\bm{\Theta}_k\bE_j\bE_j^{\He})\frac{  c_{0,k}^{(j')}\Phi^{o}_{j',k}}{1+m^{(j')}_{k}}}\right.\notag\\
-2\Gamma^o_{j,j'}&\left.(\bm{\Theta}_k\bE_j\bE_j^{\He})\frac{  c_{0,k}^{(j')}\Phi^{o}_{j',k}}{1+m^{(j')}_{k}}+\Phi^{o}_{j',k}\Phi^{o}_{j,k}\Gamma^o_{j,j'}(\bm\Theta_k)\frac{c_{0,k}^{(j)}c_{0,k}^{(j')}+\rho_k^{(j,j')}c_{2,k}^{(j)}c_{2,k}^{(j')}}{(1+m_k^{(j)})(1+m_k^{(j')})}\right)\\
\bC&\triangleq\diag\LB\sqrt{\frac{\Gamma_{1,1}^{o}(\bE_1\bE_1^{\He})}{\Gamma_{1,1}^{o}(\bI_M)}},\ldots,\sqrt{\frac{\Gamma_{n,n}^{o}(\bE_n\bE_n^{\He})}{\Gamma_{n,n}^{o}(\bI_M)}}\RB.
\end{align}
Let $u_i(\bm\mu)$ be denoted as
\begin{align}\label{defu}
u_i(\bm\mu)\triangleq\frac{\frac{1}{P}+\bm\mu^{\Te}\bB_i\bm\mu}{\frac{1}{P}+\bm\mu^{\Te}\LB\bA_i+\bB_i\RB\bm\mu}.
\end{align}
In order to solve problem \ref{optimizeprobalphareform}, we first introduce the following lemma: 
\begin{lemma}[Adapted from Lemma 1,\cite{bogale2012weighted}]\label{lemma_parametric}
The optimal point of optimization problem
\begin{align*}
\begin{array}{ll}
\min_{\bm\mu}&\prod_{i=1}^Ku_i(\bm\mu)\\
s.t.& \parallel\bC\bm\mu\parallel_F^2=1,\bm\mu\in\mathbb{R}^{n}
\end{array}
\end{align*}
can be obtained by solving the following parametric problem
\begin{align*}
\begin{array}{ll}
\min_{\bm\mu,\{\lambda_i\}_{i=1}^K} &\LB\sum_{i=1}^K\frac{1}{K}\lambda_iu_i(\bm\mu)\RB^{\frac{1}{K}}\\
s.t.&\begin{array}{l}
\prod_{i=1}^K\lambda_i=1,\lambda_i\geq 0\\
\parallel\bC\bm\mu\parallel_F^2=1,\bm\mu\in\mathbb{R}^{n}
\end{array}
\end{array}
\end{align*}
Moreover, for fixed $\bm\mu$, the optimal $\{\lambda_i\}_{i=1}^K$ of this problem is given by
\begin{align}\label{expression_lambda}
\lambda_i^*=\frac{\LSB\prod_{\ell=1}^Ku_{\ell}(\bm\mu)\RSB^{\frac{1}{K}}}{u_{i}(\bm\mu)},\forall i.
\end{align}
\end{lemma}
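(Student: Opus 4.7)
The plan is to recognize this lemma as a direct application of the weighted AM--GM inequality, with positivity of the quotients $u_i(\bm\mu)$ as the only nontrivial prerequisite. First I would verify that $u_i(\bm\mu) > 0$ on the feasible set, so that AM--GM applies non-trivially: from definition (\ref{defu}), both numerator $1/P + \bm\mu^\Te \bB_i \bm\mu$ and denominator $1/P + \bm\mu^\Te (\bA_i+\bB_i)\bm\mu$ are strictly positive, because $\bA_k$ is the rank-one PSD matrix $\ba_k\ba_k^\Te$ with $[\ba_k]_j = \sqrt{c_{0,k}^{(j)}/\Gamma^o_{j,j}(\bI_M)}\,\Phi^o_{j,k}/(1+m^{(j)}_k)$, and $\bm\mu^\Te\bB_k\bm\mu = I_k^o/P \geq 0$ is the (non-negative) interference quadratic form read off from Theorem \ref{theorem}.

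Next, for any fixed feasible $\bm\mu$ and any $\{\lambda_i \geq 0\}$ with $\prod_i \lambda_i = 1$, applying AM--GM to the $K$ positive quantities $\{\lambda_i u_i(\bm\mu)\}$ gives
\[
\frac{1}{K}\sum_{i=1}^K \lambda_i u_i(\bm\mu) \;\geq\; \left(\prod_{i=1}^K \lambda_i u_i(\bm\mu)\right)^{1/K} = \left(\prod_{i=1}^K u_i(\bm\mu)\right)^{1/K},
\]
with equality iff all $\lambda_i u_i(\bm\mu)$ coincide. Imposing the common-value condition $\lambda_i u_i(\bm\mu) = c$ together with $\prod_i \lambda_i = 1$ forces $c^K = \prod_\ell u_\ell(\bm\mu)$, which yields the closed-form $\lambda_i^\star = [\prod_\ell u_\ell(\bm\mu)]^{1/K}/u_i(\bm\mu)$ in (\ref{expression_lambda}). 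This disposes of the second assertion of the lemma.

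For the first assertion, I would substitute $\lambda_i^\star$ back into the parametric objective: the inner minimum over $\lambda$ of $(\tfrac{1}{K}\sum_i \lambda_i u_i(\bm\mu))^{1/K}$ equals $(\prod_i u_i(\bm\mu))^{1/K^2}$. Since $x \mapsto x^{1/K^2}$ is strictly increasing on $(0,\infty)$, minimizing this transformed objective over $\bm\mu$ subject to $\|\bC\bm\mu\|_F^2 = 1$ yields the same optimizer in $\bm\mu$ as minimizing $\prod_i u_i(\bm\mu)$ directly, which is exactly the equivalence claimed. The main obstacle is essentially bookkeeping---verifying positivity so that AM--GM applies strictly and the equality condition is realizable by an admissible $\lambda$ vector; once that check is in place, the remainder is elementary and the closed form drops out mechanically.
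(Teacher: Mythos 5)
Your proof is correct. Note, however, that the paper itself offers no proof of this lemma: it is stated as ``Adapted from Lemma 1'' of the cited reference, and the accompanying remark only maps the notation ($\tilde{\xi}_s,\nu_s$ to $u_i,\lambda_i$), so there is nothing in the paper to compare against step by step. Your AM--GM derivation is the standard argument behind such parametric reformulations and fills that gap in a self-contained way: the weighted AM--GM inequality $\frac{1}{K}\sum_i \lambda_i u_i(\bm\mu) \geq \big(\prod_i u_i(\bm\mu)\big)^{1/K}$ under $\prod_i\lambda_i=1$, the equality condition $\lambda_i u_i(\bm\mu)=c$ yielding exactly \eqref{expression_lambda}, and the monotonicity of $x\mapsto x^{1/K^2}$ giving the equivalence of the two minimizations over $\bm\mu$. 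Your preliminary positivity check is the right thing to insist on and is handled adequately: $\bA_k$ is the rank-one PSD matrix built from the signal coefficients, and $\bm\mu^{\Te}\bB_k\bm\mu=I_k^o/P$ is the deterministic equivalent of a nonnegative interference power (hence nonnegative as an a.s.\ limit of nonnegative quantities), so both the numerator and denominator of $u_i(\bm\mu)$ are bounded below by $1/P>0$ and the $\lambda_i^{*}$ are well defined and feasible. The only cosmetic caveat is that the lemma as stated writes the parametric objective as $\LB\sum_i\frac{1}{K}\lambda_i u_i(\bm\mu)\RB^{1/K}$ while the paper's subsequent problem \ref{CCPlamda} drops the outer root and the $1/K$; your observation that these transformations are strictly increasing already covers why this makes no difference to the optimizer.
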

\begin{remark}
The above lemma is exactly Lemma 1 presented in \cite{bogale2012weighted} with $\tilde{\xi}_s,\nu_s$ replaced by $u_{i},\lambda_i$ in order have consistent notation. The expression for $\lambda_i^*$ with fixed $\bm\mu$ is reminiscent of the expression for $\nu_s^*$ with fixed $\bb_s$.\qed
\end{remark}
According to Lemma \ref{lemma_parametric}, with some simplifications, problem \ref{optimizeprobalphareform} can be solved by the following parametric problem
\begin{align}\label{CCPlamda}\tag{P2}
\begin{array}{ll}
\underset{\bm\mu,\{\lambda_i\}_{i=1}^K}{\min}&\overset{K}{\underset{k=1}{\sum}}\lambda_k\frac{\frac{1}{P}+\bm\mu^{\Te}\bB_k\bm\mu}{\frac{1}{P}+\bm\mu^{\Te}\LB\bA_k+\bB_k\RB\bm\mu}\\
s.t. & \begin{array}{l} \prod_{i=1}^K\lambda_i=1\\ \parallel\bC\bm\mu\parallel_F^2=1,\bm\mu\in\mathbb{R}^{n\times1}.\end{array}
\end{array}
\end{align}
We hereby introduce an iterative procedure to calculate the local optimal solution for problem \ref{CCPlamda}.
\begin{algorithm}[htb]
\caption{Iterative algorithm for problem \ref{optimizeprobalphareform}}
\begin{algorithmic}[1]\label{iterativeCCP}
\STATE{Initialize $\bm\mu^{[0]}$}
\STATE{$t=0$}
\WHILE{not converge}
\STATE{$\lambda_i^{[t+1]}=\LSB\prod_{\ell=1}^K\frac{\frac{1}{P}+(\bm\mu^{[t]})^{\Te}\bB_{\ell}\bm\mu^{[t]}}{\frac{1}{P}+(\bm\mu^{[t]})^{\Te}\LB\bA_{\ell}+\bB_{\ell}\RB\bm\mu^{[t]}}\RSB^{\frac{1}{K}}\cdot\frac{\frac{1}{P}+(\bm\mu^{[t]})^{\Te}\LB\bA_i+\bB_i\RB\bm\mu^{[t]}}{\frac{1}{P}+(\bm\mu^{[t]})^{\Te}\bB_i\bm\mu^{[t]}},\qquad \forall i=1,\ldots,K$}
\STATE{$\bm\mu^{[t+1]}=\underset{\xv}{\argmin}\sum_{k=1}^K\lambda_k^{[t+1]}\cdot\frac{\frac{1}{P}+\xv^{\Te}\bB_k\xv}{\frac{1}{P}+\xv^{\Te}(\bA_k+\bB_k)\xv}, \textrm{s.t.}\parallel \bC\xv\parallel^2\leq1$}
\STATE{$t=t+1$}
\ENDWHILE
\end{algorithmic}
\end{algorithm}

The iterative optimization step in Algorithm \ref{iterativeCCP} is equivalent to a maximization for the sum of ratios of two convex functions over a convex set. It can be solved for example by a branch and bound algorithm described in \cite{shen2013maximizing}.

\begin{theorem}
\label{iterationmethod}
Algorithm \ref{iterativeCCP} converges to a local optimum of the optimization problem \ref{CCPlamda}.
\end{theorem}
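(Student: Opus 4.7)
The plan is to view Algorithm~\ref{iterativeCCP} as a two-block Gauss--Seidel / block coordinate descent on the parametric problem \ref{CCPlamda}, with the blocks being $\{\lambda_i\}_{i=1}^K$ and $\bm\mu$. Denote the objective of \ref{CCPlamda} by
\begin{equation*}
f(\bm\mu,\{\lambda_i\})\triangleq \sum_{k=1}^K \lambda_k u_k(\bm\mu),
\end{equation*}
where $u_k$ is given in \eqref{defu}, and let $\mathcal{F}_{\lambda}=\{\lambda : \prod_i \lambda_i = 1,\lambda_i \geq 0\}$, $\mathcal{F}_{\mu}=\{\bm\mu\in\mathbb{R}^n : \|\bC\bm\mu\|_F^2=1\}$ be the feasible sets. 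The first step is to show monotone decrease: Step~4 is exactly the closed-form expression \eqref{expression_lambda} of Lemma~\ref{lemma_parametric}, so $\lambda^{[t+1]} = \arg\min_{\lambda\in\mathcal{F}_{\lambda}} f(\bm\mu^{[t]},\lambda)$, which yields $f(\bm\mu^{[t]},\lambda^{[t+1]})\le f(\bm\mu^{[t]},\lambda^{[t]})$; similarly, Step~5 produces $\bm\mu^{[t+1]} = \arg\min_{\bm\mu\in\mathcal{F}_{\mu}} f(\bm\mu,\lambda^{[t+1]})$ (globally, via the branch-and-bound procedure of \cite{shen2013maximizing}), hence $f(\bm\mu^{[t+1]},\lambda^{[t+1]})\le f(\bm\mu^{[t]},\lambda^{[t+1]})$.

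Chaining the two inequalities gives a non-increasing sequence $\{f(\bm\mu^{[t]},\lambda^{[t]})\}_t$. Since each $u_k(\bm\mu)\in[0,1]$ (the numerator is dominated by the denominator because $\bA_k$ is positive semidefinite by construction, as $\bA_k$ is an outer product of a real vector by \eqref{eq:main_4}--\eqref{eq:main_5}), and $\lambda_i\ge 0$ with $\prod_i\lambda_i=1$, the objective is bounded below by $0$. Therefore the sequence converges to some limit value $f^{\star}\ge 0$. Furthermore, $\mathcal{F}_{\mu}$ is compact and, since the optimal $\lambda$ for a given $\bm\mu$ lies on the compact set $\{\lambda\in\mathcal{F}_{\lambda}: \lambda_i = (\prod_\ell u_\ell)^{1/K}/u_i\}$, the iterates $(\bm\mu^{[t]},\lambda^{[t]})$ stay in a compact set, so by Bolzano--Weierstrass there exists a convergent subsequence with limit $(\bm\mu^{\star},\lambda^{\star})$.

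It remains to identify $(\bm\mu^{\star},\lambda^{\star})$ as a local optimum. By continuity of $f$ and of the two block-minimization maps, the limit point satisfies the fixed-point relations
\begin{align*}
\lambda^{\star} &= \arg\min_{\lambda\in\mathcal{F}_{\lambda}} f(\bm\mu^{\star},\lambda), \\
\bm\mu^{\star} &\text{ is a stationary point of } f(\cdot,\lambda^{\star}) \text{ on } \mathcal{F}_{\mu}.
\end{align*}
Plugging the first relation (which is \eqref{expression_lambda}) back into $f$ collapses it to $K\,[\prod_k u_k(\bm\mu^{\star})]^{1/K}$, so stationarity in $\bm\mu$ at $\bm\mu^{\star}$ is equivalent, via the monotone transformation $x\mapsto x^K/K^K$, to stationarity of $\prod_k u_k(\bm\mu)$ on $\mathcal{F}_{\mu}$, i.e., of the original objective of \ref{optimizeprobalphareform}. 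By the equivalence of local optima between \ref{optimizeprobalphareform} and \ref{CCPlamda} established in Lemma~\ref{lemma_parametric}, $(\bm\mu^{\star},\lambda^{\star})$ is therefore a local optimum of \ref{CCPlamda}.

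The main obstacle is the $\bm\mu$-step: the inner problem is non-convex (a sum of ratios of convex quadratics), so the monotone-decrease argument fundamentally relies on invoking the global solver of \cite{shen2013maximizing} to guarantee Step~5 returns a true minimizer rather than merely a descent point. A secondary subtlety is that standard BCD convergence results require uniqueness or strict convexity in each block to conclude stationarity at every accumulation point; here this has to be handled by exploiting the explicit closed form \eqref{expression_lambda} of the $\lambda$-update, which makes the argument essentially a single-block analysis in $\bm\mu$ after elimination of $\lambda$.
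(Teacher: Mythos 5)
Your proposal is correct and follows essentially the same route as the paper: both establish that the objective is non-increasing because each block update is an exact minimization (the paper's AM--GM computation is precisely your observation that the $\lambda$-update \eqref{expression_lambda} is the exact block minimizer, which collapses $f(\bm\mu^{[t]},\lambda^{[t+1]})$ to $K\bigl[\prod_k u_k(\bm\mu^{[t]})\bigr]^{1/K}$), and then invoke boundedness below and monotone convergence. If anything, you are more careful than the paper on the final step of passing from convergence of the objective values to stationarity of an accumulation point, which the paper asserts without argument.
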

\begin{proof}
The proof of Theorem \ref{iterationmethod} is given in Appendix \ref{se:proof2}.
\end{proof}
Therefore, we can obtain a local optimal power allocation such that the system sum rate is maximized under the D-CSI configuration.

\subsection{Robust Joint Optimization of Regularization and Power}\label{ss:jointopt}
In subsection \ref{ss:RSRMR} and \ref{ss:optpower}, we tackle the problem of finding the regularization coefficient (power scaling factor) which maximizes the system sum rate while the power scaling factor (regularization coefficient) is fixed, respectively. Indeed in the D-CSIT configuration, the regularization tuple $\bm\alpha$ and the power scaling tuple $\bm\mu$ can be jointly optimized according to a predefined power constraint. However, since the joint optimization for $(\bm\alpha,\bm\mu)$ is a complicated non-convex problem, we then consider an alternating optimization approach which iterates between the optimization of $\bm\alpha$ and $\bm\mu$ described in subsection \ref{ss:RSRMR} and \ref{ss:optpower}. A local optimal point can be reached while applying the alternating optimization.

In this subsection, we mainly consider two catogories of joint optimization problems described in the sequel.
 
\subsubsection{Robust Joint Optimization}
\begin{align}\label{RJO}
\LB\bm\alpha^{\star},\bm\mu^{\star}\RB=\argmax_{
\bm\alpha,\bm\mu} \sum_{k=1}^{K}\log\LB1+\SINR_k^o\RB, s.t.\sum_{j=1}^n\mu_j^2\frac{\Gamma_{j,j}^{o}(\bE_j\bE_j^{\He})}{\Gamma_{j,j}^{o}(\bI_M)}=1.
\end{align}
This corresponds to the optimal solution where both parameters are jointly optimized.
\subsubsection{Robust Joint Optimization with equal regularization}

\begin{align}\label{RJOEQR}
\LB\alpha^{\star}_{same},\bm\mu^{\star}\RB=\argmax_{
\alpha_{same},\bm\mu} \sum_{k=1}^{K}\log\LB1+\SINR_k^o\RB, s.t.\sum_{j=1}^n\mu_j^2\frac{\Gamma_{j,j}^{o}(\bE_j\bE_j^{\He})}{\Gamma_{j,j}^{o}(\bI_M)}=1.
\end{align}
This corresponds to a jointly optimization for regularization and power scaling, assuming that the regularization coefficient at all TXs are the same.
\section{Simulation Results}\label{se:simulations}
In the following, we provide simulations results to evaluate the accuracy of the deterministic expressions provided and to gain insights into the system design. We also simulate the sum rate performance of the optimal regularization coefficients and power allocation which are robust to the D-CSIT setting.
 
For the sake of conciseness, the following simulations consider an isotropic channel setting listed in Table \ref{tableiso}. Similar results can be obtained with cellular setting. 

\begin{table}[h]
\begin{center}
\begin{tabular}{|c|c|c|c|c|c|c||c|c|c|c|}
    \hline
    $M$  &  $K$ & $n$ & $\beta$ & $\bm\Theta_k$ & $P$ & $\rho_k^{(j,j')}$ & & $(\sigma_k^{(1)})^2$ & $(\sigma_k^{(2)})^2$ & $(\sigma_k^{(3)})^2$\\ \hline
    $30$ & $30$ & $3$ & $1$ & $\bI_M$ & $20$dB & \begin{tabular}{cc}fully distributed CSIT & 0 \\D-CSIT & 0.81\\centralized CSIT & 1\end{tabular} &\begin{tabular}{c} asymmetric\\symmetric\end{tabular} &\begin{tabular}{c} 0.01\\0.1\end{tabular} &\begin{tabular}{c} 0.16\\0.1\end{tabular} & \begin{tabular}{c} 0.49\\0.1\end{tabular} \\ \hline
  \end{tabular}
  \caption{Simulation parameters for the isotropic channel setting.}\label{tableiso}
\end{center}
\end{table}

In isotropic channel setting, by increasing the value of $\rho_k^{(j,j')}$ from $0$ to $1$, the CSIT structure for the system gradually changes from fully distributed CSIT to centralized CSIT. For the CSIT discrepancy at different TXs, we consider two cases in the isotropic channel setting: the asymmetric setting where the CSIT accuracy at different TXs are different and the symmetric setting where the CSIT accuracy at different TXs are the same.

In the following simulations of robust regularization and power optimization, we compare the sum rate performance of following algorithms:
\begin{itemize}
\item $\LB\bm\alpha_{naive}, \bm\mu_{eq}\RB$: A naive algorithm to obtain the regularization coefficients without considering the D-CSIT configuration, equal power is allocated at each TX (See \eqref{NALF}).
\item $\LB\alpha_{same}^{\star}, \bm\mu_{eq}\RB$: A robust optimization of regularization imposing that all TXs have the same regularization coefficient, equal power allocation is assumed at each TX (See \eqref{optimizeprobsamealpha}).
\item $\LB\bm\alpha^{\star}, \bm\mu_{eq}\RB$: A robust optimization of regularization with equal power allocation at each TX (See \eqref{optimizeprobalpha}).
\item $\LB\alpha_{same}^{\star}, \bm\mu^{\star}\RB$: A robust joint optimization of regularization and power at each TX under D-CSIT scenario, with the additional constraint that all TXs have the same regularization coefficient is imposed (See \eqref{RJOEQR}).
\item $\LB\bm\alpha^{\star}, \bm\mu^{\star}\RB$: A robust joint optimization of regularization and power at each TX under D-CSIT scenario (See \eqref{RJO}).
\end{itemize}

\subsubsection{Monte-Carlo Simulations of Theorem~\ref{theorem}}

We verify using Monte-Carlo (MC) simulations the accuracy of the asymptotic expression derived in Theorem~\ref{theorem}.

Fig.~\ref{figconvergence} depicts the absolute error of the deterministic equivalent $R_{sum}^0$ compared to the ergodic sum rate $R_{sum}$ as a function of the number of users~$K$.  The ergodic sum rate is averaged over $1000$ independent channel realizations. For ease of illustration, we choose the symmetric CSIT configuration and an equal available power per TX. Furthermore, the regularization coefficient at each TX~$j$ is chosen as $\alpha^{(j)}=\frac{1}{\beta P}$.

\begin{figure}[htp]
\centering
\includegraphics[width=1\columnwidth]{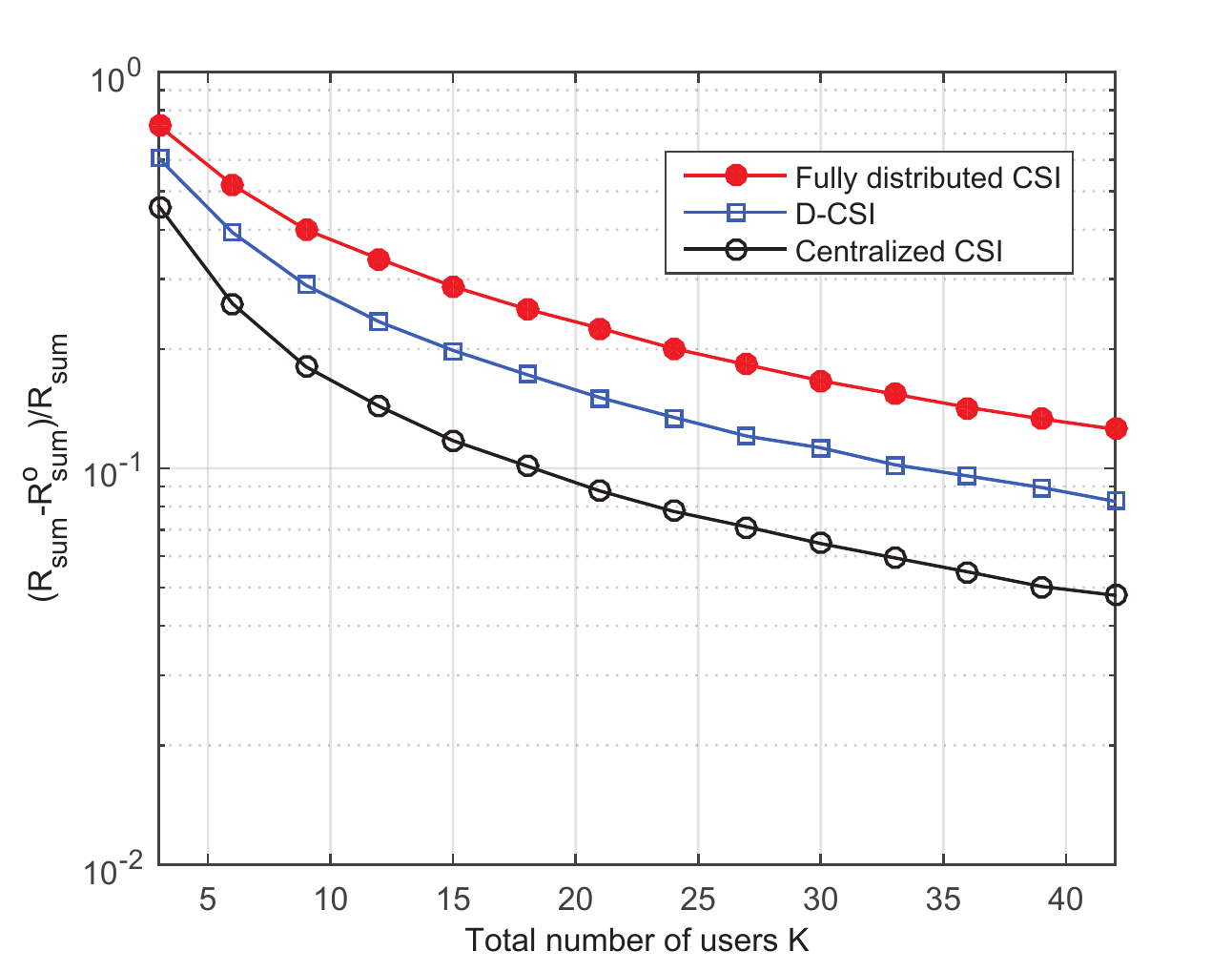}
\caption{Relative deviation between the deterministic equivalent and the Monte-Carlo simulations as a function of the number of users~$K$.}
\label{figconvergence}
\end{figure}
It can be seen that the deterministic equivalent converges to the expected sum rate obtained using Monte-Carlo simulations as the system becomes large. It also reveals that the rate of convergence is faster when the CSIT configuration becomes more centralized (i.e., when the CSIT noise becomes more correlated).
\FloatBarrier

\subsubsection{Cost of CSIT Distributiveness}

As is mention in Section \ref{ss:PCCSI}, the CSI estimate noise correlation parameter $\rho_k^{(j,j')}$ reflects the distributiveness of this CoMP network. Let us consider the symmetric accuracy setting, let the CSI estimate noise correlation be $\rho_k^{(j,j')}=\rho,\forall k,\forall j\neq j'$, we plot the ergodic sum rate when the CSI estimate noise correlation $\rho$ varies from $0$ to $1$, namely, when the CSI structure varies from fully distributed CSI to D-CSI and finally becomes centralized CSI.

\begin{figure}[htp]
\centering
\includegraphics[width=1\columnwidth]{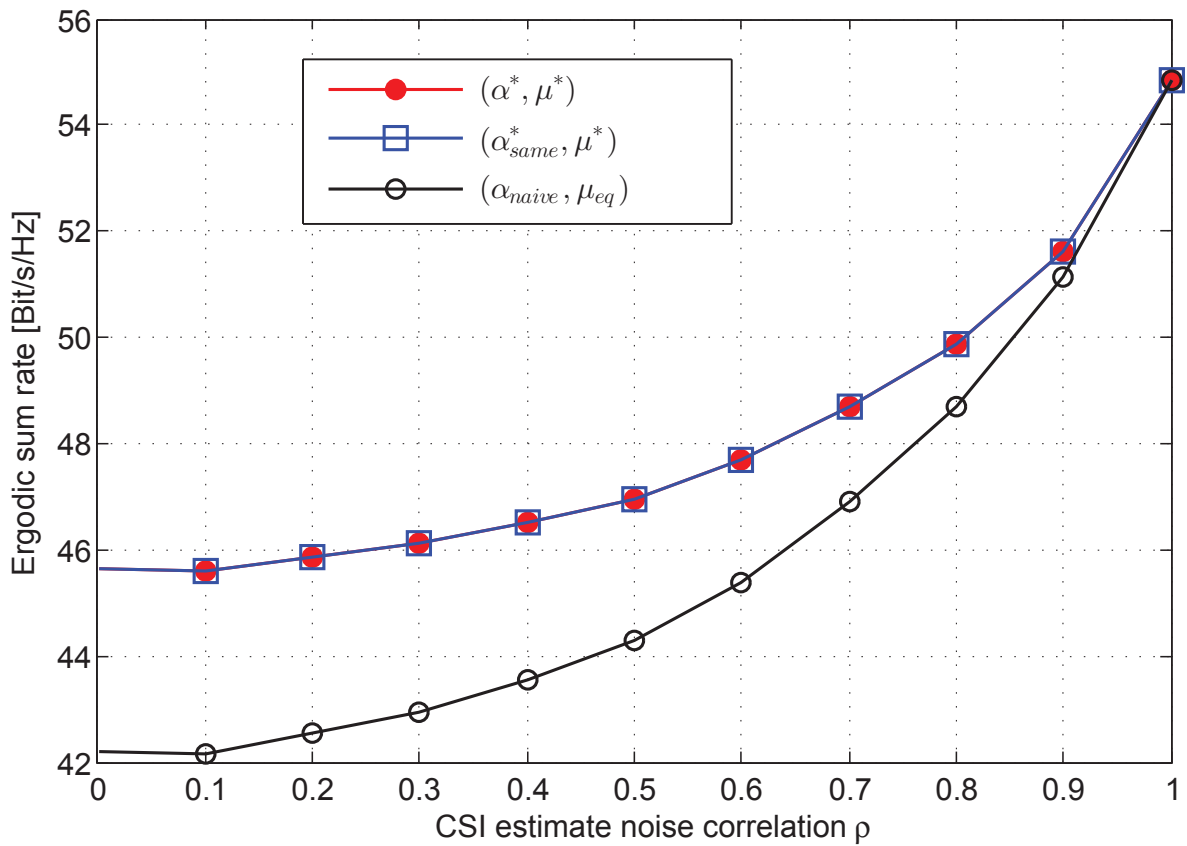}
\caption{Ergodic sum rate as a function of estimate noise correlation $\rho$ which indicates the distributiveness for the CSIT, RZF precoding is implemented.}
\label{figsrvsrho}
\end{figure}

Fig. \ref{figsrvsrho} reveals that the proposed algorithms outperforms the naive one in the D-CSI scenarios. We can also verify that the D-CSI structure introduces a non-vanishing performance degradation compared to the centralized CSI case. We can also observe that the sum rate for $(\bm\alpha^{\star},\bm\mu^{\star})$ and $(\alpha^{\star}_{same},\bm\mu^{\star})$ are very close to each other.

\FloatBarrier

\subsubsection{Joint Optimization of Regularization and Power for Isotropic Channel}

Let us consider the D-CSIT configuration with asymmetric CSIT accuracy. We then plot the ergodic sum rate as a function of the total transmit power $P$ varies from $0$ dB to $30$dB.
\begin{figure}[htp]
\centering
\includegraphics[width=1\columnwidth]{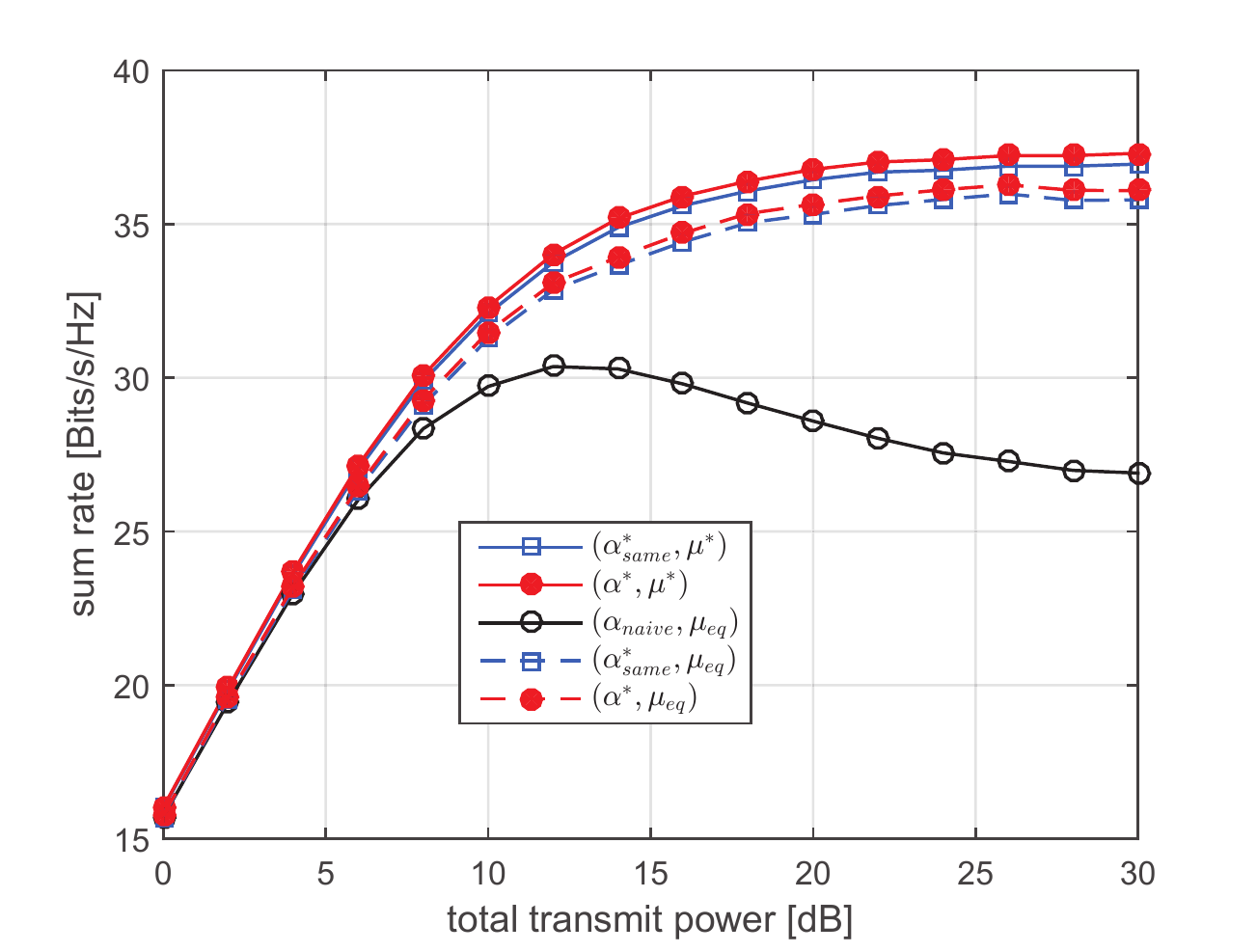}
\caption{Ergodic sum rate as a function of total transmit power, comparison between different transmission algorithms, RZF precoding is implemented.}
\label{jointoptimizepalpha}
\end{figure}

In Fig. \ref{jointoptimizepalpha}, the performance of different transmission algorithms are compared. We can clearly observe the improved robustness and the large performance increase for the proposed algorithm. In this isotropic channel setting, equal power allocation is not a bad strategy since joint optimization $\LB\bm\alpha^{\star}, \bm\mu^{\star}\RB$ only brings a $3\%$ sum rate increase compared to $\LB\bm\alpha^{\star}, \bm\mu_{eq}\RB$. Intriguingly, even if the CSIT accuracy is asymmetric at different TXs, simulation reveals that there is only a negligible performance degradation when imposing identical regularization coefficient at different TXs for isotropic channel setting.
%
%

\section{Conclusion}
In this work, we have studied regularized ZF joint precoding in a distributed CSI configuration. We extend the conventional centralized CSI to distributed CSI scenario by allowing the CSI errors at the different TXs to be arbitrarily correlated. Using RMT tools, an analytical expression is derived to approximate the average rate per user in the large system limit. This deterministic equivalent expression is then used to optimize the regularization coefficients as well as the power allocation at the different TXs in order to reduce the negative impact of the D-CSI configuration.

\FloatBarrier

\appendices

\section{Classical Random Matrix Theory Lemmas} \label{app:literature}

\begin{lemma}[Adapted from \cite{Wagner2012,Couillet2011}]
Let $\alpha^{(j)}>0,j=1,\ldots,n$ and $m_k^{(j)[t]},t \geq 0$ be the sequence defined as
\begin{equation}
\begin{cases}
m_k^{(j)[0]}=\frac{1}{\alpha^{(j)}}& \forall k=1,\ldots,K\\
m_k^{(j)[t]}= \frac{1}{M}\trace\LB \bm{\Theta}_k\LB\frac{1}{M}\sum_{\ell=1}^{K}\frac{\bm{\Theta}_{\ell}}{1+m_{\ell}^{(j)[t-1]}}+\alpha^{(j)}\bI_M\RB^{-1}\RB &\textrm{for $t\geq 1$}
\label{eq:fixed_point}
\end{cases}.
\end{equation}
Then $m_k^{(j)[t]}\xrightarrow{t\rightarrow\infty}m_k^{(j)}$, with $m_k^{(j)}$ solved by constructing an iterative algorithm of \eqref{eq:fixed_point}.
\end{lemma}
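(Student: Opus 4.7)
The plan is to invoke the framework of standard interference functions, introduced by Yates, which is the canonical tool for establishing convergence of coupled fixed-point iterations of the type arising from Stieltjes transforms of correlated Gram matrices. Since the TX index $j$ plays no role in the argument, I would drop it throughout and write $\alpha = \alpha^{(j)}$ and $\bm{m}^{[t]} = (m_1^{[t]}, \ldots, m_K^{[t]})^{\Te}$. I would then define the mapping $T: (0,\infty)^K \to (0,\infty)^K$ componentwise by
\begin{equation*}
T_k(\bm{m}) \triangleq \frac{1}{M}\trace \LB \bm{\Theta}_k \LB\frac{1}{M}\sum_{\ell=1}^K \frac{\bm{\Theta}_\ell}{1+m_\ell} + \alpha\bI_M\RB^{-1}\RB,
\end{equation*}
so that the recursion in the lemma reads $\bm{m}^{[t]} = T(\bm{m}^{[t-1]})$ and any limit is, by continuity, a solution of \eqref{eq:mk}.

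The core of the argument is then to verify that $T$ is a standard interference function in the sense of Yates, i.e.\ that it satisfies: (i) \emph{positivity} $T_k(\bm{m}) > 0$, which is immediate since the bracketed matrix is strictly positive definite thanks to the term $\alpha\bI_M$ while $\bm{\Theta}_k \succeq \bm{0}$; (ii) \emph{monotonicity}, $\bm{m} \geq \bm{m}'$ componentwise implies $T_k(\bm{m}) \geq T_k(\bm{m}')$, which follows from the Löwner order since increasing $m_\ell$ decreases $\bm{\Theta}_\ell/(1+m_\ell)$, hence decreases the bracketed matrix, hence increases its inverse, and the trace against the positive semidefinite $\bm{\Theta}_k$ inherits the inequality; (iii) \emph{scalability} $c T_k(\bm{m}) > T_k(c\bm{m})$ for every $c > 1$. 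For (iii) I would rewrite
\begin{equation*}
c T_k(\bm{m}) = \frac{1}{M}\trace \LB \bm{\Theta}_k \LB\frac{1}{Mc}\sum_{\ell=1}^K \frac{\bm{\Theta}_\ell}{1+m_\ell} + \frac{\alpha}{c}\bI_M\RB^{-1}\RB
\end{equation*}
and compare against $T_k(c\bm{m})$ through the strict Löwner inequality
\begin{equation*}
\frac{1}{Mc}\sum_{\ell=1}^K \frac{\bm{\Theta}_\ell}{1+m_\ell} + \frac{\alpha}{c}\bI_M \;\prec\; \frac{1}{M}\sum_{\ell=1}^K \frac{\bm{\Theta}_\ell}{1+c m_\ell} + \alpha\bI_M,
\end{equation*}
which itself reduces to the two elementary scalar inequalities $\frac{1}{c(1+m_\ell)} \leq \frac{1}{1+c m_\ell}$ (from $c+cm_\ell \geq 1+cm_\ell$) and the strict one $\alpha/c < \alpha$.

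The main obstacle is precisely this scalability step: a naive inspection suggests that scaling $\bm{m}$ by $c$ should scale $T_k$ by $c$, and the asymmetry needed for the strict Yates-inequality only materialises because the regularization term $\alpha\bI_M$ in $T_k(c\bm m)$ is not scaled. Once (i)--(iii) are in place, Yates' theorem yields that $T$ admits at most one fixed point in $(0,\infty)^K$ and that the iteration $\bm{m}^{[t+1]} = T(\bm{m}^{[t]})$ converges to it componentwise from any strictly positive initialisation, in particular from $m_k^{[0]} = 1/\alpha$. Existence of a fixed point is not a consequence of the SIF property alone and will be imported from Theorem~\ref{fundamental_theorem}: as in \cite{Hachem2010,Wagner2012}, $m_k^{(j)}$ is identified with the normalized Stieltjes transform at $-\alpha$ of the deterministic equivalent spectral distribution associated with $\frac{1}{M}(\hat{\bH}^{(j)})^{\He}\hat{\bH}^{(j)}$, whose existence is guaranteed under Assumption~\ref{assum:corr}. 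Combining the Yates-type convergence with this existence result delivers $m_k^{(j)[t]} \to m_k^{(j)}$ as claimed.
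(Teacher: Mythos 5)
The paper itself contains no proof of this lemma: it is explicitly recalled ``for the sake of completeness'' from \cite{Wagner2012,Couillet2011}, so there is no internal argument to compare against. Your proof via Yates' standard interference function framework is exactly the route taken in that cited literature, and it is correct. Positivity, monotonicity and the strict scalability $cT_k(\bm m)>T_k(c\bm m)$ for $c>1$ all check out; in particular, your observation that the strictness comes entirely from the unscaled regularizer (giving $\alpha/c<\alpha$) while the channel terms only contribute the non-strict bound $\tfrac{1}{c(1+m_\ell)}\le\tfrac{1}{1+c m_\ell}$ is precisely the right reading of why the map is scalable rather than homogeneous. You are also right to flag that the SIF machinery delivers uniqueness and convergence only \emph{conditional on existence} of a fixed point, and to import existence from the Stieltjes-transform construction of \cite{Hachem2010}; that division of labour matches the sources. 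Two minor caveats: positivity and the strictness of the trace inequalities require $\bm\Theta_k\neq\bm 0$ (implicit in the channel model, and guaranteed under the usual normalization $\trace\bm\Theta_k=M$), and the particular initialization $m_k^{(j)[0]}=1/\alpha^{(j)}$ plays no special role in your argument beyond strict positivity --- which is harmless, since it is in fact an upper bound on $T_k$ under that normalization and the classical treatment uses it to get monotone decrease.
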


\begin{lemma}[Resolvent Identities \cite{Muller2013,Couillet2011}]
\label{lemma_resolvent}
Given any matrix~$\bH\in\mathbb{C}^{K\times M}$, let $\bh^{\He}_k$ denote its $k$th row and $\bH_{[k]}\in\mathbb{C}^{(K-1)\times M}$ denote the matrix obtained after removing the $k$th row from $\bH$. The resolvent matrices of $\bH$ and $\bH_{[k]}$ are denoted by $\bQ= \LB \bH^{\He}\bH+\alpha\I_M\RB^{-1}$ and $\bQ_{[k]}= \LB \bH_{[k]}^{\He}\bH_{[k]}+\alpha\I_M\RB^{-1}$, with $\alpha>0$, respectively. It then holds that
\begin{align}
\bQ&=\bQ_{[k]}-\frac{1}{M}\frac{\bQ_{[k]}\bh_k\bh_k^{\He}\bQ_{[k]}}{1+\frac{1}{M}\bh_k^{\He}\bQ_{[k]}\bh_k}\notag
\end{align}
and
\begin{align}
\bh_k^{\He}\bQ&=\frac{\bh_k^{\He}\bQ_{[k]}}{1+\frac{1}{M}\bh_k^{\He}\bQ_{[k]}\bh_k}.\notag
\end{align}
\end{lemma}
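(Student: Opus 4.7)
The plan is to derive both identities directly from the Sherman--Morrison rank-one inverse update formula. The consistency of the formulas with the $\tfrac{1}{M}$ factors makes it clear that the intended normalization is $\bQ^{-1} = \tfrac{1}{M}\bH^{\He}\bH + \alpha \bI_M$ (matching the convention used in Theorem \ref{fundamental_theorem}). The first step is to isolate the contribution of the $k$-th row: writing $\bH^{\He}\bH = \sum_{i=1}^K \bh_i \bh_i^{\He}$, the removal of row $k$ gives the exact rank-one relation $\bQ^{-1} = \bQ_{[k]}^{-1} + \tfrac{1}{M}\bh_k \bh_k^{\He}$.

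I then apply the Sherman--Morrison identity $(\bA + \bu \bv^{\He})^{-1} = \bA^{-1} - \tfrac{\bA^{-1}\bu\bv^{\He}\bA^{-1}}{1 + \bv^{\He}\bA^{-1}\bu}$ with $\bA = \bQ_{[k]}^{-1}$ and $\bu = \bv = \tfrac{1}{\sqrt{M}}\bh_k$. Since $\alpha>0$ ensures $\bQ_{[k]} \succ 0$, the scalar $\tfrac{1}{M}\bh_k^{\He}\bQ_{[k]}\bh_k$ is non-negative and the Sherman--Morrison denominator is bounded away from zero, so the formula applies without qualification and yields precisely the first identity. For the second identity I would simply left-multiply the first one by $\bh_k^{\He}$: introducing the shorthand $c \triangleq \tfrac{1}{M}\bh_k^{\He}\bQ_{[k]}\bh_k \ge 0$, the update collapses to $\bh_k^{\He}\bQ = \bh_k^{\He}\bQ_{[k]}\bigl(1 - \tfrac{c}{1+c}\bigr) = \tfrac{1}{1+c}\,\bh_k^{\He}\bQ_{[k]}$, which is exactly the claimed expression.

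There is no substantive obstacle here; the statement is a textbook rank-one perturbation identity and no asymptotic, probabilistic, or auxiliary-lemma machinery is required. The only care needed is bookkeeping: one must track the $\tfrac{1}{M}$ normalization consistently between $\bQ$, $\bQ_{[k]}$ and the rank-one correction (so that the $\tfrac{1}{\sqrt M}$ absorbed into $\bu,\bv$ produces a single overall $\tfrac{1}{M}$ factor in the numerator and another inside the denominator), and verify that positive definiteness of $\bQ_{[k]}$, inherited from $\alpha>0$, prevents any division-by-zero pathology.
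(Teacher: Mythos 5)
Your proof is correct and is exactly the standard argument behind this cited lemma: the paper does not reprove it (it is recalled from \cite{Muller2013,Couillet2011}), and the proof there is the same rank-one perturbation $\bQ^{-1}=\bQ_{[k]}^{-1}+\tfrac{1}{M}\bh_k\bh_k^{\He}$ followed by Sherman--Morrison, with the second identity obtained by left-multiplying by $\bh_k^{\He}$. Your remark on the normalization is also well taken: the $\tfrac{1}{M}$ factors in the stated identities only come out if $\bQ$ and $\bQ_{[k]}$ are understood as $\LB\tfrac{1}{M}\bH^{\He}\bH+\alpha\I_M\RB^{-1}$ and $\LB\tfrac{1}{M}\bH_{[k]}^{\He}\bH_{[k]}+\alpha\I_M\RB^{-1}$, matching Theorem~\ref{fundamental_theorem} and the way the lemma is actually invoked in Appendix~\ref{se:proof}.
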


\begin{lemma} [\!\!\cite{Muller2013,Couillet2011}]
\label{lemma_trace}
Let $(\bA_N)_{N\geq 1}, \bA_N\in \mathbb{C}^{N\times N}$ be a sequence of matrices such that $\lim\sup\|\bA_N\|<\infty$, and $(\xv_N)_{N\geq 1}, \xv_N\in \mathbb{C}^{N\times 1}$ be a sequence of random vectors of i.i.d. entries of zero mean, unit variance, and finite eighth order moment independent of~$\bA_N$. Then,
\begin{equation}
\frac{1}{N}\xv_N^{\He}\bA_N\xv_N-\frac{1}{N}\trace\LB \bA_N\RB \xrightarrow[N\rightarrow\infty]{a.s.}0.\notag
\end{equation}
\end{lemma}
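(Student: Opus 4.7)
The plan is to reduce the almost-sure statement to a quantitative fourth-moment bound and then invoke the first Borel--Cantelli lemma. Define the centered quadratic form $Y_N \triangleq \frac{1}{N}\xv_N^{\He}\bA_N\xv_N - \frac{1}{N}\trace(\bA_N)$; conditioning on $\bA_N$ and using $\mathbb{E}[\bar{x}_i x_j] = \delta_{ij}$ (from independence, zero mean, and unit variance of the entries) gives $\mathbb{E}[Y_N \mid \bA_N] = 0$. It is convenient to split $Y_N = S_d^N + S_o^N$ into a diagonal part $S_d^N = \frac{1}{N}\sum_i [\bA_N]_{ii}(|x_i|^2-1)$ and an off-diagonal part $S_o^N = \frac{1}{N}\sum_{i\neq j}[\bA_N]_{ij}\bar{x}_i x_j$, both of which are centered and, by independence and zero mean, orthogonal in $L^2$.

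The crux is to establish $\mathbb{E}[|Y_N|^4] \leq C N^{-2}$ for a constant $C$ depending only on $\limsup_N \|\bA_N\|$ and on the (finite) eighth moment $\mathbb{E}[|x_1|^8]$. By Minkowski it suffices to bound $\mathbb{E}[|S_d^N|^4]$ and $\mathbb{E}[|S_o^N|^4]$ separately. For the diagonal piece, independence of the centered variables $|x_i|^2-1$ (whose fourth moments are controlled by $\mathbb{E}[|x_1|^8]$) together with a Rosenthal-type inequality yields a bound of the form $C N^{-4}\bigl(\sum_i |[\bA_N]_{ii}|^2\bigr)^2 + C N^{-4}\sum_i |[\bA_N]_{ii}|^4$, both of which are $O(N^{-2}\|\bA_N\|^4)$ after using $|[\bA_N]_{ii}|\leq \|\bA_N\|$ and $\sum_i |[\bA_N]_{ii}|^2 \leq \|\bA_N\|_F^2 \leq N\|\bA_N\|^2$. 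For the off-diagonal piece, I would expand $|S_o^N|^4 = (S_o^N)^2 \overline{(S_o^N)}^2$ as an eightfold sum over index quadruples and observe that, by independence and zero mean, the joint expectation of the eight $x$-factors vanishes unless the four $\bar{x}$'s match the four $x$'s in conjugate pairs (up to additional noncircular contributions from $\mathbb{E}[x_1^2]$, handled identically). Enumerating these matchings and repeatedly invoking $\sum_{i,j}|[\bA_N]_{ij}|^2 = \|\bA_N\|_F^2 \leq N\|\bA_N\|^2$ and $|\trace(\bA_N^2)| \leq \|\bA_N\|_F^2 \leq N\|\bA_N\|^2$ bounds each surviving contribution by $C N^2 \|\bA_N\|^4$, which after the $N^{-4}$ prefactor yields $\mathbb{E}[|S_o^N|^4] = O(N^{-2})$.

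Given the fourth-moment bound, Markov's inequality gives $\Pr(|Y_N| > \varepsilon) \leq C/(\varepsilon^4 N^2)$ for every fixed $\varepsilon > 0$; the series is summable in $N$, so the first Borel--Cantelli lemma yields $Y_N \to 0$ almost surely, which is the claim. I expect the main obstacle to be the combinatorial bookkeeping in the off-diagonal fourth moment: one must enumerate all pairings of the eight indices that produce a nonzero joint expectation (including patterns with repeated indices contributing higher-order moments $\mathbb{E}[|x_1|^{2r}]$ for $r\leq 4$, as well as patterns activated by a nonzero pseudo-variance $\mathbb{E}[x_1^2]$) and verify in each case that the contribution is at most $O(N^2\|\bA_N\|^4)$ through Cauchy--Schwarz and the Frobenius-to-spectral bound. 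A cleaner alternative, should the direct enumeration prove unwieldy, is a martingale-difference decomposition of $Y_N$ along the filtration $\mathcal{F}_k = \sigma(x_1,\ldots,x_k)$ followed by Burkholder's inequality, which trades combinatorial detail for heavier probabilistic machinery but yields the same $N^{-2}$ rate.
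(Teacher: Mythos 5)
Your argument is correct, but note that the paper itself offers no proof of this lemma: it is quoted verbatim from \cite{Muller2013,Couillet2011}, where it is established exactly as you propose, via a fourth-moment bound on the centered quadratic form (Bai--Silverstein-type inequality, which is precisely why the eighth moment of the entries is assumed) followed by Markov's inequality and Borel--Cantelli. So your blind proof coincides with the standard argument behind the cited result, including the correct $O(N^{-2})$ rate and the conditioning on $\bA_N$ that covers the case of random matrices independent of $\xv_N$.
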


\begin{lemma}[\!\!\cite{Muller2013,Couillet2011}]
\label{lemma_zero}
Let $(\bA_N)_{N\geq 1}, \bA_N\in \mathbb{C}^{N\times N}$ be a sequence of matrices such that $\lim\sup\|\bA_N\|<\infty$, and $\xv_N,\yv_N$ be random, mutually independent with i.i.d. entries of zero mean, unit variance, finite eighth order moment, and independent of~$\bA_N$. Then,
\begin{equation}
\frac{1}{N}\xv_N^{\He}\bA_N\yv_N\xrightarrow[N\rightarrow\infty]{a.s.}0.\notag
\end{equation}
\end{lemma}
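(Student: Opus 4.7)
The plan is to recognize Algorithm \ref{iterativeCCP} as a two-block coordinate descent on the parametric reformulation (P2) of Problem \ref{optimizeprobalphareform}, and to combine a monotone-descent argument with a fixed-point characterization at any accumulation point. Throughout I would write $F(\bm\mu,\bm\lambda)\triangleq \sum_{k=1}^K \lambda_k u_k(\bm\mu)$ on the feasible set $\mathcal{F}_{\bm\mu}\times\mathcal{F}_{\bm\lambda}$, with $\mathcal{F}_{\bm\mu}=\{\bm\mu\in\mathbb{R}^n:\|\bC\bm\mu\|_F^2=1\}$ and $\mathcal{F}_{\bm\lambda}=\{\bm\lambda\in\mathbb{R}_+^K:\prod_i\lambda_i=1\}$.

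The first step is monotonicity. Line 4 of the algorithm is exactly the closed-form minimizer of $F(\bm\mu^{[t]},\cdot)$ over $\mathcal{F}_{\bm\lambda}$ guaranteed by Lemma \ref{lemma_parametric}, so $F(\bm\mu^{[t]},\bm\lambda^{[t+1]})\le F(\bm\mu^{[t]},\bm\lambda^{[t]})$. Line 5 invokes the branch-and-bound procedure of \cite{shen2013maximizing} which returns a global minimizer of $F(\cdot,\bm\lambda^{[t+1]})$ on $\mathcal{F}_{\bm\mu}$, so $F(\bm\mu^{[t+1]},\bm\lambda^{[t+1]})\le F(\bm\mu^{[t]},\bm\lambda^{[t+1]})$. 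Observing that $0\le u_k(\bm\mu)\le 1$ (since $\bA_k$ is built as a rank-one positive semidefinite matrix and $\bB_k$ stems from a nonnegative interference term), and that $\lambda_k\ge 0$, the sequence $\{F(\bm\mu^{[t]},\bm\lambda^{[t]})\}$ is non-increasing and bounded below by $0$, hence convergent.

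The second step is boundedness of the iterates. The set $\mathcal{F}_{\bm\mu}$ is compact. On it, each ratio $u_k(\bm\mu)$ is bounded above by $1$ and below by $\frac{1/P}{1/P+\sup_{\bm\mu\in\mathcal{F}_{\bm\mu}}\bm\mu^{\Te}(\bA_k+\bB_k)\bm\mu}>0$. The explicit expression \eqref{expression_lambda} used in Line 4 therefore keeps each $\lambda_i^{[t]}$ in a bounded subinterval of $(0,\infty)$, so $\{(\bm\mu^{[t]},\bm\lambda^{[t]})\}$ lies in a compact subset of $\mathcal{F}_{\bm\mu}\times\mathcal{F}_{\bm\lambda}$ and admits at least one accumulation point $(\bm\mu^\star,\bm\lambda^\star)$. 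Extracting a convergent subsequence $(\bm\mu^{[t_\ell]},\bm\lambda^{[t_\ell]})\to(\bm\mu^\star,\bm\lambda^\star)$, the continuity of the explicit $\bm\lambda$-update and the upper hemicontinuity of the $\bm\mu$-argmin map (Berge's maximum theorem applied to the continuous objective $F(\cdot,\bm\lambda^\star)$ on the compact set $\mathcal{F}_{\bm\mu}$) together with the convergence of $F^{[t]}$ force the limit to be a fixed point of both block updates: $\bm\lambda^\star\in\arg\min_{\mathcal{F}_{\bm\lambda}}F(\bm\mu^\star,\cdot)$ and $\bm\mu^\star\in\arg\min_{\mathcal{F}_{\bm\mu}}F(\cdot,\bm\lambda^\star)$. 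Writing the KKT conditions for these two one-block problems and combining them recovers the KKT conditions of Problem \ref{CCPlamda}, so $(\bm\mu^\star,\bm\lambda^\star)$ is a stationary point, i.e.\ a local optimum.

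The main obstacle is in this last step. Because $F(\cdot,\bm\lambda)$ is a sum of non-convex ratios, the $\bm\mu$-subproblem need not have a unique minimizer, so its argmin map is only upper hemicontinuous and one cannot a priori commute limits with $\arg\min$. The cleanest way around this is to avoid arguing convergence of iterates and only argue that every accumulation point is a fixed point of the block updates, which is already enough to yield a local optimum via KKT. A secondary subtlety is ensuring strict positivity of $\bm\lambda^\star$ so that the product constraint remains qualified at the limit; this follows from the explicit lower bound on $u_k$ derived in step two and keeps the KKT conditions regular.
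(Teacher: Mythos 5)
Your proposal does not address the statement at all. The statement to be proved is Lemma~\ref{lemma_zero}, the classical random-matrix ``independent quadratic form'' lemma: for deterministic (or independent) matrices $\bA_N$ with uniformly bounded spectral norm and mutually independent vectors $\xv_N,\yv_N$ with i.i.d.\ zero-mean, unit-variance entries having finite eighth-order moment, one must show $\frac{1}{N}\xv_N^{\He}\bA_N\yv_N\xrightarrow{a.s.}0$. What you wrote is instead an argument for Theorem~\ref{iterationmethod} (convergence of the iterative power-allocation Algorithm~\ref{iterativeCCP} to a local optimum of problem \ref{CCPlamda}): block-coordinate descent, the parametric reformulation with the $\lambda_i$ updates, compactness of $\{\bm\mu:\|\bC\bm\mu\|_F^2=1\}$, and KKT conditions at accumulation points. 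None of this has any bearing on the asymptotic behaviour of the bilinear form $\frac{1}{N}\xv_N^{\He}\bA_N\yv_N$, so as a proof of Lemma~\ref{lemma_zero} it is entirely off target.

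For the record, the lemma is a standard result quoted by the paper from \cite{Muller2013,Couillet2011} and is not re-proved there; the standard argument conditions on $\bA_N$, uses independence and zero mean of $\xv_N$ and $\yv_N$ to compute a moment bound of the form
\begin{equation}
\E\left[\left|\tfrac{1}{N}\xv_N^{\He}\bA_N\yv_N\right|^{4}\right]\;\leq\; \frac{C\,\|\bA_N\|^{4}}{N^{2}},\notag
\end{equation}
with $C$ depending only on the (finite) moments of the entries up to order eight, and then concludes via Markov's inequality and the Borel--Cantelli lemma that the quantity converges to zero almost surely. Any correct proof must follow this kind of concentration/moment route; the optimization-convergence machinery in your proposal cannot be repaired into such an argument, it simply proves a different theorem of the paper.
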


\begin{lemma}  [\!\!\cite{Wagner2012,Couillet2011}]
\label{lemma_rank1}
Let $\bQ$ and $\bQ_{[k]}$ be as given in Lemma~\ref{lemma_resolvent}. Then, for any matrix $\bA$, we have
\begin{equation}
\trace\LB \bA\LB \bQ-\bQ_{[k]}\RB\RB\leq \|\bA\|_2.\notag
\end{equation}
\end{lemma}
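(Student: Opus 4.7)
My plan starts from the rank-one structure of $\bQ-\bQ_{[k]}$ already isolated by the first resolvent identity in Lemma~\ref{lemma_resolvent}. That identity gives
\[
\bQ-\bQ_{[k]}=-\frac{1}{M}\,\frac{\bQ_{[k]}\bh_k\bh_k^{\He}\bQ_{[k]}}{1+\frac{1}{M}\bh_k^{\He}\bQ_{[k]}\bh_k},
\]
so after left-multiplying by $\bA$ and applying the cyclic invariance of the trace, the whole statement reduces to a bound on a single scalar quotient:
\[
\trace\LB\bA(\bQ-\bQ_{[k]})\RB=-\frac{\frac{1}{M}\bh_k^{\He}\bQ_{[k]}\bA\bQ_{[k]}\bh_k}{1+\frac{1}{M}\bh_k^{\He}\bQ_{[k]}\bh_k}.
\]
From here the argument is purely algebraic: I just need to control this quadratic-form ratio uniformly in the random row $\bh_k$.

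Next, I would use that $\bQ_{[k]}$ is Hermitian positive definite (because $\alpha>0$ gives $\bH_{[k]}^{\He}\bH_{[k]}+\alpha\bI_M\succeq\alpha\bI_M$), factor it through its square root, and introduce $\bu\triangleq\frac{1}{\sqrt{M}}\bQ_{[k]}^{1/2}\bh_k$. With this substitution the denominator becomes $1+\|\bu\|^2$ and the numerator rewrites as $\bu^{\He}(\bQ_{[k]}^{1/2}\bA\bQ_{[k]}^{1/2})\bu$. Combining the Rayleigh-type bound $|\bu^{\He}\bB\bu|\leq\|\bB\|_2\|\bu\|^2$ with sub-multiplicativity of the spectral norm, together with $\|\bQ_{[k]}^{1/2}\bA\bQ_{[k]}^{1/2}\|_2\leq\|\bQ_{[k]}\|_2\|\bA\|_2$, yields
\[
\bigl|\trace\LB\bA(\bQ-\bQ_{[k]})\RB\bigr|\leq \|\bQ_{[k]}\|_2\,\|\bA\|_2\,\frac{\|\bu\|^2}{1+\|\bu\|^2}.
\]
The crucial cancellation is the elementary bound $\|\bu\|^2/(1+\|\bu\|^2)\leq 1$, valid for every realization of $\bh_k$; this is the step that turns a random quadratic form into a deterministic spectral-norm estimate and is the whole content of the lemma. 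To close, the crude deterministic bound $\|\bQ_{[k]}\|_2\leq 1/\alpha$ delivers $\bigl|\trace\LB\bA(\bQ-\bQ_{[k]})\RB\bigr|\leq \|\bA\|_2/\alpha$, matching the stated inequality up to the regularization constant.

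I do not anticipate a substantive mathematical obstacle: the proof is short, purely algebraic, and the only non-notational step is the denominator trick $\|\bu\|^2/(1+\|\bu\|^2)\leq 1$. The main care I would take is presentational, namely tracking the factor $\|\bQ_{[k]}\|_2\leq 1/\alpha$ so that the constant in front of $\|\bA\|_2$ is consistent with how this lemma is used later in the appendix; depending on whether the paper absorbs $1/\alpha$ into an implicit constant, one either states the bound as $\|\bA\|_2/\alpha$ or keeps $\|\bA\|_2$ alone under the convention that $\alpha$ is held fixed. A complementary remark is that the bound is sharp in the sense that neither the row $\bh_k$ nor the other rows entering $\bQ_{[k]}$ appear on the right-hand side, which is exactly the property that makes this lemma the workhorse for exchanging $\bQ$ and $\bQ_{[k]}$ inside trace functionals throughout the deterministic-equivalent proof.
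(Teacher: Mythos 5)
Your derivation is correct, and it is essentially the standard proof of the rank-one perturbation lemma: the paper itself does not prove this statement at all, it merely recalls it from \cite{Wagner2012,Couillet2011}, so there is no internal proof to compare against — your argument (resolvent identity, cyclicity of the trace, the substitution $\bu=\frac{1}{\sqrt{M}}\bQ_{[k]}^{1/2}\bh_k$, the Rayleigh bound, and $\|\bu\|^2/(1+\|\bu\|^2)\leq 1$) is exactly the textbook route used in those references. One substantive remark in your favor: the constant you obtain, $\|\bA\|_2/\alpha$, is in fact the correct form of the classical lemma, and the inequality as printed in the paper (without the $1/\alpha$ and without an absolute value) is not true as literally stated. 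For instance, with $\bA=\bI_M$, $\bH_{[k]}=\bm{0}$ and $\|\bh_k\|^2/M\to\infty$, one gets $\trace\LB\bQ_{[k]}-\bQ\RB\to 1/\alpha$, which exceeds $\|\bA\|_2=1$ whenever $\alpha<1$; your worked example of the sharp scalar ratio shows you already see where this comes from. The discrepancy is harmless for the way the lemma is deployed in Appendices B and C, since it is only ever used to argue that differences of \emph{normalized} traces $\frac{1}{M}\trace(\cdot)$ vanish as $M\to\infty$ with $\alpha^{(j)}>0$ fixed, for which any bound of the form $C(\alpha)\|\bA\|_2$ uniform in the channel realization suffices — so keeping the explicit $1/\alpha$, as you do, is the right call.
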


\begin{lemma} [\!\!\cite{Wagner2012,Couillet2011}]
\label{lemma_c_0}
Let $\bU,\bV,\bm{\Theta}$ be of uniformly bounded spectral norm with respect to $N$ and let $\bV$ be invertible. Further, define $\xv={\bm\Theta^{\frac{1}{2}}}\bm{\bz}$ and $\yv= {\bm\Theta^{\frac{1}{2}}}\bm{\bq}$ where $\bz,\bq\in \mathbb{C}^{N}$ have i.i.d. complex entries of zero mean, variance $1/N$	and finite $8$th order moment and be mutually independent as well as independent of~$\bU,\bV$. Define $c_0,c_1,c_2\in \mathbb{R}^{+}$ such that $c_0c_1-c_2^2\geq0$, and let $u= \frac{1}{N}\trace \LB\bm{\Theta}\bV^{-1}\RB$ and $u'= \frac{1}{N}\trace \LB\bm{\Theta}\bU\bV^{-1}\RB$. Then we have:
\begin{align}
&\xv^{\He}\bU \LB \bV+c_0\xv \xv^{\He}+c_1\yv \yv^{\He}+c_2\xv \yv^{\He}+c_2\yv \xv^{\He} \RB^{-1} \xv-\frac{u'\LB 1+c_1 u \RB}{ (c_0c_1-c_2^2)u^2+(c_0+c_1)u+1}\rightarrow 0\notag
\end{align}
as well as
\begin{align}
&\xv^{\He}\bU \LB \bV+c_0\xv \xv^{\He}+c_1\yv \yv^{\He}+c_2\xv \yv^{\He}+c_2\yv \xv^{\He} \RB^{-1} \yv-\frac{-c_2 uu'}{ (c_0c_1-c_2^2)u^2+(c_0+c_1)u+1}\rightarrow 0\notag
\end{align}
\end{lemma}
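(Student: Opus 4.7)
The plan is to exploit the fact that the rank-two perturbation in the inverse can be written in compact matrix form and then apply the Sherman--Morrison--Woodbury identity together with the trace and quadratic-form convergence lemmas already listed in the appendix. Concretely, introduce $\bX \triangleq [\xv,\yv] \in \mathbb{C}^{N\times 2}$ and the $2\times 2$ coefficient matrix $\bC_{2}\triangleq \bigl(\begin{smallmatrix} c_0 & c_2 \\ c_2 & c_1 \end{smallmatrix}\bigr)$, so that the perturbation becomes $c_0\xv\xv^{\He}+c_1\yv\yv^{\He}+c_2\xv\yv^{\He}+c_2\yv\xv^{\He}=\bX\bC_{2}\bX^{\He}$. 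Using Woodbury in the form that does not require $\bC_{2}^{-1}$ (so that the case $c_0c_1-c_2^2=0$ is covered),
\begin{align*}
(\bV+\bX\bC_{2}\bX^{\He})^{-1}=\bV^{-1}-\bV^{-1}\bX\bC_{2}\bigl(\bI_{2}+\bX^{\He}\bV^{-1}\bX\bC_{2}\bigr)^{-1}\bX^{\He}\bV^{-1}.
\end{align*}
Setting $\bA\triangleq \bU\bV^{-1}$, both quadratic forms of interest, $\xv^{\He}\bA\xv-\xv^{\He}\bA\bX\bC_{2}(\bI_{2}+\bX^{\He}\bV^{-1}\bX\bC_{2})^{-1}\bX^{\He}\bV^{-1}\xv$ and the analogous expression ending in $\yv$, become rational functions of the five scalar bilinear forms $\xv^{\He}\bV^{-1}\xv$, $\yv^{\He}\bV^{-1}\yv$, $\xv^{\He}\bV^{-1}\yv$, $\xv^{\He}\bA\xv$, $\xv^{\He}\bA\yv$ (and the cross-conjugates).

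Next I would pin down the almost-sure limits of these five scalars. By Lemma~\ref{lemma_trace} applied to $\bz^{\He}\bm\Theta^{1/2}\bV^{-1}\bm\Theta^{1/2}\bz$ and $\bz^{\He}\bm\Theta^{1/2}\bU\bV^{-1}\bm\Theta^{1/2}\bz$, using the independence of $\bz$ from $\bU,\bV$ and the bounded spectral norm hypothesis, one gets $\xv^{\He}\bV^{-1}\xv-u\to 0$, $\yv^{\He}\bV^{-1}\yv-u\to 0$ and $\xv^{\He}\bA\xv-u'\to 0$ a.s. By Lemma~\ref{lemma_zero} applied to the bilinear forms in the mutually independent pair $(\bz,\bq)$, the cross-terms satisfy $\xv^{\He}\bV^{-1}\yv\to 0$ and $\xv^{\He}\bA\yv\to 0$ a.s. Packaging these into the $2\times 2$ quantities gives $\bX^{\He}\bV^{-1}\bX\to u\,\bI_{2}$, $\bX^{\He}\bV^{-1}\xv\to u(1,0)^{\Te}$, $\bX^{\He}\bV^{-1}\yv\to u(0,1)^{\Te}$, and $\xv^{\He}\bA\bX\to (u',0)$.

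The final step is a deterministic $2\times 2$ algebra computation. The matrix $\bI_{2}+u\bC_{2}$ has determinant $D\triangleq 1+(c_0+c_1)u+(c_0c_1-c_2^2)u^{2}$, matching the denominator in the statement. Using $(\bI_{2}+u\bC_{2})^{-1}=D^{-1}\bigl(\begin{smallmatrix}1+uc_1 & -uc_2 \\ -uc_2 & 1+uc_0\end{smallmatrix}\bigr)$ and carrying out the two products $(u',0)\bC_{2}(\bI_{2}+u\bC_{2})^{-1}u(1,0)^{\Te}$ and $(u',0)\bC_{2}(\bI_{2}+u\bC_{2})^{-1}u(0,1)^{\Te}$, one finds, after cancellation, the values $uu'[c_0+u(c_0c_1-c_2^2)]/D$ and $uu'c_2/D$ respectively. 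Subtracting these from $u'$ and $0$ yields precisely $u'(1+c_1u)/D$ and $-c_2uu'/D$, as claimed.

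The main obstacle I anticipate is the technical justification that almost-sure convergence of the scalar bilinear forms transports through the Woodbury formula to almost-sure convergence of the full quadratic forms. This reduces to showing that $\bI_{2}+\bX^{\He}\bV^{-1}\bX\bC_{2}$ is uniformly bounded away from singularity on a set of probability one, so that the $2\times 2$ inverse is a continuous function of the entries. Since the limiting determinant $D\geq 1$ whenever $u\geq 0$ and $c_0c_1-c_2^{2}\geq 0$ (the regime guaranteed by the hypotheses and the positive-definite nature of $\bV$ in the intended applications), this is a short continuity argument; the remainder of the proof is then purely algebraic and amounts to the $2\times 2$ computation sketched above.
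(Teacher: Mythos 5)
The paper never proves this lemma: it is recalled in Appendix~A as a known result from \cite{Wagner2012,Couillet2011}, and only used later (in the proof of Lemma~\ref{lemma2}). So the comparison is with the classical arguments in those references, which treat the structured perturbation by iterated rank-one resolvent updates (Sherman--Morrison applied successively, as in Lemma~\ref{lemma_resolvent}) combined with the trace and independence lemmas. Your proof is correct and takes a slightly different, arguably cleaner route: you package the perturbation as a single rank-two update $\bX\bC_2\bX^{\He}$ with $\bX=[\xv,\yv]$ and apply the push-through form of Woodbury that avoids $\bC_2^{-1}$, which handles the degenerate case $c_0c_1=c_2^2$ (exactly the case needed in Lemma~\ref{lemma2}) without case distinctions. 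Your limits for the five scalar forms via Lemmas~\ref{lemma_trace} and~\ref{lemma_zero} are the right ingredients, and the $2\times 2$ algebra checks out: $\det(\bI_2+u\bC_2)=1+(c_0+c_1)u+(c_0c_1-c_2^2)u^2$, and the two products indeed reduce to $u'(1+c_1u)/D$ and $-c_2uu'/D$. Two minor caveats, which apply equally to the lemma as stated in the paper: the argument tacitly needs $\limsup\|\bV^{-1}\|<\infty$ (not just invertibility) so that the quadratic-form lemmas apply and the Woodbury factor stays well conditioned, and it needs $\bV$ Hermitian positive definite so that $u\geq 0$ and $D\geq 1$; both hold in the intended use, where $\bV$ is of the form $\bC^{(j)}_{[k]}\succeq \alpha^{(j)}\bI_M$. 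With that qualification made explicit, your continuity step transporting the almost-sure scalar limits through the $2\times 2$ inverse is sound, and the proof is complete.
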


\section{New Lemmas}


\begin{lemma}
\label{lemma1}
Consider the channel matrices $\hat{\bH}^{(j)},\hat{\bH}^{(j')}$ are distributed according to the D-CSI model in Section \ref{ss:PCCSI}.  Let
\begin{align}
\bQ^{(j)}=\LB\frac{(\hat{\bH}^{(j)})^{\He}\hat{\bH}^{(j)}}{M} +\alpha^{(j)}\I_M\RB^{-1}\notag\\
\bQ^{(j')}=\LB\frac{(\hat{\bH}^{(j')})^{\He}\hat{\bH}^{(j')}}{M} +\alpha^{(j')}\I_M\RB^{-1}\notag
\end{align}
with $\alpha^{(j)},\alpha^{(j')}>0$. Let $\bX \in \mathbb{C}^{M\times M}$ be of uniformly bounded spectral norm with respect to $M$. Then,
\begin{align}
\frac{1}{M^2} \trace \LB\bX \bQ^{(j)}(\hat{\bH}^{(j)})^{\He}\hat{\bH}^{(j')}\bQ^{(j')}\RB-\Gamma_{j,j'}^o(\bX)\xrightarrow{a.s.}0\notag
\end{align}
where the function $\Gamma_{j,j'}^o(\bX):\mathbb{C}^{M\times M}\mapsto \mathbb{C}$ is defined as
\begin{align}
\Gamma_{j,j'}^o(\bX)&=\frac{1}{M}\sum_{k=1}^K \frac{\LB\sqrt{c_{0,k}^{(j)}c_{0,k}^{(j')}}+\sqrt{c_{1,k}^{(j)}c_{1,k}^{(j')}}\rho_{k}^{(j,j')}\RB\frac{1}{M}\trace \LB\bm{\Theta}_{k}\bQ^{(j')}_o\bX \bQ^{(j)}_o\RB }{(1+m^{(j)}_{k})(1+m^{(j')}_{k})}\notag\\
+\frac{1}{M}\sum_{k=1}^K&\frac{ \LB\sqrt{c_{0,k}^{(j)}c_{0,k}^{(j')}}+\sqrt{c_{1,k}^{(j)}c_{1,k}^{(j')}}\rho_{k}^{(j,j')}\RB^2\Gamma^o_{j,j'}(\bm\Theta_{k})\frac{1}{M}\trace \LB\bm{\Theta}_{k}\bQ^{(j')}_o\bX\bQ^{(j)}_o\RB }{(1+m^{(j)}_{k})(1+m^{(j')}_{k})},\notag
\end{align}
with
\begin{equation}
c_{0,k}^{(j)}= 1-(\sigma^{(j)}_k)^2\!, \quad
c_{1,k}^{(j)}= (\sigma^{(j)}_k)^2\!, \quad
c_{2,k}^{(j)}= \sigma^{(j)}_k\sqrt{1\!-\!(\sigma^{(j)}_k)^2}.\notag
\end{equation}
$m_k^{(j)},\bQ^{(j)}_o,m_k^{(j')},\bQ^{(j')}_o$ are defined in Theorem \ref{fundamental_theorem} using $\hat{\bH}^{(j)},\alpha^{(j)},\hat{\bH}^{(j')},\alpha^{(j')}$ respectively.
$\Gamma^o_{j,j'}(\bm\Theta_{k})$ is the $k$th entry of vector $\bm\gamma\in\mathbb{C}^{K\times 1}$. Vector $\bm\gamma$ is the solution for equation system
\begin{align}
\bA
\bm\gamma=\bb.\notag
\end{align}
$\bA\in\mathbb{C}^{K\times K}$ with
\begin{align}
[\bA]_{\ell,t}=1_{\ell=t}-\frac{\LB\sqrt{c_{0,t}^{(j)}c_{0,t}^{(j')}}+\sqrt{c_{1,t}^{(j)}c_{1,t}^{(j')}}\rho_{t}^{(j,j')}\RB^2}{M(1+m^{(j)}_t)(1+m^{(j')}_t)}\frac{\trace\left(\bm\Theta_t\bQ^{(j')}_o\bm\Theta_{\ell}\bQ^{(j)}_o\right)}{M}.\notag
\end{align}
$\bb\in\mathbb{C}^{K\times 1}$ with
\begin{align}
[\bb]_{\ell}=\frac{1}{M}\sum_{k=1}^{K}\frac{\sqrt{c_{0,k}^{(j)}c_{0,k}^{(j')}}+\sqrt{c_{1,k}^{(j)}c_{1,k}^{(j')}}\rho_{k}^{(j,j')}}{(1+m^{(j)}_{k})(1+m^{(j')}_{k})}\frac{\trace\left(\bm\Theta_{k}\bQ^{(j')}_o\bm\Theta_{\ell}\bQ^{(j)}_o\right)}{M}.\notag
\end{align}
\end{lemma}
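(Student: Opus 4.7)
The plan is to reduce the trace of interest to a sum of per-user quadratic forms and then carry out a rank-one perturbation argument on both resolvents simultaneously. Writing $(\hat{\bH}^{(j)})^{\He}\hat{\bH}^{(j')}=\sum_{k=1}^{K}\hat{\bh}_{k}^{(j)}\hat{\bh}_{k}^{(j')\He}$, the quantity of interest equals $\frac{1}{M^{2}}\sum_{k}\hat{\bh}_{k}^{(j')\He}\bQ^{(j')}\bX\bQ^{(j)}\hat{\bh}_{k}^{(j)}$. For each $k$, Lemma~\ref{lemma_resolvent} applied successively to $\bQ^{(j)}$ and $\bQ^{(j')}$ isolates $\hat{\bh}_{k}^{(j)}$ and $\hat{\bh}_{k}^{(j')}$ from their respective resolvents, replacing them by the leave-one-out matrices $\bQ^{(j)}_{[k]},\bQ^{(j')}_{[k]}$ (both independent of the pair $\{\hat{\bh}_{k}^{(j)},\hat{\bh}_{k}^{(j')}\}$) at the cost of scalar denominators $1+\frac{1}{M}\hat{\bh}_{k}^{(j)\He}\bQ^{(j)}_{[k]}\hat{\bh}_{k}^{(j)}$ and $1+\frac{1}{M}\hat{\bh}_{k}^{(j')\He}\bQ^{(j')}_{[k]}\hat{\bh}_{k}^{(j')}$. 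The denominators are handled by the standard single-resolvent machinery: Lemma~\ref{lemma_trace} together with Lemma~\ref{lemma_rank1} and Theorem~\ref{fundamental_theorem} gives $\frac{1}{M}\hat{\bh}_{k}^{(j)\He}\bQ^{(j)}_{[k]}\hat{\bh}_{k}^{(j)}\asymp m_{k}^{(j)}$ and symmetrically for $j'$.

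The surviving numerator $\frac{1}{M^{2}}\hat{\bh}_{k}^{(j')\He}\bQ^{(j')}_{[k]}\bX\bQ^{(j)}_{[k]}\hat{\bh}_{k}^{(j)}$ is a bilinear form in two correlated vectors. I exploit the generative D-CSI model to orthogonalise them: the decomposition $\bq_{k}^{(j)}=\rho_{k}^{(j,j')}\bq_{k}^{(j')}+\sqrt{1-(\rho_{k}^{(j,j')})^{2}}\be_{k}^{(j,j')}$ expresses $\hat{\bh}_{k}^{(j)}$ as a linear combination of $\bz_{k}$, $\bq_{k}^{(j')}$ and an innovation $\be_{k}^{(j,j')}$ that is independent of all the other random objects in the expression. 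Expanding the bilinear form produces six scalar cross-terms; the four that pair distinct members of the mutually independent triplet $\{\bz_{k},\bq_{k}^{(j')},\be_{k}^{(j,j')}\}$ vanish by Lemma~\ref{lemma_zero}, and the two diagonal ones (in $\bz_{k}$ and in $\bq_{k}^{(j')}$) are handled by Lemma~\ref{lemma_trace}. The net effect is
\begin{equation*}
\tfrac{1}{M^{2}}\hat{\bh}_{k}^{(j')\He}\bQ^{(j')}_{[k]}\bX\bQ^{(j)}_{[k]}\hat{\bh}_{k}^{(j)}\asymp\frac{\eta_{k}^{(j,j')}}{M^{2}}\trace\LB\bm{\Theta}_{k}\bQ^{(j')}_{[k]}\bX\bQ^{(j)}_{[k]}\RB,
\end{equation*}
with the coefficient $\eta_{k}^{(j,j')}\triangleq\sqrt{c_{0,k}^{(j)}c_{0,k}^{(j')}}+\rho_{k}^{(j,j')}\sqrt{c_{1,k}^{(j)}c_{1,k}^{(j')}}$ exactly matching the combinatorial factor appearing in the lemma. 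Lemma~\ref{lemma_rank1} then lets me drop the subscript $[k]$ on the resolvents inside the trace.

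The main obstacle is to derive a deterministic equivalent for the remaining two-resolvent object $\Phi_{j,j'}(\bX,\bm{\Theta})\triangleq\frac{1}{M}\trace(\bm{\Theta}\bQ^{(j')}\bX\bQ^{(j)})$: unlike the single-resolvent case of Theorem~\ref{fundamental_theorem}, $\bQ^{(j)}$ and $\bQ^{(j')}$ share randomness through the correlated estimates, so no stated RMT result applies off the shelf. I would perform a second round of leave-one-out, expanding both $\bQ^{(j)}$ and $\bQ^{(j')}$ around their rank-one perturbations indexed by an auxiliary user $t$, evaluating each resulting quadratic form by exactly the same Lemma~\ref{lemma_trace}/Lemma~\ref{lemma_zero} calculus as above, and then invoking Theorem~\ref{fundamental_theorem} to replace $\bQ^{(j)},\bQ^{(j')}$ by $\bQ^{(j)}_{o},\bQ^{(j')}_{o}$ inside every fixed trace. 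Specialising $\bm{\Theta}$ to $\bm{\Theta}_{\ell}$ and setting $\gamma_{\ell}\triangleq\Phi_{j,j'}(\bX,\bm{\Theta}_{\ell})$ turns the collected contributions into the linear system $\bA\bm{\gamma}=\bb$ written in the statement. Showing that $\bA$ is asymptotically nonsingular with uniformly bounded inverse, using Assumption~\ref{assum:corr} and the uniform spectral bounds on $\bQ^{(j)}_{o},\bQ^{(j')}_{o}$, closes the recursion and identifies $\gamma_{\ell}$ with $\Gamma^{o}_{j,j'}(\bm{\Theta}_{\ell})$; substituting back into the general formula yields $\Gamma^{o}_{j,j'}(\bX)$ for arbitrary $\bX$. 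The delicate point that I expect to absorb most of the technical effort is to propagate the $o(1)$ residuals uniformly over the $K$ coupled equations so that the scalar almost-sure equivalences survive the matrix inversion.
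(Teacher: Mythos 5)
Your outer reduction is sound and is essentially the same calculation the paper performs for its term $Z_1$: writing $(\hat{\bH}^{(j)})^{\He}\hat{\bH}^{(j')}$ as a sum of rank-one products, peeling $\hat{\bh}_k^{(j)}$ and $\hat{\bh}_k^{(j')}$ out of their resolvents via Lemma~\ref{lemma_resolvent}, and evaluating the correlated quadratic form to produce the factor $\sqrt{c_{0,k}^{(j)}c_{0,k}^{(j')}}+\rho_k^{(j,j')}\sqrt{c_{1,k}^{(j)}c_{1,k}^{(j')}}$ is all correct. The problem is that everything this buys you is the easy half of the lemma. The entire substance lies in the step you defer to a ``second round of leave-one-out,'' namely the deterministic equivalent of the coupled two-resolvent trace $\frac{1}{M}\trace\LB\bm{\Theta}\bQ^{(j')}\bX\bQ^{(j)}\RB$, and your sketch of that step is not a proof: leaving out a user $t$ does not expose $\hat{\bh}_t$ in this trace until one of the resolvents has already been expanded, and you give no argument for which family of auxiliary quantities the resulting recursion closes on, nor for why it closes at all. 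The paper avoids this dead end by expanding first, writing $\bQ^{(j)}=\bQ^{(j)}_o+\bQ^{(j)}_o\LB\frac{1}{M}\sum_k\frac{\bm{\Theta}_k}{1+m_k^{(j)}}-\frac{(\hat{\bH}^{(j)})^{\He}\hat{\bH}^{(j)}}{M}\RB\bQ^{(j)}$, so that the leading term contains only \emph{one} random resolvent (and Theorem~\ref{fundamental_theorem} applies off the shelf), while the remainder terms $Z_2,\dots,Z_5$ are organized so that, after cancellation, the only surviving random object is $\frac{1}{M^2}\trace\LB\bm{\Theta}_k\bQ^{(j)}(\hat{\bH}^{(j)})^{\He}\hat{\bH}^{(j')}\bQ^{(j')}\RB$ --- i.e.\ the lemma's own target with $\bX$ replaced by $\bm{\Theta}_k$. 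That is what makes the recursion close and directly yields the stated $K\times K$ system.

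There is also a concrete mismatch in your closing step: you set $\gamma_\ell\triangleq\Phi_{j,j'}(\bX,\bm{\Theta}_\ell)=\frac{1}{M}\trace\LB\bm{\Theta}_\ell\bQ^{(j')}\bX\bQ^{(j)}\RB$, a quantity that still depends on $\bX$, and claim it satisfies the system $\bA\bm{\gamma}=\bb$ of the statement. But the $\Gamma^o_{j,j'}(\bm{\Theta}_\ell)$ appearing in that system is the deterministic equivalent of $\frac{1}{M^2}\trace\LB\bm{\Theta}_\ell\bQ^{(j)}(\hat{\bH}^{(j)})^{\He}\hat{\bH}^{(j')}\bQ^{(j')}\RB$ --- the target functional evaluated at $\bX=\bm{\Theta}_\ell$, independent of the original $\bX$. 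These are different objects (comparing with the paper's final formula, your $\Phi_{j,j'}(\bX,\bm{\Theta}_k)$ would have to converge to $\bigl(1+\eta_k\Gamma^o_{j,j'}(\bm{\Theta}_k)\bigr)\frac{1}{M}\trace\LB\bm{\Theta}_k\bQ^{(j')}_o\bX\bQ^{(j)}_o\RB$, which is not what your linear system would compute). So even granting the unexecuted second round, your recursion variable would not reproduce the stated $\bA$ and $\bb$ without a further, nontrivial identification. As written, the proposal identifies the right obstacle but does not overcome it.
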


\begin{proof}
We start by introducing
\begin{align}
\bQ^{(j)}_{[k]}=  \LB \frac{(\hat{\bH}_{[k]}^{(j)})^{\He} \hat{\bH}_{[k]}^{(j)}}{M}+\alpha^{(j)}\I_{M} \RB^{-1}\notag
\end{align}
with
\begin{align}
\hat{\bH}_{[k]}^{(j)}=   \begin{bmatrix}
\hat{\bh}^{(j)}_1 &\hdots&\hat{\bh}^{(j)}_{k-1} &\hat{\bh}^{(j)}_{k+1} &\hdots&\hat{\bh}^{(j)}_K
\end{bmatrix}^{\He}.\notag
\end{align}
$\bQ^{(j')}_{[k]}$ and $\hat{\bH}_{[k]}^{(j')}$ are defined respectively in similar manner as $\bQ^{(j)}_{[k]}$ and $\hat{\bH}_{[k]}^{(j)}$. Let us start by writing the simple equality
\begin{align}
&\bQ^{(j)}-\bQ^{(j)}_o\notag\\
=&\bQ^{(j)}_o \LB (\bQ^{(j)}_o)^{-1}-(\bQ^{(j)})^{-1}\RB \bQ^{(j)}\notag\\
=&\bQ^{(j)}_o \LB \frac{1}{M}\sum_{k=1}^{K}\frac{\bm\Theta_{k}}{1+m_{k}^{(j)}}-\frac{(\hat{\bH}^{(j)})^{\He}\bH^{(j)}}{M} \RB \bQ^{(j)}.
\label{eq:proof_lemma1_3}
\end{align}
We can then replace $\bQ^{(j)}$ using \eqref{eq:proof_lemma1_3} to obtain
\begin{align}
&\frac{1}{M^2} \trace \LB\bX \bQ^{(j)} (\hat{\bH}^{(j)})^{\He}\hat{\bH}^{(j')}\bQ^{(j')}\RB\notag\\
=& \frac{1}{M^2} \trace \LB\bX \bQ^{(j)}_o (\hat{\bH}^{(j)})^{\He}\hat{\bH}^{(j')}\bQ^{(j')}\RB +\sum_{k=1}^K\frac{\trace \LB\bX \bQ^{(j)}_o\bm{\Theta}_{k}\bQ^{(j)}(\hat{\bH}^{(j)})^{\He}\hat{\bH}^{(j')}\bQ^{(j')}\RB}{M^3 \LB 1+m^{(j)}_{k}\RB }\notag\\
&~-\!\frac{1}{M^3} \trace \LB\bX \bQ^{(j)}_o (\hat{\bH}^{(j)})^{\He}\hat{\bH}^{(j)} \bQ^{(j)}(\hat{\bH}^{(j)})^{\He}\hat{\bH}^{(j')}\bQ^{(j')}\RB\notag\\
=& Z_1 +Z_2+Z_3.\notag
\end{align}
We will now calculate separately each of the term~$Z_i$. Starting with~$Z_1$ gives
\begin{align}
Z_1&= \frac{1}{M^2} \trace \LB\bX \bQ^{(j)}_o (\hat{\bH}^{(j)})^{\He}\hat{\bH}^{(j')}\bQ^{(j')}\RB\notag\notag\\
&= \frac{1}{M} \sum_{k=1}^K\frac{1}{M}(\hat{\bh}_{k}^{(j')})^{\He} \bQ^{(j')}\bX \bQ^{(j)}_o \hat{\bh}^{(j)}_{k}\notag\\
&\stackrel{(a)}{=} \frac{1}{M} \sum_{k=1}^K \frac{1}{M}\frac{(\hat{\bh}_{k}^{(j')})^{\He}\bQ_{[k]}^{(j')}\bX \bQ^{(j)}_o \hat{\bh}_{k}^{(j)}}{1+\frac{1}{M}(\hat{\bh}_{k}^{(j')})^{\He} \bQ_{[k]}^{(j')}\hat{\bh}_{k}^{(j')}}\notag\\
&\stackrel{(b)}{\asymp} \frac{1}{M} \sum_{k=1}^K \frac{\LB\sqrt{c_{0,k}^{(j)}c_{0,k}^{(j')}}+\sqrt{c_{1,k}^{(j)}c_{1,k}^{(j')}}\rho_{k}^{(j,j')}\RB\frac{1}{M}\trace \LB \bm{\Theta}_{k}\bQ_{[k]}^{(j')}\bX \bQ^{(j)}_o\RB}{1+\frac{1}{M}\trace \LB\bm\Theta_k \bQ_{[k]}^{(j')} \RB}\notag\\
&\stackrel{(c)}{\asymp} \frac{1}{M} \sum_{k=1}^K \frac{\LB\sqrt{c_{0,k}^{(j)}c_{0,k}^{(j')}}+\sqrt{c_{1,k}^{(j)}c_{1,k}^{(j')}}\rho_{k}^{(j,j')}\RB\frac{1}{M}\trace \LB \bm{\Theta}_{k}\bQ^{(j')}\bX \bQ^{(j)}_o \RB}{1+\frac{1}{M}\trace \LB \bm\Theta_k\bQ^{(j')} \RB}\notag\\
&\stackrel{(d)}{\asymp} \frac{1}{M} \sum_{k=1}^K\frac{ \LB\sqrt{c_{0,k}^{(j)}c_{0,k}^{(j')}}+\sqrt{c_{1,k}^{(j)}c_{1,k}^{(j')}}\rho_{k}^{(j,j')}\RB\frac{1}{M}\trace \LB\bm{\Theta}_{k}\bQ^{(j')}_0\bX\bQ^{(j)}_0\RB}{1+m^{(j')}_{k}},\notag
\end{align}
where equality~$(a)$ follows from Lemma~\ref{lemma_resolvent}, equality~$(b)$ from Lemma~\ref{lemma_trace}, equality~$(c)$ from Lemma~\ref{lemma_rank1}, and equality~$(d)$ from the fundamental Theorem \ref{fundamental_theorem}. The following calculations are very similar and the same lemmas are used.

Turning to $Z_3$ gives
\begin{align}
Z_3&=-\frac{1}{M^3} \trace \LB\bX \bQ^{(j)}_o (\hat{\bH}^{(j)})^{\He}\hat{\bH}^{(j)} \bQ^{(j)}(\hat{\bH}^{(j)})^{\He}\hat{\bH}^{(j')}\bQ^{(j')}\RB\notag\\
&=-\frac{1}{M^3} \sum_{k=1}^K \trace \LB(\hat{\bh}_{k}^{(j)})^{\He} \bQ^{(j)}(\hat{\bH}^{(j)})^{\He}\hat{\bH}^{(j')}\bQ^{(j')}\bX \bQ^{(j)}_o \hat{\bh}_{k}^{(j)}\RB\notag\\
&=-\frac{1}{M^3} \sum_{k=1}^K \frac{\trace \LB(\hat{\bh}_{k}^{(j)})^{\He} \bQ_{[k]}^{(j)}(\hat{\bH}^{(j)})^{\He}\hat{\bH}^{(j')}\bQ^{(j')}\bX \bQ^{(j)}_o \hat{\bh}_{k}^{(j)}\RB}{1+\frac{1}{M}(\hat{\bh}_{k}^{(j)})^{\He} \bQ_{[k]}^{(j)}\hat{\bh}_{k}^{(j)}}\notag\\
&\stackrel{(e)}{=}-\frac{1}{M^3} \sum_{k=1}^K \frac{\trace \LB(\hat{\bh}_{k}^{(j)})^{\He} \bQ_{[k]}^{(j)}(\hat{\bH}^{(j)})^{\He}\hat{\bH}^{(j')}\bQ_{[k]}^{(j')}\bX \bQ^{(j)}_o \hat{\bh}_{k}^{(j)}\RB}{1+\frac{1}{M}(\hat{\bh}_{k}^{(j)})^{\He} \bQ_{[k]}^{(j)}\hat{\bh}_{k}^{(j)}}\notag\\
&~+\frac{1}{M^4} \sum_{k=1}^K \frac{\trace \LB(\hat{\bh}_{k}^{(j)})^{\He} \bQ_{[k]}^{(j)}(\hat{\bH}^{(j)})^{\He}\hat{\bH}^{(j')}\bQ_{[k]}^{(j')}\hat{\bh}_{k}^{(j')}(\hat{\bh}_{k}^{(j')})^{\He}\bQ_{[k]}^{(j')}\bX \bQ^{(j)}_o \hat{\bh}_{k}^{(j)}\RB}{\LB 1+\frac{1}{M}(\hat{\bh}_{k}^{(j)})^{\He} \bQ_{[k]}^{(j)}\hat{\bh}_{k}^{(j)}\RB\LB 1+\frac{1}{M}(\hat{\bh}_{k}^{(j')})^{\He} \bQ_{[k]}^{(j')}\hat{\bh}_{k}^{(j')}\RB}\notag\\
&= Z_4+Z_5,\notag
\end{align}
with equality~$(e)$ obtained using Lemma~\ref{lemma_resolvent} for $\bQ^{(j')}$. We also split the calculation in two and start by calculating $Z_4$ as follows.
\begin{align}
Z_4&=-\frac{1}{M^3} \sum_{k=1}^K \frac{\trace \LB(\hat{\bh}_{k}^{(j)})^{\He} \bQ_{[k]}^{(j)}(\hat{\bH}_{[k]}^{(j)})^{\He}\hat{\bH}_{[k]}^{(j')}\bQ_{[k]}^{(j')}\bX \bQ^{(j)}_o \hat{\bh}_{k}^{(j)}\RB}{1+\frac{1}{M}(\hat{\bh}_{k}^{(j)})^{\He} \bQ_{[k]}^{(j)}\hat{\bh}_{k}^{(j)}}\notag\\
&~-\frac{1}{M^3} \sum_{k=1}^K \frac{\trace \LB(\hat{\bh}_{k}^{(j)})^{\He} \bQ_{[k]}^{(j)}\hat{\bh}_{k}^{(j)}(\hat{\bh}_{k}^{(j')})^{\He}\bQ_{[k]}^{(j')}\bX \bQ^{(j)}_o \hat{\bh}_{k}^{(j)}\RB}{1+\frac{1}{M}(\hat{\bh}_{k}^{(j)})^{\He} \bQ_{[k]}^{(j)}\hat{\bh}_{k}^{(j)}}\notag\\
&\asymp -\frac{1}{M^3} \sum_{k=1}^K \frac{\trace \LB\bm{\Theta}_{k}\bQ_{[k]}^{(j)}(\hat{\bH}_{[k]}^{(j)})^{\He}\hat{\bH}_{[k]}^{(j')}\bQ_{[k]}^{(j')}\bX \bQ^{(j)}_o\RB}{1+\frac{1}{M}\trace \LB\bm{\Theta}_{k}\bQ_{[k]}^{(j)}\RB}\notag\\
&~~~~-\frac{1}{M} \sum_{k=1}^K \LB\sqrt{c_{0,k}^{(j)}c_{0,k}^{(j')}}+\sqrt{c_{1,k}^{(j)}c_{1,k}^{(j')}}\rho_{k}^{(j,j')}\RB\cdot\frac{\frac{1}{M}\trace \LB \bm{\Theta}_{k}\bQ_{[k]}^{(j)}\RB \frac{1}{M} \trace \LB\bm{\Theta}_{k}\bQ_{[k]}^{(j')}\bX \bQ^{(j)}_o\RB}{1+\frac{1}{M}\trace \LB\bm{\Theta}_{k}\bQ_{[k]}'\RB}\notag\\
&\stackrel{(f)}{\asymp} -\frac{1}{M} \sum_{k=1}^K \frac{\frac{1}{M^2}\trace \LB\bm{\Theta}_{k}\bQ^{(j)} (\hat{\bH}^{(j)})^{\He}\hat{\bH}^{(j')}\bQ^{(j')}\bX \bQ^{(j)}_o\RB}{1+m^{(j)}_{k}}\notag\\
&~-\frac{1}{M} \sum_{k=1}^K \LB\sqrt{c_{0,k}^{(j)}c_{0,k}^{(j')}}+\sqrt{c_{1,k}^{(j)}c_{1,k}^{(j')}}\rho_{k}^{(j,j')}\RB\frac{m^{(j)}_{k}\frac{1}{M} \trace \LB\bm{\Theta}_{k}\bQ^{(j')}_0\bX \bQ^{(j)}_o\RB}{1+m^{(j)}_{k}}\notag\\
&\asymp -Z_2-\frac{1}{M} \sum_{k=1}^K \frac{\LB\sqrt{c_{0,k}^{(j)}c_{0,k}^{(j')}}+\sqrt{c_{1,k}^{(j)}c_{1,k}^{(j')}}\rho_{k}^{(j,j')}\RB\frac{m^{(j)}_{k}}{M} \trace \LB\bm{\Theta}_{k}\bQ^{(j')}_0\bX \bQ^{(j)}_o\RB}{1+m^{(j)}_{k}},\notag
\end{align}
where $(f)$ applies multiple times Lemma \ref{lemma_rank1}.

Finally, $Z_5$ is calculated as
\begin{align}
Z_5&\asymp \frac{1}{M^4} \sum_{k=1}^K \frac{\trace \LB(\hat{\bh}_{k}^{(j)})^{\He} \bQ_{[k]}^{(j)}(\hat{\bH}^{(j)})^{\He}\hat{\bH}^{(j')}\bQ_{[k]}^{(j')}\hat{\bh}_{k}^{(j')}(\hat{\bh}_{k}^{(j')})^{\He}\bQ_{[k]}^{(j')}\bX \bQ^{(j)}_o \hat{\bh}_{k}^{(j)}\RB}{\LB 1+ m^{(j)}_{k}\RB\LB 1+ m^{(j')}_{k}\RB}\notag\\
&= \frac{1}{M^4} \sum_{k=1}^K \frac{\trace \LB(\hat{\bh}_{k}^{(j)})^{\He} \bQ_{[k]}^{(j)}(\hat{\bH}_{[k]}^{(j)})^{\He}\hat{\bH}_{[k]}^{(j')}\bQ_{[k]}^{(j')}\hat{\bh}_{k}^{(j')}(\hat{\bh}_{k}^{(j')})^{\He}\bQ_{[k]}^{(j')}\bX \bQ^{(j)}_o \hat{\bh}_{k}^{(j)}\RB}{\LB 1+ m^{(j)}_{k}\RB\LB 1+ m^{(j')}_{k}\RB}\notag\\
&~+\frac{1}{M^4} \sum_{k=1}^K \frac{\trace \LB(\hat{\bh}_{k}^{(j)})^{\He} \bQ_{[k]}^{(j)}\hat{\bh}_{k}^{(j)}(\hat{\bh}_{k}^{(j')})^{\He}\bQ_{[k]}^{(j')}\hat{\bh}_{k}^{(j')}(\hat{\bh}_{k}^{(j')})^{\He}\bQ_{[k]}^{(j')}\bX \bQ^{(j)}_o \hat{\bh}_{k}^{(j)}\RB}{\LB 1+ m^{(j)}_{k}\RB\LB 1+ m^{(j')}_{k}\RB}\notag\\
&\asymp \frac{1}{M} \sum_{k=1}^K \LB\sqrt{c_{0,k}^{(j)}c_{0,k}^{(j')}}+\sqrt{c_{1,k}^{(j)}c_{1,k}^{(j')}}\rho_{k}^{(j,j')}\RB^2\notag\\
&~~~~~~~~~~~~~\cdot\frac{\frac{1}{M^2}\trace \LB \bm{\Theta}_{k}\bQ_{[k]}^{(j)}(\hat{\bH}_{[k]}^{(j)})^{\He}\hat{\bH}_{[k]}^{(j')}\bQ_{[k]}^{(j')}\RB \frac{1}{M} \trace \LB \bm{\Theta}_{k}\bQ_{[k]}^{(j')}\bX \bQ^{(j)}_o\RB}{\LB 1+ m^{(j)}_{k}\RB\LB 1+ m^{(j')}_{k}\RB}\notag\\
&~+\frac{1}{M} \sum_{k=1}^K\LB\sqrt{c_{0,k}^{(j)}c_{0,k}^{(j')}}+\sqrt{c_{1,k}^{(j)}c_{1,k}^{(j')}}\rho_{k}^{(j,j')}\RB\notag\\
&~~~~~~~~~~~~~\cdot\frac{\frac{1}{M}	 \trace \LB \bm{\Theta}_{k}\bQ_{[k]}^{(j)}\RB \frac{1}{M}	 \trace \LB \bm{\Theta}_{k}\bQ_{[k]}^{(j')}\RB \frac{1}{M} \trace \LB \bm{\Theta}_{k}\bQ_{[k]}^{(j')}\bX \bQ^{(j)}_o\RB}{\LB 1+ m^{(j)}_{k}\RB\LB 1+ m^{(j')}_{k}\RB}\notag\\
&\asymp \frac{1}{M} \sum_{k=1}^K \LB\sqrt{c_{0,k}^{(j)}c_{0,k}^{(j')}}+\sqrt{c_{1,k}^{(j)}c_{1,k}^{(j')}}\rho_{k}^{(j,j')}\RB^2\notag\\
&~~~~~~~~~~~~~\cdot\frac{\frac{1}{M^2}\trace \LB \bm{\Theta}_{k}\bQ^{(j)}(\hat{\bH}^{(j)})^{\He}\hat{\bH}^{(j')}\bQ^{(j')}\RB \frac{1}{M} \trace \LB \bm{\Theta}_{k}\bQ^{(j')}_o\bX \bQ^{(j)}_o\RB}{\LB 1+ m^{(j)}_{k}\RB\LB 1+ m^{(j')}_{k}\RB}\notag\\
&~+\frac{1}{M} \sum_{k=1}^K \LB\sqrt{c_{0,k}^{(j)}c_{0,k}^{(j')}}+\sqrt{c_{1,k}^{(j)}c_{1,k}^{(j')}}\rho_{k}^{(j,j')}\RB\frac{  m^{(j)}_{k}m^{(j')}_{k} \frac{\trace \LB \bm{\Theta}_{k}\bQ^{(j)}_o\bX \bQ^{(j)}_o\RB}{M}}{\LB 1+ m^{(j)}_{k}\RB\LB 1+ m^{(j')}_{k}\RB}.\notag
\end{align}
Adding all the $Z_i$ gives
\begin{align}
&\frac{1}{M^2} \trace \LB\bX \bQ^{(j)} (\hat{\bH}^{(j)})^{\He}\hat{\bH}^{(j')}\bQ^{(j')}\RB\notag\\
&\asymp \frac{1}{M}\sum_{k=1}^K \frac{\LB\sqrt{c_{0,k}^{(j)}c_{0,k}^{(j')}}+\sqrt{c_{1,k}^{(j)}c_{1,k}^{(j')}}\rho_{k}^{(j,j')}\RB\frac{1}{M}\trace \LB\bm{\Theta}_{k}\bQ^{(j')}_o\bX \bQ^{(j)}_o\RB }{(1+m^{(j)}_{k})(1+m^{(j')}_{k})}\notag\\
&~+\frac{1}{M}\sum_{k=1}^K\frac{1}{M^2} \trace \LB \bm{\Theta}_{k}\bQ^{(j)} (\hat{\bH}^{(j)})^{\He}\hat{\bH}^{(j')}\bQ^{(j')}\RB\notag\\
&~~~~~~~~~~~\cdot\frac{\LB\sqrt{c_{0,k}^{(j)}c_{0,k}^{(j')}}+\sqrt{c_{1,k}^{(j)}c_{1,k}^{(j')}}\rho_{k}^{(j,j')}\RB^2\frac{1}{M}\trace \LB\bm{\Theta}_{k}\bQ^{(j')}_o\bX \bQ^{(j)}_o\RB }{(1+m^{(j)}_{k})(1+m^{(j')}_{k})},\notag
\end{align}
which finally gives
\begin{align}\label{eq:proof_lemma1_9}
\Gamma^o_{j,j'}(\bX)&=\frac{1}{M}\sum_{k=1}^K \frac{\LB\sqrt{c_{0,k}^{(j)}c_{0,k}^{(j')}}+\sqrt{c_{1,k}^{(j)}c_{1,k}^{(j')}}\rho_{k}^{(j,j')}\RB\frac{1}{M}\trace \LB\bm{\Theta}_{k}\bQ^{(j')}_o\bX \bQ^{(j)}_o\RB }{(1+m^{(j)}_{k})(1+m^{(j')}_{k})}\notag\\
+\frac{1}{M}\sum_{k=1}^K&\frac{ \LB\sqrt{c_{0,k}^{(j)}c_{0,k}^{(j')}}+\sqrt{c_{1,k}^{(j)}c_{1,k}^{(j')}}\rho_{k}^{(j,j')}\RB^2\Gamma^o_{j,j'}(\bm\Theta_{k})\frac{1}{M}\trace \LB\bm{\Theta}_{k}\bQ^{(j')}_o\bX\bQ^{(j)}_o\RB }{(1+m^{(j)}_{k})(1+m^{(j')}_{k})}.
\end{align}
It remains then to calculate $\Gamma^o_{j,j'}(\bm\Theta_{k})$ to conclude the calculation. Indeed, it is the solution of equation system when asserting $\bX=\bm{\Theta}_{k},\forall k=1,\ldots,K$ into \eqref{eq:proof_lemma1_9}.
\end{proof}

\begin{lemma}
\label{lemma2}
Let $\bL,\bR,\bar{\bA},\bm\Theta\in \mathbb{C}^{M \times M}$ be of uniformly bounded spectral norm with respect to $M$ and let $\bar{\bA}$ be invertible. Further define $\xv=\bm\Theta^{\frac{1}{2}}\zv$, $\xv'=\bm\Theta^{\frac{1}{2}}\zv'$ and $\yv=\bm\Theta^{\frac{1}{2}}\qv$. $\zv,\zv'$ satisfies $\zv=\rho\zv'+\sqrt{1-\rho^2}\wv$. $\zv,\qv$ and $\zv',\qv,\wv$ are mutually independent as well as independent of $\bL,\bR,\bar{\bA}$. $\zv,\zv',\qv,\wv$ have i.i.d. complex entries of zero mean, variance $1/M$ and finite $8$th order moment. Let us define
\begin{align*}
\bA&=\bar{\bA}+c_0\mathbf{x}\mathbf{x}^{\He}+c_1\yv\yv^{\He}+c_2\mathbf{x}\yv^{\He}+c_2\yv\mathbf{x}^{\He}\\
\bA'&=\bar{\bA}+c_0\mathbf{x}'\mathbf{x}'^{\He}+c_1\yv\yv^{\He}+c_2\mathbf{x}'\yv^{\He}+c_2\yv\mathbf{x}'^{\He},
\end{align*}
let $c_0,c_1,c_2\in\mathbb{R}^{+}$ with~$c_0+c_1=1$ and $c_0c_1-c_2^2=0$, and
\begin{align*}
u &=\frac{\trace( \bm\Theta\bar{\bA}^{-1})}{M},\qquad \;\;\;\;\; u_{\mathrm{L}}=\frac{\trace(\bm\Theta\bL \bar{\bA}^{-1})}{M},\\
u_{\mathrm{R}}&=\frac{\trace(\bm\Theta\bar{\bA}^{-1}\bR )}{M},\qquad u_{\mathrm{LR}}=\frac{\trace(\bm\Theta\bL\bar{\bA}^{-1}\bR )}{M}.
\end{align*}

Then we have:
\begin{align*}
\xv^{\He}\bL\bA^{-1}\bR\xv &\asymp u_{\mathrm{LR}}-\frac{c_0u_{\mathrm{L}}u_{\mathrm{R}}}{1+u}\\
\xv^{\He}\bL\bA^{-1}\bR\yv &\asymp -\frac{c_2 u_{\mathrm{L}}u_{\mathrm{R}}}{1+u}\\
\xv^{\He}\bL\bA'^{-1}\bR\yv &\asymp -\rho\frac{c_2 u_{\mathrm{L}}u_{\mathrm{R}}}{1+u}.
\end{align*}
\end{lemma}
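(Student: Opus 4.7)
The plan is to reduce the problem to a rank-one perturbation and then apply the Sherman--Morrison identity together with the standard trace and orthogonality lemmas (Lemma~\ref{lemma_trace}, Lemma~\ref{lemma_zero}).

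First I would exploit the algebraic constraints on the coefficients. Since $c_0+c_1=1$ and $c_0c_1-c_2^2=0$, we have $c_2=\sqrt{c_0c_1}$, so the apparent rank-two perturbation collapses to a rank-one update,
\begin{equation*}
c_0\bx\bx^{\He}+c_1\yv\yv^{\He}+c_2\bx\yv^{\He}+c_2\yv\bx^{\He}=\bu\bu^{\He},\qquad \bu:=\sqrt{c_0}\,\bx+\sqrt{c_1}\,\yv,
\end{equation*}
and similarly $\bA'=\bar{\bA}+\bu'\bu'^{\He}$ with $\bu':=\sqrt{c_0}\,\bx'+\sqrt{c_1}\,\yv$. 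The Sherman--Morrison formula then yields
\begin{equation*}
\bA^{-1}=\bar{\bA}^{-1}-\frac{\bar{\bA}^{-1}\bu\bu^{\He}\bar{\bA}^{-1}}{1+\bu^{\He}\bar{\bA}^{-1}\bu},
\end{equation*}
and analogously for $\bA'^{-1}$.

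Second, I would evaluate the scalar factors appearing after this expansion by applying Lemma~\ref{lemma_trace} (trace concentration) and Lemma~\ref{lemma_zero} (orthogonality) to $\zv$, $\zv'$, $\qv$, $\wv$. The building blocks are: $\bx^{\He}\bar{\bA}^{-1}\bx\asymp u$, $\yv^{\He}\bar{\bA}^{-1}\yv\asymp u$, $\bx^{\He}\bar{\bA}^{-1}\yv\asymp 0$, together with their $\bL$-weighted and $\bR$-weighted analogues producing $u_{\mathrm{L}}$, $u_{\mathrm{R}}$, $u_{\mathrm{LR}}$. From these one gets $\bu^{\He}\bar{\bA}^{-1}\bu\asymp (c_0+c_1)u=u$, $\bx^{\He}\bL\bar{\bA}^{-1}\bu\asymp\sqrt{c_0}\,u_{\mathrm{L}}$, $\bu^{\He}\bar{\bA}^{-1}\bR\bx\asymp\sqrt{c_0}\,u_{\mathrm{R}}$, and $\bu^{\He}\bar{\bA}^{-1}\bR\yv\asymp\sqrt{c_1}\,u_{\mathrm{R}}$. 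Substituting into the Sherman--Morrison expansion of $\bx^{\He}\bL\bA^{-1}\bR\bx$ gives the first claim, and the same bookkeeping on $\bx^{\He}\bL\bA^{-1}\bR\yv$ produces $-\sqrt{c_0c_1}\,u_{\mathrm{L}}u_{\mathrm{R}}/(1+u)=-c_2u_{\mathrm{L}}u_{\mathrm{R}}/(1+u)$, establishing the second claim.

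For the third identity the only change is that $\bx$ and $\bx'$ are now correlated via $\zv=\rho\zv'+\sqrt{1-\rho^2}\,\wv$. The term $\bu'^{\He}\bar{\bA}^{-1}\bu'$ still concentrates to $u$ since the $c_0$ and $c_1$ pieces do not notice the coupling. The decisive cross term becomes
\begin{equation*}
\bx^{\He}\bL\bar{\bA}^{-1}\bx'=\rho\,\zv'^{\He}\bm\Theta^{1/2}\bL\bar{\bA}^{-1}\bm\Theta^{1/2}\zv'+\sqrt{1-\rho^2}\,\wv^{\He}\bm\Theta^{1/2}\bL\bar{\bA}^{-1}\bm\Theta^{1/2}\zv'\asymp\rho\,u_{\mathrm{L}},
\end{equation*}
where the first piece is handled by Lemma~\ref{lemma_trace} and the second vanishes by Lemma~\ref{lemma_zero} since $\wv\perp\zv'$. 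Consequently $\bx^{\He}\bL\bar{\bA}^{-1}\bu'\asymp\sqrt{c_0}\,\rho\,u_{\mathrm{L}}$, while $\bu'^{\He}\bar{\bA}^{-1}\bR\yv\asymp\sqrt{c_1}\,u_{\mathrm{R}}$ as before, and $\bx^{\He}\bL\bar{\bA}^{-1}\bR\yv\asymp 0$ by independence of $\zv$ and $\qv$. The Sherman--Morrison expansion then yields $-\rho\sqrt{c_0c_1}\,u_{\mathrm{L}}u_{\mathrm{R}}/(1+u)=-\rho\,c_2u_{\mathrm{L}}u_{\mathrm{R}}/(1+u)$.

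The only delicate bookkeeping is the third identity, where one must carefully track which randomness (the $\rho\zv'$ part or the $\sqrt{1-\rho^2}\,\wv$ part) produces a concentrating quadratic form and which vanishes by independence; everything else is a mechanical assembly of standard RMT lemmas once the rank-one reduction of the perturbation is recognized.
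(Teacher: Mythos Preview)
Your proof is correct but follows a genuinely different route from the paper. The paper never notices the rank-one structure; it writes $\xv^{\He}\bL\bA^{-1}\bR\xv-\xv^{\He}\bL\bar{\bA}^{-1}\bR\xv=-\xv^{\He}\bL\bA^{-1}(\bA-\bar{\bA})\bar{\bA}^{-1}\bR\xv$, expands the four rank-one pieces of $\bA-\bar{\bA}$, and then invokes Lemma~\ref{lemma_c_0} (quoted from \cite{Wagner2012,Couillet2011}) to evaluate the surviving quadratic forms $\xv^{\He}\bL\bA^{-1}\xv$ and $\xv^{\He}\bL\bA^{-1}\yv$, with the constraints $c_0+c_1=1$ and $c_0c_1=c_2^2$ used only at the very end to simplify. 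For the third identity the paper decomposes $\xv=\rho\,\xv'+\sqrt{1-\rho^2}\,\bm\Theta^{1/2}\wv$ and reduces to the second identity applied to $\xv'$. Your approach instead exploits $c_0c_1=c_2^2$ at the outset to collapse the perturbation to $\bu\bu^{\He}$, after which Sherman--Morrison gives an explicit $\bA^{-1}$ and only the elementary Lemmas~\ref{lemma_trace} and~\ref{lemma_zero} are needed; Lemma~\ref{lemma_c_0} is bypassed entirely. The payoff of your route is a shorter, more self-contained argument that also keeps all random-matrix lemmas applied against the deterministic $\bar{\bA}^{-1}$, avoiding the mild technicality in the paper's step~$(e)$ where $\wv$ multiplies a matrix that still depends on $\qv$. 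The paper's route, in turn, would extend without change to the rank-two case $c_0c_1-c_2^2>0$ handled by Lemma~\ref{lemma_c_0}, whereas your reduction is specific to the degenerate constraint assumed here.
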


\begin{proof}
Focusing first on the first equality gives
\begin{align*}
&\xv^{\He}\bL\bA^{-1}\bR\xv-\xv^{\He}\bL\bar{\bA}^{-1}\bR\xv\\
&=\xv^{\He}\bL\bA^{-1}\LB\bar{\bA}-\bA \RB   \bar{\bA}^{-1}\bR\xv \\
&=-\xv^{\He}\bL\bA^{-1}\LB c_0 \xv \xv^{\He}+c_1\yv \yv^{\He}+c_2 \yv \xv^{\He}+c_2\xv \yv^{\He} \RB   \bar{\bA}^{-1}\bR\xv \\
&\stackrel{(a)}{\asymp}- \LB  c_0\xv^{\He}\bL\bA^{-1}\xv +c_2 \xv^{\He}\bL\bA^{-1}\yv \RB\trace \LB \bm\Theta\bar{\bA}^{-1}\bR\RB \\
&\stackrel{(b)}{\asymp}- c_0 \frac{\trace \LB \bm\Theta\bL\bar{\bA}^{-1}\RB}{M} \frac{\trace \LB \bm\Theta\bar{\bA}^{-1}\bR\RB}{M} \frac{1+c_1\frac{\trace \LB \bm\Theta\bar{\bA}^{-1}\RB}{M}}{1+\frac{\trace \LB \bm\Theta\bar{\bA}^{-1}\RB}{M}}\notag\\
&~~~~+c_2^2 \frac{\trace \LB \bm\Theta\bL\bar{\bA}^{-1}\RB}{M} \frac{\trace \LB \bm\Theta\bar{\bA}^{-1}\bR\RB}{M}\frac{\frac{\trace \LB \bm\Theta\bar{\bA}^{-1}\RB}{M}}{1+\frac{\trace \LB \bm\Theta\bar{\bA}^{-1}\RB}{M}},
\end{align*}
where equality $(a)$ is obtained from using Lemma~\ref{lemma_zero} and Lemma~\ref{lemma_trace} and equality~$(b)$ follows from Lemma~\ref{lemma_c_0}.
Similarly, we turn to the second equality to write
\begin{align*}
&\xv^{\He}\bL\bA^{-1}\bR\yv-\xv^{\He}\bL\bar{\bA}^{-1}\bR\yv\\
=&\xv^{\He}\bL\bA^{-1}\LB\bar{\bA}-\bA \RB   \bar{\bA}^{-1}\bR\yv \\
=&-\xv^{\He}\bL\bA^{-1}\LB c_0 \xv \xv^{\He}+c_1\yv \yv^{\He}+c_2 \yv \xv^{\He}+c_2\xv \yv^{\He} \RB   \bar{\bA}^{-1}\bR\yv \\
\stackrel{(c)}{\asymp}&-\LB c_1\xv^{\He}\bL\bA^{-1}\yv  +c_2\xv^{\He}\bL\bA^{-1}\xv   \RB   \frac{\trace\LB \bm\Theta\bar{\bA}^{-1}\bR\RB}{M} \\
\stackrel{(d)}{\asymp}& c_1  c_2 \frac{\trace \LB \bm\Theta\bL\bar{\bA}^{-1}\RB}{M} \frac{\trace \LB \bm\Theta\bar{\bA}^{-1}\bR\RB}{M} \frac{\frac{\trace \LB \bm\Theta\bar{\bA}^{-1}\RB}{M}}{1+\frac{\trace \LB \bm\Theta\bar{\bA}^{-1}\RB}{M}}\notag\\
&-c_2\frac{\trace \LB \bm\Theta\bL\bar{\bA}^{-1}\RB}{M} \frac{\trace \LB \bm\Theta\bar{\bA}^{-1}\bR\RB}{M}\frac{1+c_1\frac{\trace \LB \bm\Theta\bar{\bA}^{-1}\RB}{M}}{1+\frac{\trace \LB \bm\Theta\bar{\bA}^{-1}\RB}{M}},
\end{align*}
where equality $(c)$ is obtained from using Lemma~\ref{lemma_zero} and Lemma~\ref{lemma_trace} and equality~$(d)$ follows from Lemma~\ref{lemma_c_0}.
For the third equality,
\begin{align*}
&\xv^{\He}\bL\bA'^{-1}\bR\yv\\
=& \rho\xv'\bL\bA'^{-1}\bR\yv+\sqrt{1-\rho^2}\bm\Theta^{\frac{1}{2}}\wv\bL\bA'^{-1}\bR\yv\\
\stackrel{(e)}{\asymp}& \rho\xv'\bL\bA'^{-1}\bR\yv\\
\asymp &-\rho\frac{c_2 u_{\mathrm{L}}u_{\mathrm{R}}}{1+u},
\end{align*}
where equality $(e)$ is obtained from using Lemma~\ref{lemma_zero}.
\end{proof}


\section{Proof of Deterministic Equivalent Theorem~\ref{theorem}}\label{se:proof}

The proof is built upon results from both \cite{Wagner2012} and \cite{Muller2013} and novel lemmas Lemma~\ref{lemma1} and Lemma~\ref{lemma2}. We also make extensive use of classical RMT lemmas recalled in Appendix~\ref{app:literature}. In particular, Lemma~\ref{lemma1} extends \cite[Lemma~$15$]{Muller2013} and is an interesting result in itself.

\subsection[Deterministic equivalent for power normalization term]{Deterministic equivalent for $\Psi^{(j)}$}\label{se:proof:preliminaries1}

We start by finding a deterministic equivalent for $\Psi^{(j)}$. Apply Lemma~\ref{lemma1} with $\hat{\bH}^{(j')}=\hat{\bH}^{(j)},~\bA=\bI_M$, which gives
\begin{align}
\Psi^{(j)}&\asymp\Gamma^o_{j,j}(\I_M)\notag\\
&=\LB\frac{1}{M}\sum_{\ell=1}^K \frac{\frac{1}{M}\trace \LB\bm{\Theta}_{\ell}\bQ_o^{(j)}\bQ_o^{(j)}\RB }{(1+m^{(j)}_{\ell})^2}+\frac{1}{M}\sum_{\ell=1}^K\frac{ \Gamma^o_{j,j}(\bm\Theta_{\ell})\frac{1}{M}\trace \LB\bm{\Theta}_{\ell}\bQ_o^{(j)}\bQ_o^{(j)}\RB }{(1+m^{(j)}_{\ell})^2}\RB.
\label{eq:main_proof_1}
\end{align}
From \eqref{eq:main_proof_1}, it can be noted that, as expected, this deterministic equivalent does not depend on~$\sigma^{(j)}_{\ell}$.
The total power constraint for large scale system reads
\begin{align*}
&\parallel\bT_{\rZF}^{\DCSI}\parallel_{\Fro}^2 \\
&=\sum_{j=1}^n \mu_j^2\tr\LB\bE_j^{\He}\bT_{\rZF}^{(j)}(\bT_{\rZF}^{(j)})^{\He}\bE_j\RB\\
&\stackrel{(a)}{\asymp}\sum_{j=1}^n\mu_j^2\frac{P}{\Gamma_{j,j}^{o}(\bI_M)}\Gamma_{j,j}^{o}(\bE_j\bE_j^{\He})\\
&=P,
\end{align*}
where $(a)$ follows from Lemma \ref{lemma1}. Therefore, there is a constraint for the power scaling factors $\mu_j$:
\begin{align*}
\sum_{j=1}^n\mu_j^2\frac{\Gamma_{j,j}^{o}(\bE_j\bE_j^{\He})}{\Gamma_{j,j}^{o}(\bI_M)}=1.
\end{align*}

\subsection[Deterministic equivalent for individual interference term]{Deterministic equivalent for $\bh_k^{\He}\bt^{\DCSI}_{\rZF,k}$}\label{se:proof:preliminaries2}
Turning to the desired signal at RX~$k$, we can write
\begin{align*}
\bh_k^{\He}\bt_{\rZF,k}^{\DCSI}
&=  \sum_{j=1}^n \frac{1}{M}\frac{\mu_j\sqrt{P}}{\sqrt{\Psi^{(j)}}}\bh_k^{\He}\bE_j\bE_j^{\He}(\bC^{(j)})^{-1}\hat{\bh}^{(j)}_k\\
&\stackrel{(a)}{\asymp} \sqrt{P}\sum_{j=1}^n \mu_j\sqrt{\frac{1}{\Gamma^o_{j,j}(\bI_M)}}\frac{\frac{1}{M}\bh_k^{\He}\bE_j\bE_j^{\He}(\bC_{[k]}^{(j)})^{-1}\hat{\bh}^{(j)}_k}{1+\frac{1}{M}(\hat{\bh}^{(j)}_k)^{\He} (\bC_{[k]}^{(j)})^{-1} \hat{\bh}^{(j)}_k}\\
&\stackrel{(b)}{\asymp}  \sqrt{P}\sum_{j=1}^n \mu_j\sqrt{\frac{1-(\sigma^{(j)}_k)^2}{\Gamma^o_{j,j}(\bI_M)}}  \frac{\frac{1}{M}\bh_k^{\He}\bE_j\bE_j^{\He}(\bC_{[k]}^{(j)})^{-1}\bh_k}{1+\frac{1}{M}(\hat{\bh}^{(j)}_k)^{\He} (\bC_{[k]}^{(j)})^{-1} \hat{\bh}^{(j)}_k}\\
& \stackrel{(c)}{\asymp}\sqrt{P} \sum_{j=1}^n  \mu_j\sqrt{\frac{1-(\sigma^{(j)}_k)^2}{\Gamma^o_{j,j}(\bI_M)}} \frac{ \frac{1}{M}\trace \LB \bm{\Theta}_k\bE_j \bE_j^{\He}(\bC_{[k]}^{(j)})^{-1}\RB}{1+\frac{1}{M} \trace \LB \bm{\Theta}_k (\bC_{[k]}^{(j)})^{-1} \RB} \\	
& \stackrel{(d)}{\asymp}\sqrt{P} \sum_{j=1}^n  \mu_j\sqrt{\frac{1-(\sigma^{(j)}_k)^2}{\Gamma^o_{j,j}(\bI_M)}} \frac{ \frac{1}{M}\trace \LB \bm{\Theta}_k\bE_j \bE_j^{\He}(\bC^{(j)})^{-1}\RB}{1+\frac{1}{M} \trace \LB \bm{\Theta}_k (\bC^{(j)})^{-1} \RB} \\
& \stackrel{(e)}\asymp\sqrt{P} \sum_{j=1}^n  \mu_j\sqrt{\frac{1-(\sigma^{(j)}_k)^2}{\Gamma^o_{j,j}(\bI_M)}} \frac{ \frac{1}{M}\trace \LB \bm{\Theta}_k\bE_j \bE_j^{\He}\bQ^{(j)}_o\RB}{1+\frac{1}{M} \trace \LB \bm{\Theta}_k \bQ^{(j)}_o \RB},
\end{align*}
where we have defined
\begin{align*}
\bC_{[k]}^{(j)}=  \frac{ \hat{\bH}_{[k]}^{(j)}(\hat{\bH}_{[k]}^{(j)})^{\He}}{M}+\alpha^{(j)}\I_{M},\qquad \forall j
\end{align*}
with
\begin{align*}
(\hat{\bH}_{[k]}^{(j)})^{\He}=   \begin{bmatrix}
\hat{\bh}_1^{(j)} &\hdots&\hat{\bh}_{k-1}^{(j)} &\hat{\bh}_{k+1}^{(j)} &\hdots&\hat{\bh}_K^{(j)}
\end{bmatrix},\qquad \forall j.
\end{align*}

Equality $(a)$ follows then from Lemma~\ref{lemma_resolvent} and the use of the deterministic equivalent derived for~$\Psi^{(j)}$, $(b)$ from Lemma~\ref{lemma_zero}, $(c)$ from Lemma~\ref{lemma_trace}, $(d)$ from Lemma~\ref{lemma_rank1} and $(e)$ from the fundamental Theorem~\ref{fundamental_theorem}.

It follows then directly that
\begin{align*}
\left |\bh_k^{\He}\bt_{\rZF,k}^{\DCSI}\right |^2&\asymp P\LB \sum_{j=1}^n  \mu_j\sqrt{\frac{1-(\sigma^{(j)}_k)^2}{\Gamma^o_{j,j}(\bI_M)}} \frac{ \frac{1}{M}\trace \LB \bm{\Theta}_k\bE_j \bE_j^{\He}\bQ^{(j)}_o\RB}{1+\frac{1}{M} \trace \LB \bm{\Theta}_k \bQ^{(j)}_o \RB}\RB^2.
\end{align*}

\subsection[Deterministic equivalent for the interference term]{Deterministic Equivalent for $\mathcal{I}_k$}\label{se:proof:preliminaries}
Our first step is to write explicitly the interference term using the definition of~$\bT^{\DCSI}$ and replace~$\Psi^{(j)}$ by its deterministic equivalent.
\begin{align}
\mathcal{I}_k&= \sum_{\ell=1,\ell\neq k}^K|\bh_k^{\He}\bt_{\rZF,\ell}^{\DCSI}|^2\notag\\
&=\bh_k^{\He}\bT_{\rZF}^{\DCSI}  (\bT_{\rZF}^{\DCSI})^{\He}\bh_k-\bh_k^{\He}\bt_{\rZF,k}^{\DCSI}(\bt_{\rZF,k}^{\DCSI})^{\He}\bh_k\notag\\
&=\!\frac{1}{M^2}\!\sum_{j=1}^n\!\sum_{j'=1}^n\!\frac{\mu_j\mu_{j'}P}{\sqrt{\Psi^{(j)}\Psi^{(j')}}}\bh_k^{\He} \bE_j\!\bE_j^{\He}(\bC^{(j)})^{-1}(\hat{\bH}_{[k]}^{(j)})^{\He} \hat{\bH}_{[k]}^{(j')}(\bC^{(j')})^{-1}\bE_{j'}\bE_{j'}^{\He} \bh_k\notag\\
&\asymp \frac{P}{M^2} \sum_{j=1}^n\sum_{j'=1}^n\frac{\mu_j\mu_{j'}}{\sqrt{\Gamma^o_{j,j}(\bI_M)\Gamma^o_{j',j'}(\bI_M)}} \bh_k^{\He}\bE_j\bE_j^{\He}(\bC_{[k]}^{(j)})^{-1}\notag\\
&~~~~~~~~~~~~~~~~~~\cdot(\hat{\bH}_{[k]}^{(j)})^{\He}\hat{\bH}_{[k]}^{(j')}(\bC^{({j'})})^{-1} \bE_{j'}\bE_{j'}^{\He}\bh_k\notag\\
&+\frac{P}{M^2} \sum_{j=1}^n\sum_{j'=1}^n \frac{\mu_j\mu_{j'}}{\sqrt{\Gamma^o_{j,j}(\bI_M)\Gamma^o_{j',j'}(\bI_M)}}\bh_k^{\He} \bE_j\bE_j^{\He}\LB(\bC^{(j)})^{-1}-(\bC_{[k]}^{(j)})^{-1}\RB\notag\\
&~~~~~~~~~~~~~~~~~~\cdot(\hat{\bH}_{[k]}^{(j)})^{\He}\hat{\bH}_{[k]}^{(j')}(\bC^{({j'})})^{-1}\bE_{j'}\bE_{j'}^{\He} \bh_k.
\label{eq:main_proof_6}
\end{align}
To obtain a deterministic equivalent for the second summation in \eqref{eq:main_proof_6} we use the following relation
\begin{align}
&(\bC^{(j)})^{-1}-(\bC_{[k]}^{(j)})^{-1}\notag\\
\!=&(\bC^{(j)})^{-1} \LB \bC_{[k]}^{(j)}-\bC^{(j)} \RB (\bC_{[k]}^{(j)})^{-1}\notag\\
\!=&\!-\!\frac{(\bC^{(j)})^{-1}}{M}\!\LB\! c^{(j)}_{0,k} \bh_k \bh_k^{\He}\!+\!c^{(j)}_{1,k} \bm{\delta}_k^{(j)} (\bm{\delta}_k^{(j)})^{\He}\!+\!c^{(j)}_{2,k} \bm{\delta}_k^{(j)} \bh_k^{\He}\!+\!c^{(j)}_{2,k} \bh_k (\bm{\delta}_k^{(j)})^{\He}\!\RB\!(\bC_{[k]}^{(j)})^{-1},
\label{eq:main_proof_7}
\end{align}
where $c_{0,k}^{(j)}, c_{1,k}^{(j)}, c_{2,k}^{(j)}$ is defined in \eqref{definec}. It is important to note that
\begin{align*}
c^{(j)}_{0,k} c^{(j)}_{1,k}= (c^{(j)}_{2,k})^2,c^{(j)}_{0,k}+c^{(j)}_{1,k}=1,
\end{align*}
as these relations will be used several times through the proof.

Inserting \eqref{eq:main_proof_7} into \eqref{eq:main_proof_6}, the interference term can be denoted as

\begin{align}
\tilde{\mathcal{I}}_k	
&\asymp \frac{P}{M^2} \sum_{j=1}^n\sum_{j'=1}^n \frac{\mu_j\mu_{j'}}{\sqrt{\Gamma^o_{j,j}(\bI_M)\Gamma^o_{j',j'}(\bI_M)}}\bh_k^{\He}\bE_j\bE_j^{\He}(\bC_{[k]}^{(j)})^{-1}\notag\\
&~~~~~~~~~~~~~~~~~~\cdot(\hat{\bH}_{[k]}^{(j)})^{\He}\hat{\bH}_{[k]}^{(j')}(\bC^{({j'})})^{-1} \bE_{j'}\bE_{j'}^{\He}\bh_k\notag\\
&-\frac{P}{M^3} \sum_{j=1}^n\sum_{j'=1}^n \frac{\mu_j\mu_{j'}}{\sqrt{\Gamma^o_{j,j}(\bI_M)\Gamma^o_{j',j'}(\bI_M)}}\bh_k^{\He} \bE_j\bE_j^{\He}(\bC^{(j)})^{-1}\notag\\
&~~~~~~~~~~~~~~~~~~\cdot\LSB\!\bh_k c^{(j)}_{0,k} \bh_k^{\He}\!\RSB\!(\bC_{[k]}^{(j)})^{-1}    (\hat{\bH}_{[k]}^{(j)})^{\He}\hat{\bH}_{[k]}^{(j')}(\bC^{({j'})})^{-1}\bE_{j'}\bE_{j'}^{\He} \bh_k\notag\\
&-\frac{P}{M^3} \sum_{j=1}^n\sum_{j'=1}^n \frac{\mu_j\mu_{j'}}{\sqrt{\Gamma^o_{j,j}(\bI_M)\Gamma^o_{j',j'}(\bI_M)}}\bh_k^{\He} \bE_j\bE_j^{\He}  (\bC^{(j)})^{-1}\notag\\
&~~~~~~~~~~~~~~~~~~\cdot\LSB\!\bm{\delta}_k^{(j)} c^{(j)}_{1,k}(\bm{\delta}_k^{(j)})^{\He}\!\RSB\! (\bC_{[k]}^{(j)})^{-1}     (\hat{\bH}_{[k]}^{(j)})^{\He}\hat{\bH}_{[k]}^{(j')}(\bC^{({j'})})^{-1}\bE_{j'}\bE_{j'}^{\He} \bh_k\notag\\
&-\frac{P}{M^3} \sum_{j=1}^n\sum_{j'=1}^n \frac{\mu_j\mu_{j'}}{\sqrt{\Gamma^o_{j,j}(\bI_M)\Gamma^o_{j',j'}(\bI_M)}}\bh_k^{\He} \bE_j\bE_j^{\He}  (\bC^{(j)})^{-1}\notag\\
&~~~~~~~~~~~~~~~~~~\cdot\LSB   \bm{\delta}^{(j)}_k c^{(j)}_{2,k} \bh_k^{\He}     \RSB(\bC_{[k]}^{(j)})^{-1} (\hat{\bH}_{[k]}^{(j)})^{\He}\hat{\bH}_{[k]}^{(j')}(\bC^{({j'})})^{-1}\bE_{j'}\bE_{j'}^{\He} \bh_k\notag\\
&-\frac{P}{M^3} \sum_{j=1}^n\sum_{j'=1}^n \frac{\mu_j\mu_{j'}}{\sqrt{\Gamma^o_{j,j}(\bI_M)\Gamma^o_{j',j'}(\bI_M)}}\bh_k^{\He} \bE_j\bE_j^{\He}  (\bC^{(j)})^{-1} \notag\\
&~~~~~~~~~~~~~~~~~~\cdot\LSB \!\bh_k c^{(j)}_{2,k}(\bm{\delta}_k^{(j)})^{\He}\!\RSB\! (\bC_{[k]}^{(j)})^{-1} (\hat{\bH}_{[k]}^{(j)})^{\He}\hat{\bH}_{[k]}^{(j')}(\bC^{({j'})})^{-1}\bE_{j'}\bE_{j'}^{\He} \bh_k\notag\\
&= A+B+C+D+E.
\label{eq:main_proof_8}
\end{align}

We proceed by calculating terms $A$ to $E$ in \eqref{eq:main_proof_8} successively, using Lemma~\ref{lemma2}. For the sake of simplicity, we only proceed the calculation of term $A$ and the rest terms can be calculated in similar manner.

\begin{align}
A&=\frac{P}{M^2} \sum_{j=1}^n\sum_{j'=1}^n \frac{\mu_j\mu_{j'}}{\sqrt{\Gamma^o_{j,j}(\bI_M)\Gamma^o_{j',j'}(\bI_M)}} \bh_k^{\He}\bE_j\bE_j^{\He}(\bC_{[k]}^{(j)})^{-1}\cdot(\hat{\bH}_{[k]}^{(j)})^{\He}\hat{\bH}_{[k]}^{(j')}(\bC^{({j'})})^{-1} \bE_{j'}\bE_{j'}^{\He}\bh_k\notag\\
&\asymp P\sum_{j=1}^n\sum_{j'=1}^n  \frac{\mu_j\mu_{j'}}{\sqrt{\Gamma^o_{j,j}(\bI_M)\Gamma^o_{j',j'}(\bI_M)}}\cdot\left[\frac{\trace \LB \bE_{j'}\bE_{j'}^{\He} \bm{\Theta}_k  \bE_j\bE_j^{\He}(\bC^{(j)}_{[k]})^{-1}(\hat{\bH}_{[k]}^{(j)})^{\He}\hat{\bH}_{[k]}^{(j')}(\bC_{[k]}^{({j'})})^{-1} \RB}{M^2}\right.\notag\\
&-\!c_{0,k}^{(j')}\!\frac{\trace\!\LB\!\bm{\Theta}_k\! \bE_j\!\bE_j^{\He}\!(\bC^{(j)}_{[k]})^{-1}\!(\hat{\bH}_{[k]}^{(j)})^{\He}\!\hat{\bH}_{[k]}^{(j')}\!(\bC_{[k]}^{({j'})})^{-1}\!\RB}{M^2}\cdot\frac{\trace \!\LB\!\bm{\Theta}_k\!\bE_{j'}\!\bE_{j'}^{\He}\!(\bC_{[k]}^{({j'})})^{-1}\!\RB}{M}  \frac{1\!+\!c_{1,k}^{(j')} \frac{\trace\!\LB\!\bm{\Theta}_k\!  (\bC_{[k]}^{({j'})})^{-1}\!\RB}{M}}{1\!+\!\frac{\trace\!\LB\!\bm{\Theta}_k\!  (\bC_{[k]}^{({j'})})^{-1}\!\RB}{M}}\notag\\
&\left.\!+\!(c_{2,k}^{(j')})^2\!\frac{\trace\!\LB\!\bm{\Theta}_k\! \bE_j\!\bE_j^{\He}\!(\bC^{(j)}_{[k]})^{-1}\!(\hat{\bH}_{[k]}^{(j)})^{\He}\!\hat{\bH}_{[k]}^{(j')}\!(\bC_{[k]}^{({j'})})^{-1}\!\RB}{M^2}\!\cdot\!\frac{\trace\! \LB\!\bm{\Theta}_k\!\bE_{j'}\!\bE_{j'}^{\He}\!(\bC_{[k]}^{({j'})})^{-1}\!\RB}{M}\!\frac{\frac{\trace\! \LB\!\bm{\Theta}_k\!(\bC_{[k]}^{({j'})})^{-1}\!\RB}{M}}{1\!+\!\frac{\trace\!\LB\!\bm{\Theta}_k\! (\bC_{[k]}^{({j'})})^{-1}\!\RB}{M}}\!\right].
\label{eq:main_proof_9}
\end{align}
According to Lemma \ref{lemma_rank1} and Lemma \ref{lemma1}, we can have:
\begin{align}
&\frac{1}{M^2}\trace \LB \bm{\Theta}_k\bE_j\bE_j^{\He}(\bC^{(j)}_{[k]})^{-1}(\hat{\bH}_{[k]}^{(j)})^{\He}\hat{\bH}_{[k]}^{(j')}(\bC_{[k]}^{({j'})})^{-1}\RB
\asymp \Gamma^o_{j,j'}(\bm{\Theta}_k\bE_j\bE_j^{\He})\notag\\
&\frac{1}{M^2}\!\trace\! \LB\! \bE_{j'}\bE_{j'}^{\He}\! \bm{\Theta}_k  \bE_j\bE_j^{\He}\!(\bC^{(j)}_{[k]})^{-1}\!(\hat{\bH}_{[k]}^{(j)})^{\He}\!\hat{\bH}_{[k]}^{(j')}\!(\bC_{[k]}^{({j'})})^{-1}\!\RB\!
\asymp\!\Gamma^o_{j,j'}\!(\!\bE_{j'}\bE_{j'}^{\He}\bm{\Theta}_k\bE_j\bE_j^{\He}\!).
\label{eq:main_proof_10}
\end{align}
Inserting \eqref{eq:main_proof_10} in \eqref{eq:main_proof_9} and using the fundamental Theorem~\ref{fundamental_theorem} yields
\begin{align}
A&\asymp P\sum_{j=1}^n\sum_{j'=1}^n  \mu_j\mu_{j'} \frac{\Gamma^o_{j,j'}(\bE_{j'}\bE_{j'}^{\He}\bm{\Theta}_k\bE_j\bE_j^{\He})}{\sqrt{\Gamma^o_{j,j}(\bI_M)\Gamma^o_{j',j'}(\bI_M)}}\notag\\
&~~~~~~~~~~~~~-\mu_j\mu_{j'}c_{0,k}^{(j')}\frac{\Gamma^o_{j,j'}(\bm{\Theta}_k\bE_j\bE_j^{\He})}{\sqrt{\Gamma^o_{j,j}(\bI_M)\Gamma^o_{j',j'}(\bI_M)}}   \frac{\trace\LB\bm{\Theta}_k\bE_{j'}\bE_{j'}^{\He}\bQ_o^{(j')}\RB}{M}   \frac{1+c_{1,k}^{(j')}m^{(j')}_{k}}{1+m^{(j')}_{k}}\notag\\
&~~~~~~~~~~~~~+\mu_j\mu_{j'}(c_{2,k}^{(j')})^2\frac{\Gamma^o_{j,j'}(\bm{\Theta}_k\bE_j\bE_j^{\He})}{\sqrt{\Gamma^o_{j,j}(\bI_M)\Gamma^o_{j',j'}(\bI_M)}}  \frac{\trace\LB\bm{\Theta}_k\bE_{j'}\bE_{j'}^{\He}\bQ_o^{(j')}\RB}{M} \frac{m^{(j')}_{k}}{ 1+m^{(j')}_{k}}\notag\\
&\stackrel{(a)}{\asymp} P\sum_{j=1}^n\sum_{j'=1}^n  \mu_j\mu_{j'}\frac{\Gamma^o_{j,j'}(\bE_{j'}\bE_{j'}^{\He}\bm{\Theta}_k\bE_j\bE_j^{\He})}{\sqrt{\Gamma^o_{j,j}(\bI_M)\Gamma^o_{j',j'}(\bI_M)}}\notag\\
&~~~~~~~~~~~~~-\mu_j\mu_{j'}c_{0,k}^{(j')}\frac{\Gamma^o_{j,j'}(\bm{\Theta}_k\bE_j\bE_j^{\He})}{\sqrt{\Gamma^o_{j,j}(\bI_M)\Gamma^o_{j',j'}(\bI_M)}}   \frac{\trace\LB\bm{\Theta}_k\bE_{j'}\bE_{j'}^{\He}\bQ_o^{(j')}\RB}{M}\frac{1}{1+m^{(j')}_{k}},\notag
\end{align}
where equality $(a)$ follows from $c^{(j')}_{0,k} c^{(j')}_{1,k}= (c^{(j')}_{2,k})^2$.
Proceed similarly for the remaining~$4$ terms and add term A, B, C, D and E together, we can get

\begin{align}
\mathcal{I}_k&= A+B+C+D+E \notag\\
&=P\sum_{j=1}^n\sum_{j'=1}^n  \frac{\mu_j\mu_{j'}\Gamma^o_{j,j'}(\bE_{j'}\bE_{j'}^{\He}\bm{\Theta}_k\bE_j\bE_j^{\He})}{\sqrt{\Gamma^o_{j,j}(\bI_M)\Gamma^o_{j',j'}(\bI_M)}}\notag\\
&-2P\sum_{j=1}^n\sum_{j'=1}^n  \frac{\mu_j\mu_{j'}\Gamma^o_{j,j'}(\bm{\Theta}_k\bE_j\bE_j^{\He})}{\sqrt{\Gamma^o_{j,j}(\bI_M)\Gamma^o_{j',j'}(\bI_M)}}   \frac{\trace\LB\bm{\Theta}_k\bE_{j'}\bE_{j'}^{\He}\bQ_o^{(j')}\RB}{M}\frac{c_{0,k}^{(j')}}{1+m^{(j')}_{k}}\notag\\
&+P\sum_{j=1}^n\sum_{j'=1}^n \mu_j\mu_{j'} \frac{\frac{\trace\LB\bm{\Theta}_k\bE_{j'}\bE_{j'}^{\He}\bQ_o^{(j')}\RB}{M}\frac{\trace\LB\bm{\Theta}_k\bE_{j}\bE_{j}^{\He}\bQ_o^{(j)}\RB}{M}\Gamma^o_{j,j'}(\bm\Theta_k)}{\sqrt{\Gamma^o_{j,j}(\bI_M)\Gamma^o_{j',j'}(\bI_M)}}\notag\\
&~~~~~~~~~~~~~~\cdot\frac{c_{0,k}^{(j)}c_{0,k}^{(j')}+\rho_k^{(j,j')}c_{2,k}^{(j)}c_{2,k}^{(j')}}{(1+m_k^{(j)})(1+m_k^{(j')})}.\notag
\end{align}
This concludes the proof.
\FloatBarrier

\section{Proof of Power Allocation Theorem~\ref{iterationmethod}}\label{se:proof2}

Since $\bm\mu^{[t+1]}$ minimize the optimization problem in step $5$ in Algorithm \ref{iterativeCCP}, we can have
\begin{align*}
\sum_{k=1}^K\lambda_k^{[t]}\cdot \frac{\frac{1}{P}+(\bm\mu^{[t+1]})^{\Te}\bB_k\bm\mu^{[t+1]}}{\frac{1}{P}+(\bm\mu^{[t+1]})^{\Te}\LB\bA_k+\bB_k\RB\bm\mu^{[t+1]}}\leq \sum_{k=1}^K\lambda_k^{[t]}\cdot \frac{\frac{1}{P}+(\bm\mu^{[t]})^{\Te}\bB_k\bm\mu^{[t]}}{\frac{1}{P}+(\bm\mu^{[t]})^{\Te}\LB\bA_k+\bB_k\RB\bm\mu^{[t]}},
\end{align*}
Insert the expression for $\lambda_k^{[t]}$ in \eqref{expression_lambda}, use the notation for $u_k$ defined in \eqref{defu}, the above expression simplifies as
\begin{align*}
\sum_{k=1}^K\frac{u_k(\bm\mu^{[t+1]})}{u_k(\bm\mu^{[t]})}\leq K.
\end{align*}
According to AM-GM inequality
\begin{align*}
K\sqrt[K]{\frac{\prod_{k=1}^Ku_k(\bm\mu^{[t+1]})}{\prod_{k=1}^Ku_k(\bm\mu^{[t]})}}\leq\sum_{k=1}^K\frac{u_k(\bm\mu^{[t+1]})}{u_k(\bm\mu^{[t]})},
\end{align*}
we can obtain
\begin{align*}
\prod_{k=1}^Ku_k(\bm\mu^{[t+1]})\leq\prod_{k=1}^Ku_k(\bm\mu^{[t]}).
\end{align*}
This shows that the value $\prod_{k=1}^Ku_k(\bm\mu)$ decreases during the iteration for updating $\bm\mu$. Since the physical meaning for $\prod_{k=1}^Ku_k(\bm\mu)$ is the sequence product of the MSE at each RX and therefore $\prod_{k=1}^Ku_k(\bm\mu)>0$. According to monotone convergence theorem, the iterative algorithm will produce a decreasing and lower bound series of MSE sequence product while updating $\bm\mu$, therefore the iterative procedure is surely to converge to a local optimum. This completes the proof.


\bibliographystyle{IEEEtran}
\bibliography{Literature}
\end{document}